\providecommand{\tabularnewline}{\\}
\def\RSthmtxt{theorem~}\newref{thm}{name = \RSthmtxt}}
\def\RSlemtxt{lemma~}\newref{lem}{name = \RSlemtxt}}
\numberwithin{equation}{section}
\numberwithin{figure}{section}
  \theoremstyle{remark}
  \newtheorem{rem}{\protect\remarkname}[section]
  \theoremstyle{plain}
  \newtheorem{lem}{\protect\lemmaname}[section]
  \theoremstyle{plain}
  \newtheorem{prop}{\protect\propositionname}[section]
  \theoremstyle{definition}
  \newtheorem{defn}{\protect\definitionname}[section]
  \theoremstyle{plain}
  \newtheorem{assumption}{\protect\assumptionname}
  \theoremstyle{plain}
  \newtheorem{lyxalgorithm}{\protect\algorithmname}
  \theoremstyle{plain}
  \newtheorem{thm}{\protect\theoremname}[section]
\numberwithin{equation}{subsection}
\renewcommand{\[}{\begin{equation}}
\renewcommand{\]}{\end{equation}}
\providecommand{\assumptionname}{Assumption}
\providecommand{\definitionname}{Definition}
\providecommand{\lemmaname}{Lemma}
\providecommand{\remarkname}{Remark}
\providecommand{\theoremname}{Theorem}
  \providecommand{\algorithmname}{Algorithm}
  \providecommand{\assumptionname}{Assumption}
  \providecommand{\definitionname}{Definition}
  \providecommand{\lemmaname}{Lemma}
  \providecommand{\propositionname}{Proposition}
  \providecommand{\remarkname}{Remark}
\providecommand{\theoremname}{Theorem}
\begin{document}

\title{Intervention On Default Contagion Under Partial Information}

\author{Yang Xu}
\begin{abstract}
We model the default contagion process in a large heterogeneous financial network under the interventions of a regulator (a central bank) with only partial information which is a more realistic setting than most current literature. We provide the analytical results for the asymptotic optimal intervention policies and the asymptotic magnitude of default contagion in terms of the network characteristics. We extend the results of \citet{Amini2013} to incorporate interventions and the model of \citet{Amini2015,Amini2017} to heterogeneous networks with a given degree sequence and arbitrary initial equity levels. The insights from the results are that the optimal intervention policy is "monotonic"  in terms of the intervention cost, the closeness to invulnerability and connectivity. Moreover, we should keep intervening on a bank once we have intervened on it. Our simulation results show a good agreement with the theoretical results.
\end{abstract}

\keywords{random network process, configuration model, financial stability,
partial information, macroprudence}

\maketitle
\tableofcontents{}

Important formulations and programs
\begin{center}
\begin{tabular}{|c|c|}
\hline 
Formulations and programs & Page\tabularnewline
\hline 
\hline 
Asymptotic control problem (\ref{eq:ACP}) & \pageref{eq:ACP}\tabularnewline
\hline 
Optimal control problem (\ref{eq:NonAOCP}) & \pageref{eq:NonAOCP}\tabularnewline
\hline 
Optimization problem (\ref{eq:OptProgram}) & \pageref{eq:OptProgram}\tabularnewline
\hline 
\end{tabular}
\par\end{center}

Important index sets

$\Gamma\coloneqq\{(i,j,c,l):0\leq i,0\leq j,0\leq l<c\leq i\text{ or }c=i+1,l=i\},$

$\Phi\coloneqq\{(i,j,c,c-1):0\leq i,0\leq j,1\leq c\leq i\}.$

For some integer $M^{\epsilon}$, 

$\Gamma^{\epsilon}\coloneqq\{(i,j,c,l):i\lor j<M^{\epsilon},0\leq l<c\leq i\text{ or }c=i+1,l=i\},$

$\Phi^{\epsilon}\coloneqq\{(i,j,c,c-1):i\lor j<M^{\epsilon},0\leq c\leq i\}.$

\newpage{}

\part{Introduction}

\section{Introduction}

The systemic risk defined as the large scale defaults of financial
institutions in a financial network has been drawing more and more
interest of regulators and researchers in the recent decade, especially
after the Asian financial crsises in the late 1990's and the more
recent economic recession in 2008 and 2009. In a modern financial
system, financial institutions (hereafter, banks), connected through
lending and borrowing relations constitute a financial network. Due
to the intricate nature of the interconnectedness, the default of
some banks in the network may lead to losses of creditor banks through
interbank connections, which may in turn results in losses of their
creditors. This process goes on and creates risk at the system level. 

We model the contagion process in a financial network under a short
term illiquidity risk. In this setting no banks would want to lend
new loans to other banks meanwhile they are still obliged to pay back
their current loans which are due in the time frame of the model.
So we fix the in and out degrees of banks in the network, and set
up a probability space under which the financial network is generated
by a uniform matching of the in and out degrees (a configuration model).
A directed link in this model represents one unit of loan. In the
following we may use ``bank'' and ``node'' interchangeably. 

Before an external shock to the system, each node has a positive equity
level, which indicates the number of lost loans a node can withstand
due to the default of its debtors before it defaults. In other words,
it is the ``distance to default''. After an external shock, some
nodes in the system default initially and we set their equity level
to zero. We consider the default contagion in the following way. When
a node defaults, it defaults on all of its loan liabilities. We assume
a zero recovery rate of the loan, i.e. the creditor receives zero
value from the loan, which is the most realistic assumption for short
term default as suggested in \citet{Cont2010a} and \citet{Amini2013}.
But there is a time span between a node's default and the time its
creditor records the loan as a loss (by writing down the loan from
its balance sheet). We model this time span by independently, identically
exponentially distributed random variables. After the affected node
records the lost loan, it may request the regulator for interventions.
This corresponds to the central bank's provision of short term liquidity
to banks, including the traditional discount window, the Term Auction
Facility (TAF), etc, as well as the government's interventions by
bailing out the stressed banks. If the regulator decides to intervene
by infusing one unit of equity, the equity level of the affected node
will stay the same, otherwise its equity level will decrease by one.
Once the equity level reaches zero, the node defaults. We assume that
the once a bank has defaulted it cannot become liquid again within
the time horizon of the model because it is very unlikely for a bank
that has declared default to gain enough credits in a short term as
considered in the model. During the contagion process, the regulator
knows the default set (the set of defaulted nodes) and the set of
revealed out links from the default set, but other out links from
the default set are hidden until the affected nodes record the defaulted
loans.

\begin{figure}[H]
\centering{}\includegraphics[scale=0.7]{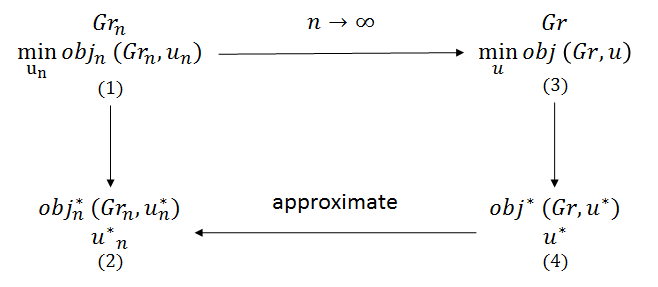}\caption{\label{fig:Methodology}The methodology: approximation of the finite
network with the infinite network}
\end{figure}

Our methodology is illustrated in \figref{Methodology}. The regulator's
goal is to minimize the number of final defaulted nodes with the minimum
number of interventions, so we obtain a stochastic control problem
$\min_{u_{n}}\text{obj}_{n}(Gr{}_{n},u_{n})$ where the objective
function depends on the graph $Gr_{n}$ and the intervention policy
$u_{n}$, shown in (1). We aim to solve it for the optimal intervention
policy $u_{n}^{*}$ and thus obtain the optimal objective function
value $\text{obj}_{n}(Gr_{n},u_{n}^{*})$, shown in (2). However,
solving the problem with the usual dynamic programming approach will
incur intractability problem because of the fast expansion of the
state space as in \citet{Amini2015,Amini2017}, not to mention the
heterogeneous network in our setting. We take an alternative approach
based on the fact that under some regularity conditions, the objective
function converges as $n\rightarrow\infty$. We solve the asymptotic
optimal control problem $\min_{u}\text{obj}(Gr,u)$ in (3) where $\text{obj}$,
$Gr$ and $u$ are the limit forms of $\text{obj}_{n}$, $Gr_{n}$
and $u_{n}$, respectively, and obtain the optimal policy $u^{*}$
and the objective function $\text{obj}(Gr,u^{*})$ in (4), then we
will be able to construct the optimal intervention policy $u_{n}^{*}$
for a finite $n$ through $u^{*}$ and approximate $\text{obj}_{n}(Gr_{n},u_{n}^{*})$
with $\text{obj}(Gr,u^{*})$. Our results of the numerical experiments
validate the approximation for networks with sizes close to the real
financial networks. 

\section{Relations to previous literature}

To understand the systemic risk the existing literature tries to relate
the systemic propagation of financial stress to various features of
the financial network and the banks within it, including the degree
of connectivity, the equity levels and so on. There are mainly two
types of literature: empirical and theoretical. The empirical studies
conduct statistical analysis on the interbank markets using data on
interbank lending and borrowing as far as they are available and provide
an overview of the structural characteristics of the interbank network
in different countries (\citet*{Furfine2003,Cont2010,Boss2004a}).
The theoretical studies model the financial network with network models
but differ in their assumptions of the network structure and approaches,
some focus on ``stylized'' networks whose structures are hypothetical
(\citet{Allen2000,May2010,Haldane2011,Gai2011}) while others reply
on simulations (\citet*{Nier2007a,Cont2010}). Among them, \citet{Amini2013}
and \citet{Eisenberg2001} propose random network models that allow
more realistic, heterogeneous structures. From the regulatory perspective,
there are mainly two shortcomings in the studies. While it is important
to understand the dependence of systemic risk on network features,
it is still unclear in the system risk literature on how the regulator
(such as the central bank) can do optimally to intervene on the contagion
of financial stress in the network. Moreover, the majority of studies
assume the regulator has complete information of the network which
does not align with the reality, as for example pointed out by \citet{Hurd2015}:
``\textit{Interbank exposure data are never publicly available, and
in many countries nonexistent even for central regulators''. }Although
there are some attempts for considering partial information in a financial
network (\citet{Song2015,Battiston2012,Amini2015}), they either focus
on simplified networks or lack analytical results of the regulator's
optimal strategies.

In particular, our work is an extension of the papers \citet{Amini2013},
\citet{Amini2015} and \citet{Amini2017}. \citet{Amini2013} model
a financial network as a weighted directed graph where the weight
represents the amount of loan liabilities between two banks. Then
they show that under some conditions a random weighted directed graph
has the same law as a unweighted directed graph with a default threshold
on each node and propose a sequential construction algorithm to model
the development of the default set. The default contagion model in
our work extends the construction algorithm to incorporate interventions.
Thus if there are no interventions, the asymptotic fraction of defaults
will be the same as in \citet{Amini2013}.

\citet{Amini2015,Amini2017} consider a stylized core-periphery financial
network where the core consists of identical nodes, i.e. each core
node having the same in and out degrees and initial equity level if
it is liquid or zero otherwise, and each periphery node having in
degree of one and one initial equity. We consider a heterogeneous
network where the nodes have arbitrary in and out degrees and initial
equity levels. We adopted the dynamics of their model which constructs
the default set with interventions through a configuration model due
to the fact that the configuration model can be adapted to modeling
the contagion process as described in \citet{VanderHofstad2014}. 

\citet{Amini2015} solve a stochastic control problem for a finite
network in which the regulator seeks to maximize under budget constraints
the value of the financial system defined as the total number of projects.
\citet{Amini2017} also deals with a finite network with the objective
to minimize under some budget the expected loss at the end of the
cascade. They take the pair (remaining equity level of banks, the
number of interventions on banks with the same remaining equity) as
the system state and apply the dynamic programming approach. The limitation
of the approach is that the state space expands very fast as the type
of banks and the total number of interventions increase, rendering
the problem intractable. In \citet{Amini2017} they present an example
of a network of $20$ core banks, each having initial equity level
$3$ and $20$ periphery banks and total $12$ interventions, the
state space is about $10^{9}$. Motivated by the intention to overcome
the limitations, we solve an optimal control problem to approximate
the solution of a stochastic control problem in which the regulator
seeks to minimize the number of final defaults with the minimum number
of interventions, and give analytical formulations for the optimal
intervention policies as well as the asymptotic number of interventions
and final defaulted banks. The asymptotic results provide a good approximation
to real financial networks, which are heterogeneous and have several
hundreds of banks thanks to the fast convergence behavior of our results.
We have run numerical experiments and validated the convergence behavior. 

\section{Contributions}

The main contribution of our work is that we derive rigorous asymptotic
results of the optimal strategy for the regulator and the fraction
of final defaults for a heterogeneous network with a given degree
sequence and arbitrary initial equity levels. The analytical expressions
are expressed in terms of measurable features of the network. For
convergence of the results we assume the network is sparse which are
supported by the empirical studies of real financial networks (e.g.
\citet{Furfine1999,Bech2010}) and used by some theoretical studies
(e.g. \citet{Amini2013}) .

The key insight of our findings of the optimal strategies is that
we should only consider intervention on a bank when it is very close
to default, so we never intervene on an invulnerable bank. The optimal
intervention policy depends strongly on the intervention cost. The
smaller intervention cost, the more interventions are implemented.
Moreover, the optimal intervention policy is ``monotonic'' with
respect to the measurable features of the network: We should not intervene
on the banks with out degrees in a certain range regardless of their
other features. For those banks worth interventions, the larger the
sum of initial equity and accumulative interventions received, the
earlier we should begin intervening if they are affected. Moreover,
the time to start intervention on a node is also ``monotonic'' in
its in and out degree. Once we begin intervening on a node we keep
intervening on it every time it is affected. By comparing the fractions
of final defaults under no interventions and the optimal intervention
policy, we are able to quantify the improvement made by interventions
in terms of the network features. This gives guidance as the maximum
impact the regulator can have to offset the contagion of defaults. 

The paper is organized as follows. We set up the model and introduce
the stochastic control problem (SCP) in \partref{Model-Description}.
In \partref{The-Asymptotic-Control} we formulate the asymptotic control
problem (ACP) that gives the limit for the objective function of the
SCP as the size of the network goes to infinity and present the necessary
conditions for the optimal intervention policy, which lead to the
main theorems. At last we show the results of the numerical experiments
to validate the approximation of ACP to SCP. 

\part{\label{part:Model-Description}Model Description and Dynamics}

\section{Basic setup\label{sec:Basic-setup}}

We model a financial network with prescribed degree sequence $(d^{-}(v),d^{+}(v))_{v\in[n]}$
as an unweighted directed network $([n],\mathcal{E}_{n})$, where
$[n]=\{1,\ldots,n\}$ is the set of nodes, $\mathcal{E}_{n}$ denotes
the set of links, and $m=\sum_{v\in[n]}d^{-}(v)=\sum_{v\in[n]}d^{+}(v)$.
We understand the prescribed degree sequence $(d^{-}(v),d^{+}(v))_{v\in[n]}$
as a set of random variables living in a probability space then we
set up the following model conditioning on it. A directed link $(v,w)\in\mathcal{E}_{n}$
represents $v$ borrows a unit of loan from $w$, i.e. $v$ is obliged
to repay $w$ one unit of loan. We allow multiple loans to exist between
two nodes. $\mathcal{E}_{n}$ needs to satisfy that
\begin{align*}
d^{-}(v) & =|\{(w,v):(w,v)\in\mathcal{E}_{n}\}|,\\
d^{+}(v) & =|\{(v,w):(v,w)\in\mathcal{E}_{n}\}|,
\end{align*}
where $\left|A\right|$ represents the cardinality of the set $A$.
Now we set up a probability space $(G_{n,m},\mathbb{P})$ where $G_{n,m}$
is the set of networks on $n$ nodes with at most $m$ directed links.
Recall $m$ is total in or out degree of the network. So the random
financial network lives in this probability space and under $\mathbb{P}$
the law of the random link set $\mathcal{E}_{n}$ is determined as
follows. We start with $n$ unconnected nodes and assign node $v$
with $d^{-}(v)$ in half links and $d^{+}(v)$ out half links. An
in half link represents an offer of a loan and an out half link a
demand for a loan. Then the $m$ in half links and $m$ out half links
are matched uniformly so that the borrowers and lenders are determined.
The resulting random network is called the \textit{configuration model}. 
\begin{rem}
\label{rem:ConfigModel}The uniform matching of the in and out half
links allows us to construct the random network sequentially: at every
step we can choose any out half link by any rule and choose the other
in half link uniformly over all available in half links to form a
directed link. This is because conditional on any set of observed
matched links, the hidden matched links also follow the uniform distribution.
Moreover, the conditional law of hidden matched links only depends
on the number of the observed matched links, not the matching history.
Additionally we can restrict the matching to out half links from the
defaulted nodes so that we can model the development of the set of
defaulted nodes with their revealed out links.

Then we endow a node $v\in[n]$ with its initial equity level $e_{0}^{v}\in\mathbb{N}_{0}$
which represents the number of lost loans $v$ can tolerate until
$v$ defaults, so it is the ``distance to default''. Next after
the system receives some external shock, some nodes default and the
system begins to evolve. Define time $0$ as the time right after
the shock. Let $(\mathcal{G}_{k})_{0\leq k\leq m}$ be the filtration
for the probability space $(G_{n,m},\mathbb{P})$ which models the
arrival of new information, i.e. the revealed link at each step. Because
this implies that the revealed node will have its remaining equity
decrease by one, $(\mathcal{G}_{k})_{0\leq k\leq m}$ also models
the default contagion at the same time. Note in the following the
network with the set of revealed links evolves in the space $G_{n,m}$
as the result of the contagion process.
\end{rem}

\subsection{Initial condition}

From the initial equity levels $(e_{0}^{v})_{v\in[n]}$ we can determine
the initial default set $\mathcal{D}_{0}=\{v:\ e_{0}^{v}=0\}$. Let
the set of hidden out links from the nodes in $\mathcal{D}_{0}$ be
$Q_{0}=\{(i,j)\in\mathcal{E}_{n}:i\in\mathcal{D}_{0}\}$. All the
hidden links in $Q_{0}$ are assigned a clock which rings after a
random time following an independent exponential distribution with
mean $1$, i.e. $\exp(1)$. Let the sigma-algebra representing the
information available initially be $\mathcal{G}_{0}=\sigma\{(e_{0}^{v})_{v\in[n]}\}$.
Let $c_{k}^{v}$ be the sum of initial equity and accumulative number
of interventions on node $v$ and $l_{k}^{v}$ be the number of revealed
in links of node $v$ at step $k$, so $c_{0}^{v}=e_{0}^{v}$, $l_{0}^{v}=0$. 

\subsection{Dynamics\label{subsec:Dynamics}}

Let $k$ be the $k$th event that a clock rings. If $Q_{k-1}$ is
nonempty, let $(V_{k},W_{k})$ be a pair of random variables representing
the hidden link from node $V_{k}$ to node $W_{k}$ whose clock rings
first at step $k$, which means that the node $W_{k}$ records the
loss of loan due to the default of $V_{k}$. We call that $(V_{k},W_{k})$
is \textit{revealed} and $W_{k}$ is \textit{selected}. Assume $(V_{k},W_{k})=(v,w)$.
We proceed with the following steps:
\begin{itemize}
\item Update $\mathcal{G}_{k}=\sigma\left(\mbox{\ensuremath{\mathcal{G}}}_{k-1}\cup\left\{ (v,w)\right\} \right)$. 
\item Update the number of revealed out links: $l_{k}^{w}=l_{k-1}^{w}+1$
and $l_{k}^{\eta}=l_{k-1}^{\eta}$ for $\eta\neq w$.
\item Determine the intervention $\mu_{k}\in\{0,1\}$ $\mathcal{G}_{k}$
measurable at step $k$ for the selected node $w$.
\item Update $c_{k}^{w}=c_{k-1}^{w}+\mu_{k}^{w}$, otherwise $c_{k}^{\eta}=c_{k-1}^{\eta}$
for $\eta\neq w$. 
\item Update the default set. Note that $c_{k}^{\eta}\leq l_{k}^{\eta}$
indicates that the node $\eta$ has defaulted by step $k$. If $c_{k}^{w}\leq l_{k}^{w}$
and $w\notin\mathcal{D}_{k-1}$, then $\mathcal{D}_{k}=\mathcal{D}_{k-1}\cup\{w\}$
and $Q_{k}=Q_{k-1}\backslash\{(v,w)\}\cup\{(w,\eta)\in\mathcal{E}_{n}\}$
and every newly added hidden link in $\{(w,\eta)\in\mathcal{E}_{n}\}$
is assigned a clock with law $\exp(1)$, independent of everything
else. If $c_{k}^{w}>l_{k}^{w}$, $\mathcal{D}_{k}=\mathcal{D}_{k-1}$
and $Q_{k}=Q_{k-1}\backslash\{(v,w)\}$.
\end{itemize}
If $Q_{k}$ is empty, the process ends and let the process end time
be $T_{n}=k$, otherwise repeat the process. Define $D_{n}$ as the
number of defaulted nodes by the process end time $T_{n}$.
\begin{lem}
\label{lem:SelectedNodeLaw}(Adapted from Lemma 3.2 in \citet{Amini2015})
For $0\leq k\leq T_{n}-1$, the selected node $W_{k+1}$ which is
at the end of the revealed link $(V_{k+1},W_{k+1})$ from the set
$Q_{k}$ has the probability conditional on the sigma-algebra $\mathcal{G}_{k}$
that
\begin{equation}
\mathbb{P}(W_{k+1}=w\mid\mathcal{G}_{k})=\frac{d^{-}(w)-l_{k}^{w}}{m-k}\text{ for }w\in[n].\label{eq:SelectedNodeLaw}
\end{equation}
\end{lem}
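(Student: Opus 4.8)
The plan is to compute the conditional law of the selected node $W_{k+1}$ by exploiting the sequential-construction property of the configuration model emphasized in \remref{ConfigModel}. The key observation is that at step $k$ we have revealed exactly $k$ matched links (one per ring of a clock), so there remain $m-k$ unmatched in half links in the network. The node $W_{k+1}$ is, by construction, the in-endpoint of the next hidden link whose clock rings; but the \emph{identity} of that in-endpoint is determined purely by the uniform matching, independently of which clock rings first, because the clocks are i.i.d.\ $\exp(1)$ and hence the ordering of the rings is independent of the matching. So conditional on $\mathcal{G}_k$, the in half link that gets matched next is uniformly distributed over the $m-k$ in half links that are still unmatched.

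First I would make precise the bookkeeping: among the $d^{-}(w)$ in half links attached to node $w$, exactly $l_k^{w}$ of them have already been matched and revealed by step $k$ (this is the definition of $l_k^{w}$), so $d^{-}(w)-l_k^{w}$ of them remain available. Summing over $w$ gives $\sum_{w\in[n]}(d^{-}(w)-l_k^{w}) = m - k$, consistent with the count of remaining in half links. Then, since the next in half link to be matched is uniform over the $m-k$ available ones, the probability that it belongs to node $w$ is $\frac{d^{-}(w)-l_k^{w}}{m-k}$, which is exactly \eqref{SelectedNodeLaw}.

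The one point requiring care — and the main obstacle — is justifying the independence between ``which clock rings next'' and ``which in half link the ringing link is matched to.'' This needs the fact that the matching of out half links from defaulted nodes to in half links can be revealed in \emph{any} order without changing the conditional law (the statement in \remref{ConfigModel} that conditional on the observed matched links the hidden ones are still uniformly matched, and that the conditional law depends only on the number of observed links). Concretely, I would argue that we may first decide, uniformly and independently of the clocks, which in half link each not-yet-revealed out half link in $Q_k$ is paired with, and only then read off the clock times; the memorylessness of $\exp(1)$ and the i.i.d.\ structure guarantee that the first clock to ring picks out an out half link in $Q_k$ uniformly at random, and its partner in half link is then uniform over the remaining $m-k$ in half links. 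Since this is adapted essentially verbatim from Lemma 3.2 of \citet{Amini2015}, I would cite that argument for the formal measure-theoretic details and simply indicate how the degree sequence and the revealed-link counts $l_k^{w}$ enter in our heterogeneous setting, replacing their type-based counts by the node-level quantities $d^{-}(w)$ and $l_k^{w}$.
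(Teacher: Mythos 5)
Your proposal is correct and follows essentially the same route as the paper's proof: both use the memorylessness of the i.i.d.\ $\exp(1)$ clocks to argue the next revealed link is chosen independently of the matching, invoke the sequential-construction/uniform-matching property of \remref{ConfigModel} to conclude the partner in half link is uniform over the unmatched ones, and count $d^{-}(w)-l_{k}^{w}$ available in half links at $w$ against $m-k$ available in total. The only difference is that you spell out the independence of ``which clock rings'' from ``which in half link it is matched to'' a bit more explicitly (deferring formal details to \citet{Amini2015}), which the paper's proof handles in the same spirit but more briefly.
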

\begin{proof}
Because $0\leq k\leq T_{n}-1$, $Q_{k}\neq\emptyset$ by definition.
At step $k$ because the clocks on the hidden links follow independently
and identically distributed exponential distribution which has the
memoryless property, a link is chosen uniformly among all the hidden
links and revealed. By \remref{ConfigModel}, the conditional law
of the identity of the selected node is given by the uniform matching
to the available in half links when the network is constructed sequentially.
Note in (\ref{eq:SelectedNodeLaw}) the numerator $d^{-}(w)-l_{k}^{w}$
is the total number of available in half links of node $w$. Because
at every step an in half link is connected and there are $k$ steps,
so the denominator $m-k$ is the total number of available in half
links at step $k$. In sum, a node $w$ is selected at step $k$ with
a probability proportional to the number of its available in half
links (hidden in links).
\end{proof}
Define $(c_{k}^{v},l_{k}^{v},v\in[n])$ as the state of the system
at step $k$. Given $(c_{k}^{v},l_{k}^{v},v\in[n])$, the selected
node $W_{k+1}$ and the intervention $\mu_{k+1}$ at step $k+1$,
the state of the system at $k+1$ is determined. By \lemref{SelectedNodeLaw},
the law of $W_{k+1}$ depends on $(l_{k}^{v})_{v\in[n]}$. Moreover,
the intervention $\mu_{k}$ is adapted to $\mathcal{G}_{k}$ which
can be generated by the history of the process $(c_{k}^{v},l_{k}^{v},v\in[n])_{0\leq k\leq m}$.
The objective function we introduce later is expressed in terms of
current state variable, so the system is Markovian in the state variable.
\begin{rem}
From the description of the dynamics, we obtain a continuous time
model because of the time span between a node defaulting and its creditors
recording the loss of the loan. If we only look at the event every
time a link is revealed (corresponding to a clock ringing), we obtain
the embedded discrete time Markov chain. The state of the discrete
time process at step $k$ corresponds to the state of the continuous
time process after the $k$th clock rings. Note that although the
regulator can intervene at any time, it suffices to intervene only
at the event of a link being revealed. As a result the state of the
system in continuous time does not change between the events. Since
the objective function depends on the state of the system by the end
of the contagion process, not on time, it suffices to work with the
discrete time Markov chain.
\end{rem}
Let $IT_{k}$ be the accumulative number of interventions by step
$k$ and $D_{k}=|\mathcal{D}_{k}|$ be the number of defaults at step
$k$. Particularly define $IT_{n}\coloneqq IT_{T_{n}}$ and $D_{n}\coloneqq D_{T_{n}}$.
The regulator aims to minimize the number of defaulted nodes by $T_{n}$
with the minimum amount of interventions, so we define the objective
function as a linear combination of the (scaled) number of interventions
and defaults by the end of the process $T_{n}$ as
\begin{eqnarray}
J_{n} & = & \mathbb{E}(K\frac{IT_{n}}{n}+\frac{D_{n}}{n}\mid\mathcal{G}_{0}),
\end{eqnarray}
where $K>0$ is the relative ``cost'' of an intervention. Further
by the definition of $c_{T_{n}}^{v}$ and noting that a node defaults
at last if $c_{T_{n}}^{v}\leq l_{T_{n}}^{v}$, i.e. the number of
lost loans exceeds the total of the initial equity level and the number
of interventions received by $T_{n}$, we can express $IT_{n}$ and
$D_{n}$ as
\begin{eqnarray}
IT_{n} & = & \sum_{v\in[n]}(c_{T_{n}}^{v}-e_{0}^{v}),\nonumber \\
D_{n} & = & \sum_{v\in[n]}\mathbbm1_{(c_{T_{n}}^{v}\leq l_{T_{n}}^{v})}.\nonumber \\
\label{eq:IT_D_1stTime}
\end{eqnarray}
Now we define the stochastic optimal control problem as
\begin{equation}
\tag{SCP}\min_{\mu\in\mathbb{U}}J_{n},\label{eq:Obj_SCP}
\end{equation}
where $\mu=(\mu_{k})_{1\leq k\leq m}$, $\mu_{k}\in\{0,1\}$ and $\mathbb{U}$
contains all $(\mathcal{G}_{k})_{0\leq k\leq m}$ adapted process
$\mu$. 

\part{\label{part:The-Asymptotic-Control}The Asymptotic Control Problem}

\section{Assumptions and definitions}

By the dynamics of the model we may only intervene on the node selected
at each step. Moreover, a bank cannot become liquid again once it
has defaulted, thus we cannot save defaulted banks. This assumption
is reasonable in the setting of default contagion in a stressed network.
Nor do we intervene on invulnerable nodes, because they never default
but intervening on them will only prevent us from saving the banks
that are very close to default if the interventions are costly. 

To begin our discussion about the default contagion process with interventions,
we will show first that even if the regulator is able to intervene
on multiple nodes and apply more than one unit of credit every time,
it will not be better. 
\begin{prop}
\label{prop:OptimalControlCandidates}For the stochastic control problem
(\ref{eq:Obj_SCP}), we only consider intervening on a node that,
when selected, has only one unit of equity remaining.
\end{prop}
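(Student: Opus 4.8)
The plan is to show that any admissible intervention policy $\mu$ can be modified into a policy $\tilde\mu$ that only intervenes on nodes with exactly one unit of remaining equity, without increasing the objective $J_n$. First I would formalize ``remaining equity'': at step $k$, when node $w$ is selected, its remaining equity just before the intervention decision is $c_{k}^{w}-l_{k}^{w}$ (it has just had $l^w$ incremented, so this is the margin to default). The claim is that if $c_{k}^{w}-l_{k}^{w}\geq 2$ and the policy sets $\mu_k^w=1$, we can instead set $\mu_k^w=0$ and do at least as well. The key observation is a monotonicity/coupling fact: the selection law in \lemref{SelectedNodeLaw} depends only on the revealed-in-link counts $(l_k^v)$, never on the $(c_k^v)$, so deferring or cancelling an intervention does not change the distribution of which nodes get selected or when — only the default indicators and the intervention count are affected.

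The main step is a coupling argument. Fix a policy $\mu$ that at some step intervenes on a node $w$ with remaining equity $\geq 2$; let $k_0$ be the \emph{first} such step. Construct $\tilde\mu$ that agrees with $\mu$ up to step $k_0$, sets $\tilde\mu_{k_0}^{w}=0$, and thereafter ``shadows'' $\mu$: since the selection process is driven by the same exponential clocks and the same $(l_k^v)$ dynamics (which are unaffected as long as $w$ has not defaulted under either policy — and $w$ has strictly more equity under $\mu$, so it cannot default earlier under $\mu$ than the step at which it would default under $\tilde\mu$), the two processes see the same sequence of revealed links until the first later step $k_1$ at which $w$ is selected again. At $k_1$, under $\tilde\mu$ node $w$ has remaining equity exactly one less than under $\mu$; we let $\tilde\mu$ intervene on $w$ at $k_1$ (spending there the unit it saved at $k_0$). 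After $k_1$ the two equity profiles coincide again, so the processes recouple and proceed identically. Iterating this ``push the intervention forward to the next time the node is touched'' operation, and noting each iteration strictly decreases the number of steps at which an intervention is applied to a node with remaining equity $\geq 2$, terminates in a policy where every intervention is applied at remaining equity one. Along the coupling: $IT_n$ is unchanged (we moved a unit, never added or deleted one — unless $w$ never gets selected again, in which case the deferred unit is simply never spent and $IT_n$ strictly \emph{decreases} while no default is created, since $w$ with remaining equity $\geq1$ and no further hits cannot default), and $D_n$ is unchanged because a node with remaining equity $\geq 2$ that is hit once without intervention still has remaining equity $\geq 1 > 0$, hence does not default at that step, and every subsequent relevant step is handled identically by the recoupling. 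Hence $J_n(\tilde\mu)\leq J_n(\mu)$.

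I would also separately dispose of the ``more than one unit per step'' generalization mentioned in the statement: infusing $j\geq 2$ units of equity to a node at one step is dominated by infusing them one at a time at the successive steps the node is selected, by the identical forward-pushing argument — any unit not eventually ``needed'' to prevent a default is pure waste and can be dropped, strictly lowering $IT_n$.

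The measurability/admissibility bookkeeping is the part that needs care: one must check that $\tilde\mu$ is genuinely $(\mathcal{G}_k)$-adapted. This is where I expect the only real friction. The cleanest route is to define $\tilde\mu$ by the explicit recipe above rather than via an abstract coupling — at each step $\tilde\mu$ looks at the realized history (which, by the recoupling, is a deterministic function of $\mu$'s history up to a relabeling of one node's equity, and the equity values are themselves $\mathcal{G}_k$-measurable since the $c_k^v$ are generated by the process), decides whether it is in a ``deferred intervention owed to $w$'' state, and acts accordingly; one verifies by induction on $k$ that each $\tilde\mu_k$ is a function of $\mathcal{G}_k$. Granting that, the objective comparison is immediate from $IT_n(\tilde\mu)\leq IT_n(\mu)$ and $D_n(\tilde\mu)=D_n(\mu)$ pointwise on the coupled space, and taking expectations conditional on $\mathcal{G}_0$ gives $J_n(\tilde\mu)\leq J_n(\mu)$, which is the claim.
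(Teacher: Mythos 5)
Your overall strategy is the same as the paper's: modify the policy pathwise, use the fact (\lemref{SelectedNodeLaw}) that the selection law depends only on the revealed in-link counts and not on the interventions so the sequence of selected nodes can be held fixed, defer interventions to times when the selected node has one unit of remaining equity, and compare $IT_n$ and $D_n$ pathwise before taking expectations. However, your one-unit-at-a-time deferral has a genuine gap at the recoupling step. You claim that after $\tilde\mu$ spends the deferred unit at $k_1$ (the next selection of $w$) ``the two equity profiles coincide again''. This fails precisely when $\mu$ also intervenes on $w$ at $k_1$, which is possible since under $\mu$ the margin of $w$ at $k_1$ is still at least $2$, so nothing prevents $\mu_{k_1}^{w}=1$. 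In that case $\tilde\mu$ (which spends its single deferred unit and then copies $\mu$) remains permanently one unit of buffer behind on $w$; if $w$ is later selected at margin $1$ under $\tilde\mu$ at a step where $\mu$ does not intervene, then $w$ defaults under $\tilde\mu$ but not under $\mu$, its out links enter the hidden-link set, the two processes decouple, and the asserted equality $D_n(\tilde\mu)=D_n(\mu)$ is no longer justified. The repair is to carry the debt forward rather than discharge it at the very next selection: intervene at margin one whenever $w$ is selected, as long as the cumulative interventions on $w$ are still below $\mu$'s final total. That is exactly the paper's one-shot construction of $\tilde\mu$ (the condition $\tilde{c}_{k}^{v}-l_{k}^{v}=1$, $w_{k+1}=v$ and $\tilde{c}_{k}<c_{t}$ in (\ref{eq:utilde})), after which no iteration is needed and the three terminal cases ($v$ liquid with margin $>1$, liquid with margin $=1$, defaulted) are compared directly.

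A secondary flaw: your termination measure is misstated. Pushing the unit from $k_0$ to $k_1$ may land it again at a step where the remaining equity under $\tilde\mu$ is at least $2$ (whenever the margin at $k_0$ was at least $3$), so the number of offending interventions need not strictly decrease; you would need, e.g., that the earliest offending step moves strictly later together with the finite horizon $T_n\leq m$, or simply the paper's global construction which avoids iteration altogether. Also, the proposition excludes intervening on a node that is not the currently selected one; you treat the multi-unit case but not this case explicitly, though the same deferral argument covers it. The measurability bookkeeping and the passage from the pathwise inequality to $J_n$ are fine and agree with the paper.
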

\begin{proof}
We give a proof in words similar to the proof of proposition 3.4 in
\citet{Amini2015} for a different objective function of optimizing
the value of the financial system at the end of the process under
some budget constraint. We observe that the objective function $J_{n}$
depends on the set of defaulted nodes only through its cardinality.
Any node will affect the states of other nodes only after it defaults
because the set of unrevealed out links of the defaulted nodes determining
the contagion process grows only after a node defaults. And it is
possible for a default to occur only when a node has one unit of equity
(distance to default equal to one) at the time of being selected.
Before that time, the equity only decreases by one every time it is
selected. Moreover, there is always a chance to intervene on a node
before it defaults. However, if we intervene on a node that is not
selected at the current step or has more than one units of remaining
equity when selected, it is possible that the node may not be selected
in the following steps before the process ends in which case we implemented
redundant interventions without reducing the number of defaults.

Then we provide a mathematical proof. Let $w_{k}$ be the node selected
at step $k$ and $w\coloneqq(w_{k})_{k\in[1,m]}$ be a realization
of the sequence of selected nodes throughout the process. Consider
a control sequence $\mu\coloneqq(\mu_{k})_{k\in[1,m]}$ that for some
$v\in[n]$ and some $k_{0}\in[1,m]$, $\mu_{k_{0}}^{v}\geq1$ when
$v\neq w_{k_{0}}$, or $v=w_{k_{0}}$ but $c_{k_{0}}^{v}-l_{k_{0}}^{v}\geq2$.
Recall that $c_{k}^{v}-l_{k}^{v}$ denotes the remaining equity or
``distance to default'' of node $v$ at $k$. Let $t$ be the realization
of the terminal time $T_{n}$ under the control sequence $\mu$. Given
the initial condition $(d^{-}(v),d^{+}(v),e_{0}^{v})_{v\in[n]}$,
$w$ and $\mu$, $c^{v}\coloneqq(c_{k}^{v})_{k\in[1,m]}$ and $l^{v}\coloneqq(l_{k}^{v})_{k\in[1,m]}$
for $v\in[n]$ are determined.

Construct another control sequence $\tilde{\mu}\coloneqq(\tilde{\mu}_{k})_{k\in[1,m]}$
for the same initial condition $(d^{-}(v),d^{+}(v),e_{0}^{v})_{v\in[n]}$,
which satisfies that 
\begin{enumerate}
\item $\tilde{\mu}_{k}^{\eta}=\mu_{k}^{\eta}$ for $\eta\neq v$ and $k\in[1,m]$. 
\item Let $\tilde{c}^{v}\coloneqq\left(\tilde{c}_{k}^{v}\right)_{k\in[1,m]}$
correspond to $\tilde{\mu}$ and $w$ for node $v$, then 
\begin{equation}
\tilde{\mu}_{k+1}^{v}=\begin{cases}
1 & \text{if }\tilde{c}_{k}^{v}-l_{k}^{v}=1,w_{k+1}=v\text{ and }\tilde{c}_{k}<c_{t},\;\forall k=0,\dots t-1,\\
0 & \text{otherwise}
\end{cases}\label{eq:utilde}
\end{equation}
\end{enumerate}
In other words, $\mu$ and $\tilde{\mu}$ are the same except that
interventions are not applied to node $v$ until $v$ has the distance
to default of one when selected. By (\ref{eq:utilde}), $\tilde{c}_{t}^{v}\leq c_{t}^{v}$.
Let $D_{t}$ and $\tilde{D}_{t}$ be the number of final defaulted
nodes by $t$ under $\mu$ and $\tilde{\mu}$, respectively, then
the following are the possible cases: 
\begin{enumerate}
\item $c_{k}^{v}-l_{k}^{v}\geq1$ $\forall k\in[1,t]$ and $c_{t}^{v}-l_{t}^{v}>1$,
then $v$ is liquid under both policies at $t$, thus $D_{t}=\tilde{D}_{t}$,
but $c_{t}>\tilde{c}_{t}$. 
\item $c_{k}^{v}-l_{k}^{v}\geq1$ $\forall k\in[1,t]$ and $c_{t}^{v}-l_{t}^{v}=1$,
then $v$ is liquid under both policies at $t$, so $D_{t}=\tilde{D}_{t}$
and $c_{t}=\tilde{c}_{t}$. 
\item $c_{t}^{v}-l_{t}^{v}\leq0$, then $v$ defaults under both policies,
so $D_{t}=\tilde{D}_{t}$ and $c_{t}=\tilde{c}_{t}$. 
\end{enumerate}
For every case we have 
\begin{eqnarray}
K\frac{1}{n}\sum_{w\in[n]}(\tilde{c}_{t}^{w}-e_{0}^{w})+\frac{\tilde{D}}{n} & \leq & K\frac{1}{n}\sum_{w\in[n]}(c_{t}^{w}-e_{0}^{w})+\frac{D}{n}\label{eq:ObjCompare}
\end{eqnarray}
with strict inequality for some cases. Note that $\mu$ and $\tilde{\mu}$
do not change the probability of $w$ and $w$ is arbitrary, so (\ref{eq:ObjCompare})
holds in expectation, i.e. 
\[
\tilde{J}_{n}<J_{n}.
\]
Thus $\mu$ cannot be an optimal control sequence.
\end{proof}
We see \propref{OptimalControlCandidates} implies that it is never
optimal to intervene on a node if it is not selected or has more than
one unit of equity remaining when selected. Let $(i,j,c,l)$ be the
state of a node, meaning it has the in and out degree $(i,j)$, sum
of the initial equity and the number of interventions $c$ and $l$
revealed in links. Note that by definition $l\leq i$. We characterize
nodes with states because nodes with the same state have the same
probability of being selected at each step and are statistically the
same in influencing other nodes. Note in particular:
\begin{enumerate}
\item $c=0$ denotes that the node has defaulted initially.
\item $c-l$ denotes the remaining equity or ``distance to default'',
i.e. the number of times of being selected before a node defaults.
Thus $c\leq l$ means the node has defaulted.
\item Because $l\leq i$ by definition, $i<c$ implies that a node invulnerable,
i.e. even all loans lent out to the counterparties are written down
from the balance sheet, the node still has positive remaining equity.
On the contrary, $0<c\leq i$ denotes the node has the possibility
to default, i.e. vulnerable.
\item In the beginning of the contagion process, all nodes are in states
of the form $(i,j,c,0)$.
\end{enumerate}
Then we define the state of the system at each step. Note that the
number of nodes that have defaulted initially $(c=0)$ or invulnerable
$(i<c)$ in the beginning will not change throughout the process,
so we only need to keep track of the nodes that are initially vulnerable
$(0<c\leq i)$ and currently liquid. Further note that the possible
states throughout the process for nodes that are vulnerable in the
beginning and liquid at a later step are
\[
\Gamma\coloneqq\{(i,j,c,l):0\leq i,0\leq j,0\leq l<c\leq i\text{ or }c=i+1,l=i\}.
\]

Note particularly the state $(i,j,i+1,i)$ is the result that a node
in state $(i,j,i,i-1)$ is selected and receives one intervention
and thus becomes invulnerable.
\begin{defn}
\label{def:S}(State variable $S$) Let $S_{k}^{i,j,c,l}$ denote
the number of nodes that are vulnerable initially and are in state
$(i,j,c,l)$ at step $k$, for $k=0,\ldots,m$ and $S_{k}\coloneqq(S_{k}^{i,j,c,l})_{(i,j,c,l)\in\Gamma}$
be the state of the system. Note in the following we may use $\alpha$
to represent $(i,j,c,l)\in\Gamma$ and write $S_{k}^{\alpha}$ instead
of $S_{k}^{i,j,c,l}$ to simplify the notation.
\end{defn}
Recall $m=m(n)$ is the number of the total in (or out) degree of
the network, which is also the maximum steps of the process.

Then we define the empirical probability of in, out degrees and initial
equity levels.
\begin{defn}
(empirical probability) Define the empirical probability of the triplet
(in degree, out degree, initial equity level) as 
\[
P_{n}(i,j,c)=\frac{1}{n}\left|\{v\in[n]\mid d^{-}(v)=i,d^{+}(v)=j,e_{0}^{v}=c\}\right|.
\]
 
\end{defn}
Note that $\sum_{c\geq0}P_{n}(i,j,c)=\frac{1}{n}\left|\{v\in[n]\mid d^{-}(v)=i,d^{+}(v)=j\}\right|$
represents the empirical probability of the in and out degree pair
$(i,j)$.

Previously we use $W_{k}$ to denote the selected node at step $k$.
Now with a little abuse of notation, let $W_{k}$ denote the state
of the selected node at step $k$, $k=1,\ldots,m$, so $W_{k}\in\Gamma^{+}\coloneqq\{(i,j,c,l):0\leq i,0\leq j,0\leq c,0\leq l\leq i\}$.
We consider a Markovian control policy $G_{n}=(g_{1}^{(n)}(S_{0},W_{1}),\ldots,g_{m}^{(n)}(S_{m-1},W_{m}))$
where $g_{k+1}^{(n)}:\mathbb{N}_{0}^{|\Gamma|}\times\Gamma^{+}\rightarrow\{0,1\}$
specifies the intervention at step $k+1$ on the selected node which
has state $W_{k+1}$ given the state $S_{k}$, where $\mathbb{N}_{0}\coloneqq\{0,1,2,\ldots\}$,
the set of nonnegative integer numbers.

Letting $P_{n}=(P_{n}(i,j,c))_{0\leq c\leq i}$, we rewrite the terms
$J_{n}=J_{G_{n}}(P_{n})$, $IT_{n}=IT_{T_{n}}=IT_{n}(G_{n},P_{n})$
and $D_{n}=D_{T_{n}}=D_{n}(G_{n},P_{n})$ in (\ref{eq:Obj_SCP}) based
on $G_{n}$ and $P_{n}$ as
\begin{eqnarray}
IT_{n}(G_{n},P_{n}) & = & \sum_{k=1}^{T_{n}}g_{k}^{(n)}(S_{k-1},W_{k})\nonumber \\
D_{n}(G_{n},P_{n}) & = & n\sum_{i,j}P_{n}(i,j,0)+n\sum_{i,j,1\leq c\leq i}P_{n}(i,j,c)-\sum_{(i,j,c,l)\in\Gamma}S_{T_{n}}^{i,j,c,l}\nonumber \\
 & = & n\sum_{i,j,0\leq c\leq i}P_{n}(i,j,c)-\sum_{(i,j,c,l)\in\Gamma}S_{T_{n}}^{i,j,c,l}.\nonumber \\
\label{eq:D_n}
\end{eqnarray}
Note that the first equality for $D_{n}(G_{n},P_{n})$ holds because
the nodes that default at the end of the process consist of two parts:
the nodes that have defaulted initially $n\sum_{i,j}P_{n}(i,j,0)$
and those nodes that are vulnerable initially and default during the
process $n\sum_{i,j,1\leq c\leq i}P_{n}(i,j,c)-\sum_{(i,j,c,l)\in\Gamma}S_{T_{n}}^{i,j,c,l}$. 
\begin{assumption}
\label{assu:Regularity} Consider a sequence $([n],\mathcal{E}_{n})$
of random networks, indexed by the size of the network $n$. For each
$n\in\mathbb{N},$$(d^{-}(v))_{v\in[n]},(d^{+}(v))_{v\in[n]}$ are
sequences of nonnegative integers with $\sum_{v\in[n]}d^{-}(v)=\sum_{v\in[n]}d^{+}(v)$
and such that for some probability distribution $p$ on $\mathbb{N}_{0}^{3}$
independent of $n$ with $\lambda:=\sum_{i,j,c}ip(i,j,c)=\sum_{i,j,c}jp(i,j,c)<\infty$,
the following holds
\end{assumption}
\begin{enumerate}
\item \label{enu:assumption2}$P_{n}(i,j,c)\rightarrow p(i,j,c)\ \forall\ i,j,c\geq0$
as $n\rightarrow\infty$.
\item $\sum_{v\in[n]}[(d^{-}(v))^{2}+(d^{+}(v))^{2}]=O(n)$.
\end{enumerate}
Note that the second assumption implies by uniform integrability that
$\frac{m(n)}{n}\rightarrow\lambda$ as $n\rightarrow\infty$ and recall
that $m(n)\coloneqq\sum_{v\in[n]}d^{-}(v)=\sum_{v\in[n]}d^{+}(v)$.
Since $k\leq m(n)$, for large $n$ it holds that $\frac{k}{n}\leq\frac{m(n)}{n}\leq\lambda+1$.
Assumption \ref{assu:Regularity} essentially implies the network
is sparse which is justified in many empirical study literature on
the structure of financial networks. For example, \citet{Furfine1999,Furfine2003}
and \citet{Bech2010} explore the federal funds market and find that
the network is sparse and exhibits the small-world phenomenon. 
\begin{rem}
Define $p\coloneqq(p(i,j,c))_{i,j,0\leq c\leq i}$. We need to stress
that the vector $p$ only includes $p(i,j,c)$ in the range $0\leq c\leq i$
because $c>i$ implies that the nodes are invulnerable in the beginning
and their total number will not change throughout the contagion process.
Nor do we intervene on them.
\end{rem}
Next we present our assumptions on the control functions $g_{k}^{(n)}$.
\begin{assumption}
\label{assu:G_n}Let $G_{n}=(g_{1}^{(n)},\ldots,g_{m}^{(n)})$ be
the a control policy (a sequence of control functions) for the contagion
process on a network of size $n$ where $n$ is large enough such
that $\frac{m(n)}{n}\leq\lambda+1$. Assume that
\[
g_{k+1}^{(n)}(s,w)=\begin{cases}
u^{i,j,c,c-1}(\frac{k}{n}) & \text{if }w=(i,j,c,c-1)\in\Phi\\
0 & \text{otherwise},
\end{cases}
\]
for $0\leq k\leq m-1$, where $\Phi\coloneqq\{(i,j,c,c-1):\ 0\leq i,0\leq j,1\leq c\leq i\}$.
Note that $\Phi$ includes possible states indicating the distance
to default equal to one and $\Phi\subset\Gamma$. Further $g_{k+1}^{(n)}(s,w)=0$
for $w\notin\Phi$ follows from \propref{OptimalControlCandidates}.
$u^{i,j,c,c-1}:[0,\lambda+1]\rightarrow\{0,1\}$ is a piecewise constant
function on $[0,\lambda+1]$, i.e. there is a partition of the interval
into a finite set of intervals such that $u^{i,j,c,c-1}$ is constant
$0$ or $1$ on each interval. Note in the following we may use $\beta$
to represent $(i,j,c,c-1)\in\Phi$ and write $u^{\beta}(\tau)$ instead
of $u^{i,j,c,c-1}(\tau)$ to simplify the notation. We may use $u_{\tau}^{\beta}$
in stead of $u^{\beta}(\tau)$ if necessary. Let $u=(u^{\beta})_{\beta\in\Phi}$
and $\Pi$ contain all piecewise constant vector function $u$ on
$[0,\lambda+1]$.
\end{assumption}
\begin{rem}
By this assumption the function $u$ is independent of the state but
only a function of the time. This implies that the control function
$g_{k+1}^{(n)}(s,w)$ depends on the scaled time $\frac{k}{n}$ and
the state of the currently selected node $w$ but not on the state
$s$. When the size of the network $n$ goes to infinity, the function
$u$ specifies the interventions for the ``asymptotic'' contagion
process. Later in \propref{Sconv} we can see that it is reasonable
to consider such function $u$ because given a function $u$, we can
predict the value of a deterministic process at any time to which
the scaled stochastic contagion process converge in probability. Moreover,
this type of control policies is the one that can be solved in the
optimal control problem (\ref{eq:NonAOCP}) we will introduce later.
\end{rem}
In summary, \assuref{Regularity} assumes the convergence of the empirical
probabilities of the in and out degrees and the initial equity. On
the other hand, \assuref{G_n} indicates that the control functions
depend on the scaled time and the state of the currently selected
node. These two assumptions allow us to define the following asymptotic
control problem by ensuring that the limits in the objective function
are well defined. 
\begin{defn}
Define the asymptotic control problem given $p=(p(i,j,c))_{i,j,0\leq c\leq i}$
as
\begin{align}
\tag{{ACP}} & \min_{u\in\Pi}\lim_{n\rightarrow\infty}J_{G_{n}}(P_{n})\label{eq:ACP}\\
= & \min_{u\in\Pi}\lim_{n\rightarrow\infty}K\mathbb{E}\frac{IT_{n}(G_{n},P_{n})}{n}+\mathbb{E}\frac{D_{n}(G_{n},P_{n})}{n}.\nonumber 
\end{align}
\end{defn}
In the following we will show the limits in (\ref{eq:ACP}) are well
defined by Wormald's theorem \citep{Wormald1999} for a sequence of
networks with $P_{n}$ and $G_{n}$ satisfying \assuref{Regularity}
and \assuref{G_n}, respectively.

\section{Dynamics of the default contagion process with interventions}

Recall that $IT_{k}$ is the accumulative number of interventions
up to step $k$, so 
\begin{align}
IT_{0} & =0\nonumber \\
IT_{k} & =\sum_{\ell=1}^{k}g_{\ell}(S_{\ell-1},W_{\ell})\nonumber \\
 & =\sum_{\ell=1}^{k}\sum_{\beta\in\Phi}\mathbbm1_{(W_{\ell}=\beta)}u^{\beta}(\frac{\ell}{n}).\nonumber \\
\end{align}
We shall see that $(S_{k},IT_{k})_{k=0,\ldots,m}$ is a controlled
Markov chain given a control policy $G_{n}$. In \figref{Def_tau_hat}
we illustrate for the same $(i,j)$ pair the states we consider as
well as their transition relations. 

\begin{figure}[H]
\centering{}\includegraphics[scale=0.6]{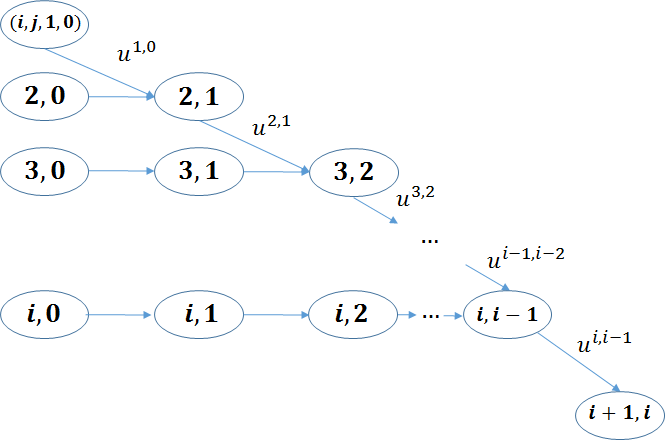}\caption{\label{fig:Def_tau_hat}The states considered for the same $(i,j)$
pair, $0\leq i$,$0\leq j$ and their transition relations.}
\end{figure}

To describe the transition probabilities, assume the state of the
selected node at step $k+1$ is $W_{k+1}=(i,j,c,l)$, for $k=0,\ldots,m-1$,
there are three possibilities:
\begin{enumerate}
\item The selected node has defaulted, i.e $c\leq l$ or the node is invulnerable,
i.e. $c>i$, then $S_{k+1}=S_{k},IT_{k+1}=IT_{k}$. 
\item The selected node is vulnerable but has the ``distance to default''
more than one, i.e $c-l\geq2$, then the node $w$ is selected with
probability $\frac{(i-l)S_{k}^{i,j,c,l}}{m-k}$ and
\begin{eqnarray}
S_{k+1}^{i,j,c,l} & = & S_{k}^{i,j,c,l}-1,\nonumber \\
S_{k+1}^{i,j,c,l+1} & = & S_{k}^{i,j,c,l+1}+1,\nonumber \\
IT_{k+1} & = & IT_{k},
\end{eqnarray}
while other entries of $S_{k+1}$ are the same as $S_{k}$.
\item The selected node has the ``distance to default'' of one, i.e $c-l=1$,
then the node is selected with probability $\frac{(i-c+1)S_{k}^{i,j,c,c-1}}{m-k}$
and by \assuref{G_n}, 
\begin{eqnarray}
S_{k+1}^{i,j,c,c-1} & = & S_{k}^{i,j,c,c-1}-1,\nonumber \\
S_{k+1}^{i,j,c+1,c} & = & S_{k}^{i,j,c+1,c}+g_{k+1}^{(n)}(S_{k},(i,j,c,c-1))\nonumber \\
 & = & S_{k}^{i,j,c+1,c}+u^{i,j,c,c-1}(\frac{k}{n}),\nonumber \\
IT_{k+1} & = & IT_{k}+u^{i,j,c,c-1}(\frac{k}{n}),
\end{eqnarray}
while other entries of $S_{k+1}$ are the same as $S_{k}$.
\end{enumerate}
Let $(\mathcal{F}_{k})_{k=0,\ldots,m}$ be the natural filtration
of $S_{k}$, $\Delta S_{k}^{\alpha}=S_{k+1}^{\alpha}-S_{k}^{\alpha}$,
$\alpha\in\Gamma$ and $\Delta IT_{k}=IT_{k+1}-IT_{k}$, it follows
that 
\begin{eqnarray}
E\left[\Delta S{}_{k}^{i,j,c,0}|\mathcal{F}_{k}\right] & = & -\frac{iS_{k}^{i,j,c,0}}{m-k}\qquad\text{for \ensuremath{1\leq c\leq i}},\nonumber \\
E\left[\Delta S_{k}^{i,j,c,l}|\mathcal{F}_{k}\right] & = & \frac{(i-l+1)S_{k}^{i,j,c,l-1}}{m-k}-\frac{(i-l)S_{k}^{i,j,c,l}}{m-k}\nonumber \\
 &  & \text{ for }3\leq c\leq i,1\leq l\leq c-2,\nonumber \\
E\left[\Delta S_{k}^{i,j,c,c-1}|\mathcal{F}_{k}\right] & = & \frac{(i-c+2)S_{k}^{i,j,c-1,c-2}}{m-k}u^{i,j,c-1,c-2}(\frac{k}{n})\nonumber \\
 &  & +\frac{(i-c+2)S_{k}^{i,j,c,c-2}}{m-k}-\frac{(i-c+1)S_{k}^{i,j,c,c-1}}{m-k}\nonumber \\
 &  & \text{ for }2\leq c\leq i,\nonumber \\
E\left[\Delta S_{k}^{i,j,i+1,i}|\mathcal{F}_{k}\right] & = & \frac{S_{k}^{i,j,i,i-1}}{m-k}u^{i,j,i,i-1}(\frac{k}{n}),\nonumber \\
E\left[\Delta IT_{k}|\mathcal{F}_{k}\right] & = & \sum_{(i,j,c,c-1)\in\Phi}\frac{(i-c+1)S_{k}^{i,j,c,c-1}}{m-k}u^{i,j,c,c-1}(\frac{k}{n}).
\end{eqnarray}

\section{\label{subsec:Convergence_Inter}Convergence of the default contagion
process with interventions}
\begin{defn}
\label{def:s}(ODEs of $s_{\tau}$) Given a set of piecewise constant
function $u=(u^{\beta})_{\beta\in\Phi}$ on $[0,\lambda]$, i.e. $u\in\Pi$,
define the system of ordinary differential equations (ODEs) of $s_{\tau}=(s_{\tau}^{\alpha})_{\alpha\in\Gamma}$
as
\begin{eqnarray}
\frac{ds_{\tau}^{i,j,c,0}}{d\tau} & = & -\frac{is_{\tau}^{i,j,c,0}}{\lambda-\tau}\qquad\text{for \ensuremath{1\leq c\leq i}},\nonumber \\
\frac{ds_{\tau}^{i,j,c,l}}{d\tau} & = & \frac{(i-l+1)s_{\tau}^{i,j,c,l-1}}{\lambda-\tau}-\frac{(i-l)s_{\tau}^{i,j,c,l}}{\lambda-\tau}\nonumber \\
 &  & \text{ for }3\leq c\leq i,1\leq l\leq c-2,\nonumber \\
\frac{ds_{\tau}^{i,j,c,c-1}}{d\tau} & = & \frac{(i-c+2)s_{\tau}^{i,j,c-1,c-2}}{\lambda-\tau}u^{i,j,c-1,c-2}(\tau)+\frac{(i-c+2)s_{\tau}^{i,j,c,c-2}}{\lambda-\tau}-\frac{(i-c+1)s_{\tau}^{i,j,c,c-1}}{\lambda-\tau}\nonumber \\
 &  & \text{ for }2\leq c\leq i,\nonumber \\
\frac{ds_{\tau}^{i,j,i+1,i}}{d\tau} & = & \frac{s_{\tau}^{i,j,i,i-1}}{\lambda-\tau}u^{i,j,i,i-1}(\tau).\nonumber \\
\label{eq:ODE_4}
\end{eqnarray}
The ODEs can be expressed in the form $\frac{ds_{\tau}}{d\tau}=h(\tau,s_{\tau};u_{\tau})$
where $h=(h^{\alpha})_{\alpha\in\Gamma}$.
\end{defn}
For what is needed below we analyze the solutions of the ODEs in \defref{s}
for a subinterval of $[0,\lambda]$ where $u(\tau)$ is a constant
vector function.
\begin{prop}
\label{prop:ODE_Sol_stau}Let $s_{\tau}=(s_{\tau}^{\alpha})_{\alpha\in\Gamma}$
satisfy the system of ordinary differential equations in \defref{s}
with the initial conditions $s_{\tau_{1}}=s_{1}\coloneqq(s_{1}^{\alpha})_{\alpha\in\Gamma}$
in the interval on $[\tau_{1},\tau_{2})\subseteq[0,\lambda)$ and
assume $u(\tau)$ is a constant vector function $u(\tau)=b\coloneqq(b^{\beta})_{\beta\in\Phi}$
where $b^{\beta}\in\{0,1\}$ on $[\tau_{1},\tau_{2})$, then the solution
$s_{\tau}$ on $[\tau_{1},\tau_{2})$ is
\begin{align}
s_{\tau}^{i,j,c,l} & =(\frac{\lambda-\tau}{\lambda-\tau_{1}})^{i-l}\sum_{r=0}^{l}s_{1}^{i,j,c,r}\binom{i-r}{l-r}(1-\frac{\lambda-\tau}{\lambda-\tau_{1}})^{l-r}\nonumber \\
 & \text{ for }2\leq c\leq i,\;0\leq l\leq c-2,\label{eq:Sol_stau_cl}\\
s_{\tau}^{i,j,c,c-1} & =(\frac{\lambda-\tau}{\lambda-\tau_{1}})^{i-c+1}\sum_{r=0}^{c-1}\sum_{q=r+1}^{c}\prod_{k=q}^{c-1}b^{i,j,k,k-1}s_{1}^{i,j,q,r}\binom{i-r}{c-1-r}(1-\frac{\lambda-\tau}{\lambda-\tau_{1}})^{c-1-r}\nonumber \\
 & \text{ for }1\leq c\leq i,\label{eq:Sol_stau_cc-1}\\
s_{\tau}^{i,j,i+1,i} & =s_{1}^{i,j,i+1,i}+\sum_{r=0}^{i-1}\sum_{q=r+1}^{i}\prod_{k=q}^{i}b^{i,j,k,k-1}s_{1}^{i,j,q,r}(1-\frac{\lambda-\tau}{\lambda-\tau_{1}})^{i-r}\nonumber \\
\label{eq:Sol_stau_iplus1i}
\end{align}
where $\prod_{k=c}^{c-1}b^{i,j,k,k-1}\coloneqq1$. As a direct result,
if we take the initial condition $s_{1}^{i,j,c,l}=p(i,j,c)\mathbbm1_{(l=0)}$
for $(i,j,c,l)\in\Gamma$ at $\tau_{1}=0$ , it follows that
\begin{align}
s_{\tau}^{i,j,c,l} & =p(i,j,c)\binom{i}{l}(1-\frac{\tau}{\lambda})^{i-l}(\frac{\tau}{\lambda})^{l}\nonumber \\
 & \text{ for }2\leq c\leq i,\;1\leq l\leq c-2.\label{eq:ODE_stau_cl_from0}
\end{align}
\end{prop}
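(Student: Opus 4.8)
The plan is to collapse the whole system into a one–parameter family of elementary linear first–order ODEs by the time change $x\coloneqq\frac{\lambda-\tau}{\lambda-\tau_{1}}$, which maps $[\tau_{1},\tau_{2})$ onto an interval ending at $1$ and turns every coefficient $\frac{\mathrm{const}}{\lambda-\tau}$ into $\frac{\mathrm{const}}{x}$, the overall factor $\frac{1}{\lambda-\tau_{1}}$ being absorbed through $\frac{d}{d\tau}=-\frac{1}{\lambda-\tau_{1}}\frac{d}{dx}$. On $[\tau_{1},\tau_{2})\subseteq[0,\lambda)$ the coefficients are continuous and $u\equiv b$ is constant, so the system is linear with continuous coefficients and has a unique solution; it therefore suffices to check that the right–hand sides of (\ref{eq:Sol_stau_cl})--(\ref{eq:Sol_stau_iplus1i}) solve the equations of \defref{s} and match the data at $x=1$, where $1-x=0$ kills every summand with $r<l$ and leaves exactly $s_{1}^{\alpha}$.

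I would then handle the four blocks of \defref{s} in the order in which they decouple. First, for fixed $(i,j,c)$ I prove (\ref{eq:Sol_stau_cl}) by induction on $l$, $0\le l\le c-2$. For $l=0$ the equation reads $\frac{ds^{i,j,c,0}}{dx}=\frac{i}{x}\,s^{i,j,c,0}$, with solution $s_{1}^{i,j,c,0}x^{i}$, which is (\ref{eq:Sol_stau_cl}) at $l=0$. For the inductive step, after the time change the equation becomes $\frac{ds^{i,j,c,l}}{dx}=\frac{i-l}{x}\,s^{i,j,c,l}-\frac{i-l+1}{x}\,s^{i,j,c,l-1}$; multiplying by the integrating factor $x^{-(i-l)}$, inserting the inductive formula for $s^{i,j,c,l-1}$ (whose prefactor $x^{i-l+1}$ cancels $x^{-(i-l)-1}$), and integrating from $1$ to $x$ via $\int_{1}^{x}(1-y)^{k}\,dy=-\frac{(1-x)^{k+1}}{k+1}$ yields $s^{i,j,c,l}$. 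The binomial identity $\frac{i-l+1}{l-r}\binom{i-r}{l-1-r}=\binom{i-r}{l-r}$ collapses the coefficients, and the boundary term $s_{1}^{i,j,c,l}$ is precisely the $r=l$ summand, which extends the sum to $0\le r\le l$.

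Next I would prove (\ref{eq:Sol_stau_cc-1}) by induction on $c$, $1\le c\le i$. The base case $c=1$ has no inflow—the putative source states $(i,j,0,-1)$ and $(i,j,1,-1)$ do not exist—so the equation is again $\frac{ds^{i,j,1,0}}{dx}=\frac{i}{x}\,s^{i,j,1,0}$, giving $s_{1}^{i,j,1,0}x^{i}$, the $c=1$ instance of (\ref{eq:Sol_stau_cc-1}). For $c\ge2$, after the time change and multiplication by $x^{-(i-c+1)}$ the equation reads $\frac{d}{dx}\bigl(x^{-(i-c+1)}s^{i,j,c,c-1}\bigr)=-(i-c+2)\,x^{-(i-c+2)}\bigl(b^{i,j,c-1,c-2}s^{i,j,c-1,c-2}+s^{i,j,c,c-2}\bigr)$, where $s^{i,j,c,c-2}$ is supplied by the already–proved (\ref{eq:Sol_stau_cl}) (the case $l=c-2$, valid for $c\ge2$) and $s^{i,j,c-1,c-2}$ by the inductive hypothesis; both carry the prefactor $x^{i-c+2}$, which cancels $x^{-(i-c+2)}$. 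The key bookkeeping is $b^{i,j,c-1,c-2}\prod_{k=q}^{c-2}b^{i,j,k,k-1}=\prod_{k=q}^{c-1}b^{i,j,k,k-1}$: the term $b^{i,j,c-1,c-2}s^{i,j,c-1,c-2}$ furnishes the $q\le c-1$ summands of the target double sum, while $s^{i,j,c,c-2}$, rewritten with the empty product $\prod_{k=c}^{c-1}b^{i,j,k,k-1}\coloneqq1$, furnishes the $q=c$ summands. Integrating from $1$ to $x$ and applying the analogous identity $\frac{i-c+2}{c-1-r}\binom{i-r}{c-2-r}=\binom{i-r}{c-1-r}$ gives (\ref{eq:Sol_stau_cc-1}), the boundary term $s_{1}^{i,j,c,c-1}$ being the $r=c-1$, $q=c$ summand. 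Then (\ref{eq:Sol_stau_iplus1i}) follows at once by inserting the $c=i$ case of (\ref{eq:Sol_stau_cc-1}) into $\frac{ds^{i,j,i+1,i}}{dx}=-\frac{b^{i,j,i,i-1}}{x}\,s^{i,j,i,i-1}$ and integrating, using $\binom{i-r}{i-1-r}=i-r$ to cancel the $\frac{1}{i-r}$ from the integral; and the final corollary (\ref{eq:ODE_stau_cl_from0}) is the specialization $\tau_{1}=0$, $s_{1}^{i,j,c,l}=p(i,j,c)\mathbbm1_{(l=0)}$, for which $x=1-\frac{\tau}{\lambda}$ and only the $r=0$ summand of (\ref{eq:Sol_stau_cl}) survives. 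The main obstacle I anticipate is exactly the index bookkeeping in the inductive step for (\ref{eq:Sol_stau_cc-1}): correctly merging the two inflow contributions into the single double sum $\sum_{r}\sum_{q=r+1}^{c}$, handling the empty–product convention, and checking that the edge cases are consistent—$c=1$ with no inflow, and $c=2$, where $s^{i,j,2,0}$ is governed by the first block of \defref{s} yet is still the $l=0$ instance of (\ref{eq:Sol_stau_cl}). Everything else is the routine integrating–factor computation together with the two elementary binomial identities above.
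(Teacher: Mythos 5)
Your proposal is correct and follows essentially the same route as the paper: the paper substitutes $t=-\ln(\lambda-\tau)$ to make the system autonomous and then runs the identical two-stage induction (on $l$ for the block (\ref{eq:Sol_stau_cl}), then on $c$ for (\ref{eq:Sol_stau_cc-1}), then direct integration for (\ref{eq:Sol_stau_iplus1i})) with the same integrating factors, binomial identities, and empty-product bookkeeping, its quantity $e^{t_{1}-t}$ being exactly your variable $x=\frac{\lambda-\tau}{\lambda-\tau_{1}}$. No gaps.
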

We delegate the proof in \secref{Proofs}. 

In the following part our goal is to approximate $\frac{IT_{k}}{n}$
and $\frac{D_{k}}{n}$ as $n\rightarrow\infty$ given a function $u$.
However, the number of variables depends on $n$, so we need to bound
the terms associated with large in or out degree values. Fix $\epsilon>0$
and by \assuref{Regularity} we have that
\[
\lambda=\sum_{i,j,c}ip(i,j,c)=\sum_{i,j,c}jp(i,j,c)<\infty,
\]
then there exists an integer $M^{\epsilon}$ such that
\[
\sum_{i\geq M^{\epsilon}}\sum_{j,c}ip(i,j,c)+\sum_{j\geq M^{\epsilon}}\sum_{i,c}jp(i,j,c)<\epsilon,
\]
so 
\begin{align}
 & \sum_{i\lor j\geq M^{\epsilon},c}jp(i,j,c)\nonumber \\
= & \sum_{i\geq M^{\epsilon}}\sum_{j<M^{\epsilon}}\sum_{c}jp(i,j,c)+\sum_{i\geq M^{\epsilon}}\sum_{j\geq M^{\epsilon}}\sum_{c}jp(i,j,c)+\sum_{i<M^{\epsilon}}\sum_{j\geq M^{\epsilon}}\sum_{c}jp(i,j,c)\nonumber \\
\leq & \sum_{i\geq M^{\epsilon}}\sum_{j<M^{\epsilon}}\sum_{c}ip(i,j,c)+\sum_{i\geq M^{\epsilon}}\sum_{j\geq M^{\epsilon}}\sum_{c}jp(i,j,c)+\sum_{i<M^{\epsilon}}\sum_{j\geq M^{\epsilon}}\sum_{c}jp(i,j,c)\nonumber \\
< & \epsilon.\nonumber \\
\label{eq:HighOrderTerms_p}
\end{align}

We can prove similarly that there exists an integer $L^{\epsilon}$
such that $\sum_{i\lor j\geq L^{\epsilon},c}ip(i,j,c)<\epsilon$,
but without loss of generality we write $M^{\epsilon}$ instead of
$L^{\epsilon}$ in what follows. Moreover, by \assuref{Regularity},
as $n\rightarrow\infty$, 
\[
\sum_{i,j,c}iP_{n}(i,j,c)=\sum_{i,j,c}jP_{n}(i,j,c)\rightarrow\lambda<\infty,
\]
so for $n$ large enough, we can show that
\begin{align}
\sum_{i\lor j\geq M^{\epsilon},c}jP_{n}(i,j,c) & <\epsilon,\nonumber \\
\sum_{i\lor j\geq M^{\epsilon},c}iP_{n}(i,j,c) & <\epsilon.\label{eq:HighOrderTerms_Pn}
\end{align}

So we define the integer $M^{\epsilon}$ formally.
\begin{defn}
\label{def:Mepsilon}For any $\epsilon>0$, we define $M^{\epsilon}$
as the integer such that 
\begin{align*}
\sum_{i\lor j\geq M^{\epsilon},c}ip(i,j,c) & <\epsilon,\\
\sum_{i\lor j\geq M^{\epsilon},c}jp(i,j,c) & <\epsilon.
\end{align*}
\end{defn}
Accordingly, define
\begin{align*}
\Gamma^{\epsilon} & \coloneqq\{(i,j,c,l):i\lor j<M^{\epsilon},0\leq l<c\leq i\text{ or }c=i+1,l=i\},\\
\Phi^{\epsilon} & \coloneqq\{(i,j,c,c-1):i\lor j<M^{\epsilon},0\leq c\leq i\},\\
\hat{\lambda} & \coloneqq\lambda-\epsilon,
\end{align*}
where $a\lor b=\max\{a,b\}$.

Next we show that the scaled state variable $S_{k}$ and $IT_{k}$
converges in probability to the solution of the ODEs in \defref{s}
given the function $u$.
\begin{prop}
\label{prop:Sconv}Consider a sequence of networks with initial conditions
$(P_{n})_{n\geq1}$ satisfying \assuref{Regularity} and let $(G_{n})_{n\geq1}$
be the sequence of control policies for the contagion process on the
sequence of networks and $(G_{n})_{n\geq1}$ satisfy \assuref{G_n}
with the function $u=(u^{\beta})_{\beta\in\Phi^{\epsilon}}$, then
\begin{eqnarray}
\sup_{0\leq k\leq n\hat{\lambda}}\frac{S_{k}^{\alpha}}{n}-s_{\frac{k}{n}}^{\alpha} & = & O(n^{-\frac{1}{4}}),\nonumber \\
\sup_{0\leq k\leq n\hat{\lambda}}\frac{\tilde{IT}_{k}}{n}-it_{\frac{k}{n}}^{\epsilon} & = & O(n^{-\frac{1}{4}}),
\end{eqnarray}
with probability $1-O(n^{\frac{1}{4}}\exp(-n^{\frac{1}{4}}))$ for
$\alpha\in\Gamma^{\epsilon}$, where $s_{\tau}=(s_{\tau}^{\alpha})_{\alpha\in\Gamma^{\epsilon}}$
is the solution for the ODEs in \defref{s} with the initial conditions
$s_{0}^{i,j,c,l}=p(i,j,c)\mathbbm1_{(l=0)}$ and 
\begin{align*}
\tilde{IT_{0}} & =0,\\
\tilde{IT}_{k} & =\sum_{l=1}^{k}\sum_{\beta\in\Phi^{\epsilon}}\mathbbm1_{(W_{\ell}=\beta)}u^{\beta}(\frac{\ell}{n}),
\end{align*}
and
\begin{align}
\tilde{it}_{\tau} & =\int_{0}^{\tau}\sum_{(i,j,c,c-1)\in\Phi^{\epsilon}}\frac{(i-c+1)s_{t}^{i,j,c,c-1}}{\lambda-t}u^{i,j,c,c-1}(t)dt.\label{eq:it}
\end{align}
\end{prop}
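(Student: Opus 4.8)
The plan is to apply Wormald's differential equation method \citep{Wormald1999} to the finite-dimensional controlled Markov chain obtained after the degree truncation at $M^{\epsilon}$. The first step is to check that $\bigl((S_{k}^{\alpha})_{\alpha\in\Gamma^{\epsilon}},\,\tilde{IT}_{k}\bigr)_{0\le k\le m}$ is itself a Markov chain of dimension independent of $n$: in each of the three transition cases recorded above, a step changes only coordinates carrying the same $(i,j)$ and never increases $i\lor j$, so the coordinates indexed by $\Gamma^{\epsilon}$ together with $\tilde{IT}_{k}$ form a closed subsystem of $|\Gamma^{\epsilon}|+1$ coordinates; the untruncated network enters only through the factor $m-k$ in the selection probability of \lemref{SelectedNodeLaw}, and $m=m(n)$ satisfies $m/n\to\lambda$ by \assuref{Regularity}. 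The candidate limit is the solution $s_{\tau}$ of the ODEs in \defref{s} with $s_{0}^{i,j,c,l}=p(i,j,c)\mathbbm1_{(l=0)}$, which exists, is unique, and stays in a bounded region by \propref{ODE_Sol_stau}, augmented by $\tilde{it}_{\tau}$ of (\ref{eq:it}), which solves $\tfrac{d\tilde{it}_{\tau}}{d\tau}=\sum_{(i,j,c,c-1)\in\Phi^{\epsilon}}\tfrac{(i-c+1)s_{\tau}^{i,j,c,c-1}}{\lambda-\tau}\,u^{i,j,c,c-1}(\tau)$ and hence is the natural extra coordinate of the limiting trajectory.

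Since $u\in\Pi$ is piecewise constant, I would then partition $[0,\hat{\lambda}]$ into finitely many subintervals $[\tau_{0},\tau_{1}),\dots,[\tau_{r-1},\tau_{r})$ on each of which every $u^{\beta}$ equals a constant $b^{\beta}\in\{0,1\}$, apply Wormald's theorem one subinterval at a time with the endpoint value of the preceding subinterval as the new initial data, and combine the $r$ exceptional events by a union bound. The structural point that makes this work --- and the reason the supremum runs only up to $n\hat{\lambda}$ rather than up to $m(n)$ --- is that on $[0,\hat{\lambda}]$ one has $\lambda-\tau\ge\epsilon>0$, so the factors $\tfrac{1}{\lambda-\tau}\le\tfrac{1}{\epsilon}$ in the drift $h(\tau,s;b)$ are bounded; as $h$ is affine in $s$ and the integer coefficients $i-l$ and $i-c+1$ appearing in it are at most $M^{\epsilon}$ on $\Gamma^{\epsilon}$, the map $h(\cdot,\cdot;b)$ is Lipschitz on the relevant compact domain, which is exactly the regularity hypothesis Wormald's theorem requires; the singularity of $h$ at $\tau=\lambda$ and the jumps of $u$ are circumvented by these two truncations.

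It then remains to verify the two remaining hypotheses on each subinterval. The bounded-difference hypothesis is immediate: a step alters at most two coordinates of $S$ and changes $\tilde{IT}$ by at most one, so $|\Delta S_{k}^{\alpha}|\le 1$ and $|\Delta\tilde{IT}_{k}|\le 1$ always. For the trend hypothesis, the conditional expectations $\E[\Delta S_{k}^{\alpha}\mid\mathcal F_{k}]$ and $\E[\Delta\tilde{IT}_{k}\mid\mathcal F_{k}]$ have already been computed above in terms of $\tfrac{1}{m-k}$; writing $\tfrac{1}{m-k}=\tfrac{1}{n}\cdot\tfrac{1}{(m/n)-(k/n)}$ and replacing $m/n$ by $\lambda$ incurs a uniform error of order $|(m/n)-\lambda|\,\epsilon^{-2}=o(1)$ for $k\le n\hat{\lambda}$ by \assuref{Regularity}, absorbed into the tolerance of Wormald's theorem, so $\E[\Delta S_{k}^{\alpha}\mid\mathcal F_{k}]=h^{\alpha}(k/n,S_{k}/n;b)+o(1)$ and likewise for $\tilde{IT}_{k}$; the initial condition holds since $S_{0}^{i,j,c,l}/n=P_{n}(i,j,c)\mathbbm1_{(l=0)}\to p(i,j,c)\mathbbm1_{(l=0)}$ and $\tilde{IT}_{0}=0$. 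With the free parameter of Wormald's theorem chosen of order $n^{-1/4}$, this yields $\sup_{0\le k\le n\hat{\lambda}}\bigl|S_{k}^{\alpha}/n-s_{k/n}^{\alpha}\bigr|=O(n^{-1/4})$ and $\sup_{0\le k\le n\hat{\lambda}}\bigl|\tilde{IT}_{k}/n-\tilde{it}_{k/n}\bigr|=O(n^{-1/4})$ off an event of probability $O(n^{1/4}\exp(-n^{1/4}))$ on each of the finitely many pieces, and the union bound gives the statement. I expect the main obstacle to be not conceptual but the uniform bookkeeping: one has to carry the constants through both truncations --- $M^{\epsilon}$ in degree and $\hat{\lambda}$ in time --- carefully enough that the Lipschitz constant and the bounded-difference and trend tolerances are all uniform in $n$, so that Wormald's rates come out exactly as stated; the piecewise-constant control only adds the routine interval-by-interval patching.
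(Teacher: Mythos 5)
Your proposal follows essentially the same route as the paper: truncate to $\Gamma^{\epsilon}$ and to $[0,\hat{\lambda}]$ so that the drift is Lipschitz (the $\tfrac{1}{\lambda-\tau}\le\tfrac{1}{\epsilon}$ and $i\lor j<M^{\epsilon}$ bounds), verify the bounded-difference and trend hypotheses with the $O(n^{-1})$ error coming from replacing $m-k$ by $n(\lambda-\tfrac{k}{n})$, and apply Wormald's theorem separately on each of the finitely many intervals where the piecewise-constant $u$ is fixed, choosing the tolerance $\rho=n^{-1/4}$.

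The one place where you are too quick is the step you dismiss as ``routine interval-by-interval patching.'' Applying Wormald's theorem on the second interval with initial data $\tfrac{S_{k_{1}}}{n}$ at time $\tfrac{k_{1}}{n}$ only compares $\tfrac{S_{k}}{n}$ to the \emph{re-initialized} trajectory $s_{\tau}(\tfrac{k_{1}}{n},\tfrac{S_{k_{1}}}{n})$, whereas the proposition asserts closeness to the fixed trajectory $s_{\tau}(0,s_{0})$; a union bound over the exceptional events does not by itself bridge these. One must show that the flow depends Lipschitz-continuously on its initial time and initial state, so that the $O(n^{-1/4})$ endpoint error (plus the $\le\tfrac{1}{n}$ one-step jump and the $|\tfrac{k_{1}}{n}-\tau_{1}|\le\tfrac{1}{n}$ time mismatch) propagates to an $O(n^{-1/4})$ error over the remaining interval. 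The paper does this explicitly: boundedness of $\frac{\partial s_{\tau}(\tau',x)}{\partial(\tau',x)}$ is read off from the explicit solution in \propref{ODE_Sol_stau}, and the analogous bound for $\tilde{it}_{\tau}$ requires a separate computation (via the Leibniz rule, since $\tilde{it}_{\tau}$ is an integral functional of $s$), which occupies the last part of the paper's proof. Your argument is correct in outline, but to be complete it must include these continuous-dependence estimates for both $s_{\tau}$ and $\tilde{it}_{\tau}$; with them added, the iteration over the finitely many switch points and the identification $\sigma=\hat{\lambda}$ in Wormald's theorem give exactly the stated rates.
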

From \propref{Sconv} we see that given $(P_{n})_{n\geq1}$ and $(G_{n})_{n\geq1}$
satisfying \assuref{Regularity} and \assuref{G_n}, respectively,
the scaled stochastic process $\frac{S_{k}}{n}$ converges to the
deterministic process $s_{\frac{k}{n}}$ for any $k$ in $[0,n\hat{\lambda}]$.
This justifies the control policy we consider in \assuref{G_n} because
given a control policy $G_{n}$ depending on the function $u$, we
can predict with high probability the scaled stochastic contagion
process at any time $k$. The proof of \propref{Sconv} is delegated
to \secref{Proofs}.

Next we discuss the convergence of the scaled number of defaults and
the process end time.
\begin{defn}
\label{def:Def_D} Define $D_{k}^{-}$ as the number of unrevealed
out links from the default set at step $k$.

Recall that $D_{k}$ denotes the number of defaulted nodes at step
$k$ which consist of two parts: the nodes that have defaulted initially
$n\sum_{i,j}P_{n}(i,j,0)$ and those nodes that are vulnerable initially
and default by step $k$, i.e. $n\sum_{i,j,1\leq c\leq i}P_{n}(i,j,c)-\sum_{(i,j,c,l)\in\Gamma}S_{k}^{i,j,c,l}$,
thus
\begin{eqnarray}
D_{k} & = & n\sum_{i,j}P_{n}(i,j,0)+n\sum_{i,j,1\leq c\leq i}P_{n}(i,j,c)-\sum_{(i,j,c,l)\in\Gamma}S_{k}^{i,j,c,l}\nonumber \\
 & = & n\sum_{i,j,0\leq c\leq i}P_{n}(i,j,c)-\sum_{(i,j,c,l)\in\Gamma}S_{k}^{i,j,c,l}.
\end{eqnarray}
\end{defn}
Similarly, among all defaulted nodes at step $k$ the nodes with out
degree $j$ consist of two parts: the nodes that have defaulted initially
$n\sum_{i}P_{n}(i,j,0)$ and those nodes that are vulnerable initially
and default by step $k$, $n\sum_{i,1\leq c\leq i}P_{n}(i,j,c)-\sum_{i,0\leq l<c\leq i}S_{k}^{i,j,c,l}$,
thus

\begin{eqnarray}
D_{k}^{-} & = & \sum_{j}j(n\sum_{i}P_{n}(i,j,0)+n\sum_{i,1\leq c\leq i}P_{n}(i,j,c)-\sum_{i,0\leq l<c\leq i}S_{k}^{i,j,c,l})-k\nonumber \\
 & = & n\sum_{i,j,0\leq c\leq i}jP_{n}(i,j,c)-\sum_{(i,j,c,l)\in\Gamma}jS_{k}^{i,j,c,l}-k.
\end{eqnarray}

Correspondingly we make the following definitions to approximate $\frac{D_{k}}{n}$
and $\frac{D_{k}^{-}}{n}$ as $n\rightarrow\infty$.
\begin{defn}
\label{def:Def_d}Define 
\begin{eqnarray}
d_{\tau} & = & \sum_{i,j,0\leq c\leq i}p(i,j,c)-\sum_{(i,j,c,l)\in\Gamma}s_{\tau}^{i,j,c,l}\nonumber \\
d_{\tau}^{-} & = & \sum_{i,j,0\leq c\leq i}jp(i,j,c)-\sum_{(i,j,c,l)\in\Gamma}js_{\tau}^{i,j,c,l}-\tau.\nonumber \\
\label{eq:EqOfdWithInt}
\end{eqnarray}
\end{defn}
\begin{prop}
\label{prop:DD_conv}Based on \defref{Def_D} and \defref{Def_d},
it follows that
\begin{align}
\sup_{0\leq k\leq n\hat{\lambda}}\left|\frac{D_{k}^{-}}{n}-d_{\frac{k}{n}}^{-}\right| & \leq o_{p}(1)+2\epsilon,\nonumber \\
\sup_{0\leq k\leq n\hat{\lambda}}\left|\frac{D_{k}}{n}-d_{\frac{k}{n}}\right| & \leq o_{p}(1)+2\epsilon.
\end{align}
\end{prop}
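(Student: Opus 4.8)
The plan is to reduce the two supremum bounds to the convergence statement already established in \propref{Sconv}, handling the high-degree tails via the truncation estimates \eqref{eq:HighOrderTerms_p} and \eqref{eq:HighOrderTerms_Pn}. First I would write, for each step $k$,
\[
\frac{D_k}{n} - d_{\frac{k}{n}}
= \sum_{i,j,0\le c\le i}\bigl(P_n(i,j,c) - p(i,j,c)\bigr)
- \sum_{(i,j,c,l)\in\Gamma}\Bigl(\frac{S_k^{i,j,c,l}}{n} - s_{\frac{k}{n}}^{i,j,c,l}\Bigr),
\]
and split each of the two sums into the part with $i\lor j < M^\epsilon$ and the part with $i\lor j\ge M^\epsilon$. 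The first term's low-degree part is $o_p(1)$ (in fact deterministic $o(1)$) by \assuref{Regularity}\eqref{enu:assumption2} since it is a finite sum; its high-degree part is bounded by $\sum_{i\lor j\ge M^\epsilon,c}p(i,j,c) + \sum_{i\lor j\ge M^\epsilon,c}P_n(i,j,c)$, which is at most $2\epsilon$ for $n$ large because $\sum_{i\lor j\ge M^\epsilon,c}ip(i,j,c)<\epsilon$ dominates the unweighted sum (each omitted triple has $i\ge 1$, as $c\le i$ forces $i\ge 1$ for any node that is not already counted trivially — more carefully, one uses $\sum p(i,j,c)\le\sum i\,p(i,j,c)$ on the tail and likewise for $P_n$).

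For the $S$-term, the low-degree part $\sum_{\alpha\in\Gamma^\epsilon}$ is a finite sum of quantities each of which is $O(n^{-1/4})$ uniformly in $k\le n\hat\lambda$ with probability $1-O(n^{1/4}e^{-n^{1/4}})$ by \propref{Sconv}, hence is $o_p(1)$ uniformly in $k$. The high-degree part $\sum_{\alpha\in\Gamma,\ i\lor j\ge M^\epsilon}\bigl(S_k^{i,j,c,l}/n - s_{\frac{k}{n}}^{i,j,c,l}\bigr)$ must be controlled without the ODE comparison: I would bound $0\le S_k^{i,j,c,l}/n\le$ (number of nodes with in/out degree $(i,j)$)$/n = \sum_c P_n(i,j,c)$ (since $S$ only ever counts existing nodes of that degree class), so summing over the high-degree block gives at most $\sum_{i\lor j\ge M^\epsilon,c}P_n(i,j,c)<\epsilon$; similarly $\sum_{\alpha,\ i\lor j\ge M^\epsilon}s_{\frac{k}{n}}^{i,j,c,l}\le\sum_{i\lor j\ge M^\epsilon,c}p(i,j,c)<\epsilon$ using the explicit solution in \propref{ODE_Sol_stau} (the binomial weights sum to $1$, so $s_\tau^{i,j,c,\cdot}$ summed over $l$ is at most $p(i,j,c)$). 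Adding the contributions gives the stated $o_p(1)+2\epsilon$. The argument for $D_k^-/n - d_{\frac{k}{n}}^-$ is identical but with an extra factor $j$ inside every sum; the $-k/n$ and $-\tau$ terms cancel exactly since $\tau=k/n$, and the high-degree tails are now bounded by $\sum_{i\lor j\ge M^\epsilon,c}jP_n(i,j,c)<\epsilon$ and $\sum_{i\lor j\ge M^\epsilon,c}jp(i,j,c)<\epsilon$, which is precisely why \defref{Mepsilon} was stated with both the $i$-weighted and $j$-weighted tail conditions.

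The main obstacle is the uniformity in $k$ for the high-degree $S$-tail: one needs that the crude bound $S_k^{i,j,c,l}/n\le\sum_c P_n(i,j,c)$ holds simultaneously for all $k\in[0,n\hat\lambda]$, which is immediate since it is a pathwise inequality (the state variable never exceeds the population of its degree class), so no union bound over $k$ is needed there. The only place probability enters is through the finite low-degree block via \propref{Sconv}, where the uniform-in-$k$ high-probability bound is already supplied. I would close by choosing $n$ large enough that the deterministic $o(1)$ pieces and the $O(n^{-1/4})$ pieces are each below any prescribed threshold, absorbing them into the $o_p(1)$, and noting $\epsilon>0$ was arbitrary but fixed so the $2\epsilon$ remainder is exactly as claimed.
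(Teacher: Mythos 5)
Your proposal follows essentially the same route as the paper's proof: split every sum at $M^{\epsilon}$, control the finite low-degree block uniformly in $k$ via \assuref{Regularity} and the uniform-in-$k$ bound of \propref{Sconv}, and control the high-degree tails pathwise using \defref{Mepsilon} together with the fact that the state variables never exceed the population of their degree class. Three bookkeeping points, none fatal but worth fixing. First, bounding the four tails ($P_{n}$, $p$, $S/n$, $s$) separately, as you do, yields $o_{p}(1)+4\epsilon$ rather than the stated $2\epsilon$; the paper instead pairs the $P_{n}$-tail with the $S/n$-tail and the $p$-tail with the $s$-tail, and uses $0\leq\sum_{\Gamma\backslash\Gamma^{\epsilon}}S_{k}/n\leq\sum_{i\lor j\geq M^{\epsilon},c}P_{n}<\epsilon$ (and the analogous sandwich for $s$ against $p$) so that each paired difference lies in $[0,\epsilon)$, giving the claimed $2\epsilon$. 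Second, your justification for the $s$-tail ("$s_{\tau}^{i,j,c,\cdot}$ summed over $l$ is at most $p(i,j,c)$") is not correct for a fixed $c$, because interventions move mass from class $c$ into class $c+1$ (see (\ref{eq:Sol_stau_cc-1}) and (\ref{eq:Sol_stau_iplus1i})); what is true, and what the paper proves by showing $\sum_{c,l}s_{\tau}^{i,j,c,l}$ is nonincreasing on each interval where $u$ is constant and then inducting over intervals, is the bound over the whole $(i,j)$ class by $\sum_{1\leq c\leq i}p(i,j,c)$, which is all you need. Third, for the unweighted tails in the $D_{k}$ bound, $\sum p\leq\sum ip$ fails on the $i=0$, $j\geq M^{\epsilon}$ part of the tail; those terms require the $j$-weighted condition in \defref{Mepsilon} (the paper glosses this with "similarly"). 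Since $\epsilon>0$ is arbitrary, none of this affects how the proposition is used downstream, but as written your accounting does not deliver the exact constant in the statement.
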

\begin{proof}
For some $[\tau_{1},\tau_{2})\subseteq[0,\lambda)$ on which $u(\tau)$
is a constant vector function, from \propref{ODE_Sol_stau} we can
show by summing all $s_{\tau}^{i,j,c,l}$, $(i,j,c,l)\in\Gamma$ with
the same $(i,j)$ that
\[
0\leq\sum_{c,l}s_{\tau}^{i,j,c,l}\leq\sum_{c,l}s_{\tau_{1}}^{i,j,c,l},
\]
so by induction we have
\[
\sum_{c,l}s_{\tau}^{i,j,c,l}\leq\sum_{1\leq c\leq i}p(i,j,c)
\]
and thus it follows from (\ref{eq:HighOrderTerms_p}) that
\begin{align}
0\leq & \sum_{i\lor j\geq M^{\epsilon}}\sum_{0\leq c\leq i}jp(i,j,c)-\sum_{(i,j,c,l)\in\Gamma\backslash\Gamma^{\epsilon}}js_{\tau}^{i,j,c,l}\nonumber \\
\leq & \sum_{i\lor j\geq M^{\epsilon}}\sum_{0\leq c\leq i}jp(i,j,c)<\epsilon.
\end{align}
Similarly because by the definition of $S_{k}^{i,j,c,l}$ for $1\leq k\leq m$,
for fixed $(i,j)$ pair, $(i,j,c,l)\in\Gamma$,
\[
0\leq\sum_{c,l}\frac{S_{k}^{i,j,c,l}}{n}\leq\sum_{1\leq c\leq i}P_{n}(i,j,c),
\]
thus it follows from \eqref{HighOrderTerms_Pn} that
\begin{align}
0\leq & \sum_{i\lor j\geq M^{\epsilon}}\sum_{0\leq c\leq i}jP_{n}(i,j,c)-\sum_{(i,j,c,l)\in\Gamma\backslash\Gamma^{\epsilon}}j\frac{S_{k}^{i,j,c,l}}{n}\nonumber \\
\leq & \sum_{i\lor j\geq M^{\epsilon}}\sum_{0\leq c\leq i}jP_{n}(i,j,c)<\epsilon.
\end{align}

For any $k$ where $0\leq k\leq\hat{\lambda}$, by \propref{Sconv}
it follows that
\begin{eqnarray}
\left|\frac{D_{k}^{-}}{n}-d_{\frac{k}{n}}^{-}\right| & = & |\sum_{i\lor j<M^{\epsilon},0\leq c\leq i}jP_{n}(i,j,c)-\sum_{(i,j,c,l)\in\Gamma^{\epsilon}}j\frac{S_{k}^{i,j,c,l}}{n}-\nonumber \\
 &  & -(\sum_{i\lor j<M^{\epsilon},0\leq c\leq i}jp(i,j,c)-\sum_{(i,j,c,l)\in\Gamma^{\epsilon}}js_{\frac{k}{n}}^{i,j,c,l})|+2\epsilon\nonumber \\
 & = & \left|\sum_{i\lor j<M^{\epsilon},0\leq c\leq i}j\left(P_{n}(i,j,c)-p(i,j,c)\right)-\sum_{(i,j,c,l)\in\Gamma^{\epsilon}}j\left(\frac{S_{k}^{i,j,c,l}}{n}-s_{\frac{k}{n}}^{i,j,c,l}\right)\right|+2\epsilon\nonumber \\
 & \leq & \sum_{i\lor j<M^{\epsilon},0\leq c\leq i}j\left|P_{n}(i,j,c)-p(i,j,c)\right|+\sum_{(i,j,c,l)\in\Gamma^{\epsilon}}j\left|\frac{S_{k}^{i,j,c,l}}{n}-s_{\frac{k}{n}}^{i,j,c,l}\right|+2\epsilon\nonumber \\
 & \leq & M^{\epsilon}|\Gamma^{\epsilon}|\left(o(1)+o_{p}(1)\right)+2\epsilon=o_{p}(1)+2\epsilon,\nonumber \\
\label{eq:D-Tod-_WithInt}
\end{eqnarray}
and similarly, 
\begin{eqnarray}
\left|\frac{D_{k}}{n}-d_{\frac{k}{n}}\right| & \leq & \left|\sum_{i\lor j<M^{\epsilon},0\leq c\leq i}P_{n}(i,j,c)-\sum_{(i,j,c,l)\in\Gamma^{\epsilon}}\frac{S_{k}^{i,j,c,l}}{n}-(\sum_{i\lor j<M^{\epsilon},0\leq c\leq i}p(i,j,c)-\sum_{(i,j,c,l)\in\Gamma^{\epsilon}}s_{\frac{k}{n}}^{i,j,c,l})\right|\nonumber \\
 &  & +2\epsilon\nonumber \\
 & = & \left|\sum_{i\lor j<M^{\epsilon},0\leq c\leq i}\left(P_{n}(i,j,c)-p(i,j,c)\right)-\sum_{(i,j,c,l)\in\Gamma^{\epsilon}}\left(\frac{S_{k}^{i,j,c,l}}{n}-s_{\frac{k}{n}}^{i,j,c,l}\right)\right|+2\epsilon\nonumber \\
 & \leq & \sum_{i\lor j<M^{\epsilon},0\leq c\leq i}\left|P_{n}(i,j,c)-p(i,j,c)\right|+\sum_{(i,j,c,l)\in\Gamma^{\epsilon}}\left|\frac{S_{k}^{i,j,c,l}}{n}-s_{\frac{k}{n}}^{i,j,c,l}\right|+2\epsilon\nonumber \\
 & \leq & |\Gamma^{\epsilon}|\left(o(1)+o_{p}(1)\right)+2\epsilon=o_{p}(1)+2\epsilon.\nonumber \\
\label{eq:DTodWithInt}
\end{eqnarray}
\end{proof}
To summarize the results we have so far, we have shown in \ref{prop:Sconv}
and (\ref{prop:DD_conv}) that the state variable $S_{k}$, the accumulative
interventions $IT_{k}$, the number of defaults $D_{k}$ and the number
of unrevealed out links from the default set $D_{k}^{-}$ after being
scaled by $n$ all converge to a deterministic process which depends
on the solution of the system of ODEs in \defref{s}. This convergence
applies to any $k$ before $n\hat{\lambda}$. Recall that $IT_{n}\coloneqq IT_{T_{n}}$
and $D_{n}\coloneqq D_{T_{n}}$. By \defref{Def_D}, $T_{n}=\min\{0\leq k\leq m:D_{k}^{-}=0\}$.
Additionally define $\tau_{f}=\inf\{0\leq\tau\leq\lambda:d_{\tau}^{-}=0\}$.
Next we show that when $\frac{T_{n}}{n}$ converges to $\tau_{f}$,
then $\frac{IT_{n}}{n}$ and $\frac{D_{n}}{n}$ also converge in probability
to the corresponding deterministic variables, $it_{\tau_{f}}$ and
$d_{\tau_{f}}$, which in light of the boundedness of $\frac{IT_{n}}{n}$
and $\frac{D_{n}}{n}$ further implies convergence in expectations.
\begin{prop}
\label{prop:Conv_obj}Consider a sequence of networks with initial
conditions $(P_{n})_{n\geq1}$ satisfying \assuref{Regularity} and
let $(G_{n})_{n\geq1}$ be the sequence of control policies for the
contagion processes on the sequence of networks and $(G_{n})_{n\geq1}$
satisfy \assuref{G_n} with the function $u$. If $\tau_{f}=\lambda$,
or $\tau_{f}<\lambda$ and $\frac{d}{d\tau}d_{\tau_{f}}^{-}<0$, it
follows that as $n\rightarrow\infty$,
\begin{eqnarray}
\frac{IT_{n}(G_{n},P_{n})}{n} & \overset{p}{\rightarrow} & it_{\tau_{f}}(u,p),\nonumber \\
\frac{D_{n}(G_{n},P_{n})}{n} & \overset{p}{\rightarrow} & d_{\tau_{f}}(u,p).\label{eq:ConvInProb}
\end{eqnarray}
where
\begin{align}
it_{\tau_{f}} & =\int_{0}^{\tau_{f}}\sum_{(i,j,c,c-1)\in\Phi}\frac{(i-c+1)s_{t}^{i,j,c,c-1}}{\lambda-t}u^{i,j,c,c-1}(t)dt.\label{eq:it_fullrange}
\end{align}
Further it follows that as $n\rightarrow\infty$,
\begin{eqnarray}
\mathbb{E}\frac{IT_{n}(G_{n},P_{n})}{n} & \rightarrow & it_{\tau_{f}}(u,p),\nonumber \\
\mathbb{E}\frac{D_{n}(G_{n},P_{n})}{n} & \rightarrow & d_{\tau_{f}}(u,p).\label{eq:ConvInExpectation}
\end{eqnarray}
\end{prop}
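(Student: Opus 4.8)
The plan is to reduce the whole statement to the single fact $\tfrac{T_n}{n}\overset{p}{\to}\tau_f$ and then push this through the fluid–limit estimates already available. Once $\tfrac{T_n}{n}\overset{p}{\to}\tau_f$ is in hand, for the case $\tau_f<\lambda$ I would fix $\epsilon,\delta>0$ small enough that $\tau_f+\delta<\lambda-\epsilon=\hat\lambda$ and that $d^-_\tau<0$ on $(\tau_f,\tau_f+\delta]$; then with high probability $T_n\le n(\tau_f+\delta)\le n\hat\lambda$, so \propref{DD_conv} may be evaluated at the random index $k=T_n$, giving $|\tfrac{D_n}{n}-d_{T_n/n}|\le o_p(1)+2\epsilon$, and \propref{Sconv} likewise gives $|\tfrac{\tilde{IT}_{T_n}}{n}-\tilde{it}_{T_n/n}|\le o_p(1)+2\epsilon$. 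The truncated quantities differ from the true ones by $O(\epsilon)$: the total equity ever injected into nodes with $i\vee j\ge M^\epsilon$ is at most $n\sum_{i\vee j\ge M^\epsilon,c}i P_n(i,j,c)<n\epsilon$ (each such node is intervened at most $i-c+1\le i$ times), and the tail of the integral defining $it$ is bounded the same way using $s^{i,j,c,c-1}_t\le p(i,j,c)$ and $(i-c+1)\le i$. Since $d_\tau$ and $it_\tau$ are continuous on $[0,\lambda]$ — visible from the explicit piecewise formulas of \propref{ODE_Sol_stau}, there being only finitely many intervals of constancy of $u$, and the singular factor $1/(\lambda-t)$ being cancelled by $(\lambda-t)^{i-c+1}$ in $s^{i,j,c,c-1}_t$ — the continuous mapping theorem gives $d_{T_n/n}\overset{p}{\to}d_{\tau_f}$ and $it_{T_n/n}\to it_{\tau_f}$; letting $\epsilon\downarrow0$ yields the two convergences in probability.

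For $\tfrac{T_n}{n}\overset{p}{\to}\tau_f$ I would argue the two inequalities separately. Lower bound: $d^-$ is continuous and strictly positive on $[0,\tau_f)$, so for any $\delta>0$ it attains a minimum $\eta>0$ on $[0,\tau_f-\delta]$; choosing $\epsilon<\eta/4$ in \propref{DD_conv} forces $\tfrac{D^-_k}{n}\ge\eta-o_p(1)-2\epsilon>0$, hence $D^-_k\ge1$, for every $k\le n(\tau_f-\delta)$ with high probability, so the stopping time satisfies $T_n>n(\tau_f-\delta)$. Upper bound when $\tau_f<\lambda$: here $d^-_{\tau_f}=0$ and the hypothesis $\tfrac{d}{d\tau}d^-_{\tau_f}<0$ gives $c,\delta_0>0$ with $d^-_{\tau_f+\delta}\le-c\delta$ for $0<\delta\le\delta_0$; since $D^-_k\ge0$ always and the process is frozen once $D^-$ hits $0$, the event $\{T_n>n(\tau_f+\delta)\}$ would force $\tfrac{D^-_{\lfloor n(\tau_f+\delta)\rfloor}}{n}\ge0$, contradicting \propref{DD_conv} as soon as $\epsilon$ and the $o_p(1)$ term are below $\tfrac14 c\delta$; hence $T_n\le n(\tau_f+\delta)$ with high probability. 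When $\tau_f=\lambda$ the upper bound is immediate from $T_n\le m$ and $\tfrac{m}{n}\to\lambda$ under \assuref{Regularity}.

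The delicate case is $\tau_f=\lambda$, in which $T_n$ may exceed $n\hat\lambda$ so that \propref{Sconv} and \propref{DD_conv} no longer reach the stopping time; here I would use monotonicity. The count $\sum_{\alpha\in\Gamma}S^\alpha_k$ of initially–vulnerable, currently–liquid nodes is non-increasing in $k$ (every transition inside $\Gamma$, including the intervention transition $(i,j,c,c-1)\to(i,j,c+1,c)$, is internal to $\Gamma$, and a node leaves $\Gamma$ only by defaulting), so $D_k$ is non-decreasing and $D_{\lfloor n\hat\lambda\rfloor}\le D_n\le D_m$; the left end converges in probability to $d_{\hat\lambda}=d_{\lambda-\epsilon}$. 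For the right end, at step $m$ every in-half-link has been matched, so $l^v_m=d^-(v)$ and an initially-vulnerable node is still liquid iff it was pushed to the invulnerable state $(i,j,i+1,i)$; hence $\tfrac1nD_m=\sum_{i,j,0\le c\le i}P_n(i,j,c)-\tfrac1n\sum_{i,j}S^{i,j,i+1,i}_m$, and since $S^{i,j,i+1,i}_k$ is non-decreasing we may bound $S^{i,j,i+1,i}_m\ge S^{i,j,i+1,i}_{\lfloor n\hat\lambda\rfloor}$ and again invoke \propref{Sconv}. This sandwiches $\tfrac{D_n}{n}$ to within $O(\epsilon)+o_p(1)$ of $d_{\lambda-\epsilon}$; since $d_\tau$ is non-decreasing (summing the ODEs of \defref{s} as in the proof of \propref{DD_conv}) and continuous, $d_{\lambda-\epsilon}\to d_\lambda=d_{\tau_f}$, and the identical scheme applied to the non-decreasing sequence $IT_k$ handles $\tfrac{IT_n}{n}\to it_{\tau_f}$. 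Finally the convergences in expectation follow from the convergences in probability by bounded convergence, using $0\le\tfrac{D_n}{n}\le1$ and $0\le\tfrac{IT_n}{n}\le\tfrac{m}{n}+1\le\lambda+2$ for $n$ large.

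The main obstacle, therefore, is precisely this endpoint bookkeeping: matching a fluid approximation valid only on $[0,n\hat\lambda]$ to a stopping time that can sit all the way out at $m$, which is resolved by monotonicity of $(S_k^\alpha)$-aggregates together with the combinatorial description of the fully revealed configuration at step $m$. A secondary point worth being explicit about is that the transversality hypothesis $\tfrac{d}{d\tau}d^-_{\tau_f}<0$ is exactly what prevents $d^-$ from grazing zero and rebounding; without it $T_n/n$ need not concentrate, since the finite-$n$ process could stop at the grazing point with non-vanishing probability while the fluid limit continues.
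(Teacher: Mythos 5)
Your argument is correct and is, in substance, the paper's own proof: truncate to degrees $i\lor j<M^{\epsilon}$ with an $O(\epsilon)$ (respectively $\epsilon\ln(\lambda/\epsilon)$ for the integral) error, use \propref{Sconv} and \propref{DD_conv} on $[0,n\hat\lambda]$, locate the stopping time via positivity of $d_{\tau}^{-}$ before $\tau_{f}$ and the strict-negativity/contradiction argument after it, invoke continuity of $d_{\tau}$ and $it_{\tau}$, let $\epsilon\downarrow0$, and pass to expectations by boundedness (the paper says uniform integrability). Your quantifier order in the lower bound for $T_{n}$ (fix $\delta$, then choose $\epsilon<\eta/4$ with $\eta=\min_{[0,\tau_{f}-\delta]}d_{\tau}^{-}$) is actually spelled out more carefully than in the paper, and your closing remark about the role of $\frac{d}{d\tau}d_{\tau_{f}}^{-}<0$ matches the paper's implicit use of it. The one place you genuinely diverge is the endpoint case $\tau_{f}=\lambda$: the paper simply uses the bounded-increment property $|D_{k+1}-D_{k}|\le1$, $|IT_{k+1}-IT_{k}|\le1$ to bridge the at most $\approx n\epsilon$ steps between $\lfloor n\hat\lambda\rfloor$ and $T_{n}\le m$, whereas you sandwich $D_{n}$ using monotonicity of $D_{k}$ and a terminal-configuration count. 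Your version works, but note two small points: the statement that at step $m$ all in half-links are matched is not literally available, since the process may terminate at $T_{n}<m$; the bound you need, $D_{k}\le n\sum P_{n}(i,j,c)-\sum_{i,j}S_{k}^{i,j,i+1,i}$ together with monotonicity of $S_{k}^{i,j,i+1,i}$, already follows from the definition of $D_{k}$ without appealing to full revelation. Also, for $IT_{n}$ the ``identical scheme'' has no terminal-configuration formula, and what it reduces to is precisely the paper's step-count/bounded-increment bound ($IT_{m}-IT_{\lfloor n\hat\lambda\rfloor}\le m-\lfloor n\hat\lambda\rfloor\approx n\epsilon$), so in that sub-case the paper's route is the simpler one.
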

\begin{proof}
By (\ref{eq:HighOrderTerms_Pn}) for $n$ large enough and $1\leq k\leq m$,
we have
\begin{align*}
 & \frac{1}{n}\sum_{\ell=1}^{k}\sum_{i\lor j\geq M^{\epsilon}}\sum_{1\leq c\leq i}\mathbbm1_{(W_{\ell}=(i,j,c,c-1))}u^{i,j,c,c-1}(\frac{\ell}{n})\\
\leq & \frac{1}{n}\sum_{i\lor j\geq M^{\epsilon}}\sum_{1\leq c\leq i}iS_{k}^{i,j,c,c-1}\\
\leq & \sum_{i\lor j\geq M^{\epsilon},c}iP_{n}(i,j,c)<\epsilon.
\end{align*}
Similarly by (\ref{eq:HighOrderTerms_p}), for $\tau<\hat{\lambda}$,
\begin{align*}
 & \int_{0}^{\tau}\sum_{i\lor j\geq M^{\epsilon},1\leq c\leq i}\frac{(i-c+1)s_{t}^{i,j,c,c-1}}{\lambda-t}u^{i,j,c,c-1}(t)dt\\
\leq & \int_{0}^{\tau}\sum_{i\lor j\geq M^{\epsilon},1\leq c\leq i}\frac{ip(i,j,c)}{\lambda-t}dt\\
\leq & \epsilon\int_{0}^{\tau}\frac{1}{\lambda-t}dt\\
= & \epsilon\ln\frac{\lambda}{\lambda-\tau}\\
\leq & \epsilon\ln\frac{\lambda}{\epsilon}=O(\epsilon).
\end{align*}

For any $k$ where $0\leq k\leq n\hat{\lambda}$, by \propref{Sconv}
it follows that
\begin{align*}
\left|\frac{IT_{k}}{n}-it_{\frac{k}{n}}\right| & \leq|\frac{\tilde{IT}_{k}}{n}+\frac{1}{n}\sum_{\ell=1}^{k}\sum_{i\lor j\geq M^{\epsilon}}\sum_{1\leq c\leq i}\mathbbm1_{(W_{\ell}=(i,j,c,c-1))}u^{i,j,c,c-1}(\frac{\ell}{n})\\
 & -(\tilde{it}_{\frac{k}{n}}+\int_{0}^{\tau}\sum_{i\lor j\geq M^{\epsilon},1\leq c\leq i}\frac{(i-c+1)s_{t}^{i,j,c,c-1}}{\lambda-t}u^{i,j,c,c-1}(t)dt)|\\
 & \leq\left|\frac{\tilde{IT}_{k}}{n}-\tilde{it}_{\frac{k}{n}}\right|+O(\epsilon)\\
 & \leq o_{p}(1)+O(\epsilon),
\end{align*}
thus we have that

\begin{align}
\sup_{0\leq k\leq n\hat{\lambda}}\left|\frac{IT_{k}}{n}-it_{\frac{k}{n}}\right| & =o_{p}(1)+O(\epsilon).\label{eq:ITitConv}
\end{align}

If $\tau_{f}=\lambda$, it implies that $d_{\tau}^{-}>0$ for $\tau\in(0,\hat{\lambda})$,
then it follows from \propref{DD_conv} that $\frac{T_{n}}{n}=\hat{\lambda}+O(\epsilon)+o_{p}(1)$.
Then by the bounded change (\ref{eq:BoundedChange}), $\frac{D_{T_{n}}}{n}=\frac{D_{\left\lfloor n\hat{\lambda}\right\rfloor }}{n}+O(\epsilon)+o_{p}(1)$
and from \propref{DD_conv} again, $\frac{D_{\left\lfloor n\hat{\lambda}\right\rfloor }}{n}=d_{\hat{\lambda}}+O(\epsilon)+o_{p}(1)$.
$\left\lfloor \cdot\right\rfloor $ denotes the floor function. Further
by the continuity of $d_{\tau}$ on $[0,\lambda]$, $\frac{D_{T_{n}}}{n}=d_{\lambda}+O(\epsilon)+o_{p}(1)$.
Similarly, by (\ref{eq:ITitConv}) and the continuity of $it_{\tau}$
on $[0,\lambda]$, we have that $\frac{IT_{T_{n}}}{n}=it_{\lambda}+O(\epsilon)+o_{p}(1)$.

If $\tau_{f}<\lambda$ and $\frac{d}{d\tau}d_{\tau_{f}}^{-}<0$, by
\defref{s}, $s_{\tau}$ is continuous and thus by (\ref{eq:EqOfdWithInt})
$d_{\tau}^{-}$ is also continuous. So there exists some $\tau'>0$
such that $d_{\tau}^{-}<0$ for $\tau\in(\tau_{f},\tau_{f}+\tau')$
by the continuity of $d_{\tau}^{-}$. Since $\epsilon$ is arbitrary,
let $\epsilon$ be small enough such that $\inf_{\tau\in(\tau_{f},\tau_{f}+\tau')}d_{\tau}^{-}<-2\epsilon$
and $\hat{\tau}$ be the first time $d_{\tau}^{-}$ reaches the minimum.
Because $d_{\hat{\tau}}^{-}<-2\epsilon$, then by \propref{DD_conv}
$\frac{D_{\left\lfloor n\hat{\tau}\right\rfloor }^{-}}{n}<0$ with
high probability, so it holds that $\frac{T_{n}}{n}=\tau_{f}+O(\epsilon)+o_{p}(1)$.
Again by the continuity of $d_{\tau}$ and $it_{\tau}$ on $[0,\lambda]$,
\propref{DD_conv} and (\ref{eq:ITitConv}), $\frac{D_{T_{n}}}{n}=d_{\tau_{f}}+O(\epsilon)+o_{p}(1)$
and $\frac{IT_{T_{n}}}{n}=it_{\tau_{f}}+O(\epsilon)+o_{p}(1)$.

Recall that $IT_{n}\coloneqq IT_{T_{n}}$ and $D_{n}\coloneqq D_{T_{n}}$.
In both cases we conclude that (\ref{eq:ConvInProb}) holds by tending
$\epsilon\rightarrow0$.

To prove (\ref{eq:ConvInExpectation}), since $IT_{n}\leq m\leq(\lambda+0.1)n$
for large $n$ and $D_{n}\leq n$, $\frac{IT_{n}(G_{n},P_{n})}{n}$
and $\frac{D_{n}(G_{n},P_{n})}{n}$ are bounded and thus uniformly
integrable. For a sequence of uniformly integrable random variables,
convergence in probability implies convergence in expectation. Therefore
(\ref{eq:ConvInExpectation}) holds.
\end{proof}
Under the conditions in \propref{Conv_obj}, the asymptotic control
problem (\ref{eq:ACP}) becomes

\begin{equation}
\min_{u\in\Pi}\:K\cdot it_{\tau_{f}}(u,p)+d_{\tau_{f}}(u,p).\label{eq:OCP_Prior}
\end{equation}

In the following we write $u^{\beta}(\tau)=u_{\tau}^{\beta}$ and
let $u_{\tau}=(u_{\tau}^{\beta})_{\beta\in\Phi}$ and $u=(u_{\tau})_{\tau\in[0,\lambda]}$.

Substituting the expressions of $it_{\tau_{f}}(u,p)$ and $d_{\tau_{f}}(u,p)$
in (\ref{eq:it_fullrange}) and (\ref{eq:EqOfdWithInt}) respectively
into (\ref{eq:OCP_Prior}) and putting together the system of ordinary
differential equations of $s_{\tau}=(s_{\tau}^{\alpha})_{\alpha\in\Gamma}$,
$\frac{d}{d\tau}s_{\tau}=h(\tau,s_{\tau};u_{\tau})$ in \defref{s}
as well as the condition that determines $\tau_{f}$, $d_{\tau_{f}}^{-}=0$,
we attain the following deterministic optimal control problem. 
\begin{flushleft}
\begin{alignat}{1}
\tag{OCP}\min_{u,\tau_{f}} & \;K\cdot it_{\tau_{f}}(u,p)+d_{\tau_{f}}(u,p)\label{eq:NonAOCP}\\
\text{st}\nonumber \\
 & \frac{d}{d\tau}s_{\tau}=h(\tau,s_{\tau};u_{\tau})\nonumber \\
 & s_{0}^{i,j,c,l}=p(i,j,c)\mathbbm1_{(l=0)}\nonumber \\
 & d_{\tau_{f}}^{-}=0\nonumber \\
 & u_{\tau}^{\beta}\in\{0,1\}\;\forall\beta\in\Phi\nonumber \\
 & \tau_{f}\in[0,\lambda),\nonumber 
\end{alignat}
where $\frac{d}{d\tau}s_{\tau}=h(\tau,s_{\tau};u_{\tau})$ is defined
as in \defref{s} with $u^{\beta}(\tau)=u_{\tau}^{\beta}$ and
\begin{align}
it_{\tau_{f}}(u,p) & =\int_{0}^{\tau_{f}}\sum_{i,j,1\leq c\leq i}\frac{(i-c+1)s_{\tau}^{i,j,c,c-1}}{\lambda-\tau}u_{\tau}^{i,j,c,c-1}d\tau,\nonumber \\
d_{\tau_{f}}(u,p) & =\sum_{i,j,0\leq c\leq i}p(i,j,c)-\sum_{(i,j,c,l)\in\Gamma}s_{\tau_{f}}^{i,j,c,l},\nonumber \\
d_{\tau_{f}}^{-} & =\sum_{i,j,0\leq c\leq i}jp(i,j,c)-\sum_{(i,j,c,l)\in\Gamma}js_{\tau_{f}}^{i,j,c,l}-\tau_{f}.
\end{align}
\par\end{flushleft}

Some difficulties arise because (\ref{eq:NonAOCP}) is a infinite
dimensional optimal control problem. We provide a two step algorithm
to show that in light of \assuref{Regularity}, it suffices to solve
a finite dimensional problem to approximate the objective function
of the infinite dimensional problem. First we define the finite dimensional
optimal control problem.
\begin{defn}
\label{def:FOCP}For $\epsilon>0$, recall $M^{\epsilon}$ as in \defref{Mepsilon}.
Define the finite dimensional optimal control problem (FOCP) as (\ref{eq:NonAOCP})
with the indexes $(i,j)$ restricted to $i\lor j<M^{\epsilon}$. 
\end{defn}
\begin{rem}
\label{rem:BoundedDegree}The restriction of $(i,j)$ to $i\lor j<M^{\epsilon}$
indicates that we use only $p(i,j,c)$, $i\lor j<M^{\epsilon},$ $0\leq c\leq i$
in the calculation. It is equivalent to setting $p(i,j,c)=0$, $i\lor j\geq M^{\epsilon}$,
$0\leq c\leq i$ while keeping $p(i,j,c)$, $i\lor j<M^{\epsilon}$,
$0\leq c\leq i$ unchanged, which implies asymptotically nodes with
$i\lor j\geq M^{\epsilon}$ are all invulnerable. By the solution
of the ODE in \propref{ODE_Sol_stau}, it implies that $s_{\tau}^{\beta}=0$,
for $\beta\in\Gamma\backslash\Gamma^{\epsilon}.$ Note we use tilde
sign with the variables to indicate the indexes $(i,j)$ are in the
range $i\lor j<M^{\epsilon}$, for example, $\tilde{it}_{\tau_{f}}$,
$\tilde{d}_{\tau_{f}}$ and $\tilde{d}_{\tau_{f}}^{-}$. 

We have the following lemma regarding the objective functions of the
infinite and finite dimensional optimal control problems.
\end{rem}
\begin{lem}
\label{lem:TwoOCP}Let $j(u,\tau_{f},p)\coloneqq K\,it_{\tau_{f}}(u,p)+d_{\tau_{f}}(u,p)$
be the objective function for the unrestricted (\ref{eq:NonAOCP})
and $\tilde{j}(u,\tau_{f},p)\coloneqq K\,\tilde{it}{}_{\tau_{f}}(u,p)+\tilde{d}_{\tau_{f}}(u,p)$
for FOCP. Let $(u^{*},\tau_{f}^{*})$ be the optimal solution for
the unrestricted (\ref{eq:NonAOCP}) and $(\tilde{u},\tilde{\tau}_{f})$
be the optimal solution for FOCP, then for the same $p$ we have that
\[
\left|\tilde{j}(\tilde{u},\tilde{\tau}_{f},p)-j(u^{*},\tau_{f}^{*},p)\right|<(K+1)\epsilon.
\]
\end{lem}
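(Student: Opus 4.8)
The plan is to show two things: first, that the optimal value of FOCP is not much larger than the optimal value of the unrestricted (\ref{eq:NonAOCP}), and second, that it is not much smaller. Since FOCP is literally (\ref{eq:NonAOCP}) with $p(i,j,c)$ set to zero for $i\lor j\ge M^{\epsilon}$ (see \remref{BoundedDegree}), the two problems differ only through the mass $\sum_{i\lor j\ge M^{\epsilon},c}p(i,j,c)$, which by \defref{Mepsilon} contributes at most $\epsilon$ to degree-weighted sums. The key quantitative input is that for any fixed piecewise-constant control $u$ and any $\tau_f\in[0,\lambda)$, the bounds $\sum_{c,l}s_\tau^{i,j,c,l}\le\sum_{1\le c\le i}p(i,j,c)$ (from \propref{ODE_Sol_stau}, monotonicity of the total mass for each $(i,j)$) and the integral bound $\int_0^{\tau_f}\sum_{i\lor j\ge M^{\epsilon},1\le c\le i}\frac{(i-c+1)s_\tau^{i,j,c,c-1}}{\lambda-\tau}u_\tau\,d\tau \le \epsilon\ln\frac{\lambda}{\lambda-\tau_f}=O(\epsilon)$ — both already derived in the proof of \propref{Conv_obj} — let us control the tail contributions to $it$, $d$ and $d^-$ uniformly in $u$.

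First I would bound $\tilde j(\tilde u,\tilde\tau_f,p)\le j(u^{*},\tau_f^{*},p)+(K+1)\epsilon$ (roughly). Take the optimal unrestricted control $u^{*}$ and its $\tau_f^{*}$. Feed $u^{*}$ (restricted to $\Phi^{\epsilon}$) into FOCP; call $\tilde\tau_f$ the corresponding stopping time for the restricted dynamics. One shows $|\tilde{it}_{\tilde\tau_f}(u^{*})-it_{\tau_f^{*}}(u^{*})|=O(\epsilon)$ and $|\tilde d_{\tilde\tau_f}(u^{*})-d_{\tau_f^{*}}(u^{*})|=O(\epsilon)$: the $it$ part follows from the integral tail bound above, and the $d$ part from the fact that both $d_\tau$ and $\tilde d_\tau$ differ from the initially-defaulted-plus-vulnerable count by tail terms of size $<\epsilon$. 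A mild point needing care is matching $\tilde\tau_f$ to $\tau_f^{*}$: this uses continuity of $d^-$ and the sign condition at $\tau_f^{*}$ exactly as in \propref{Conv_obj}, so for clean bounds I would invoke that argument rather than reprove it. Then $\tilde j(\tilde u,\tilde\tau_f,p)\le\tilde j(u^{*},\tilde\tau_f,p)\le j(u^{*},\tau_f^{*},p)+(K+1)\epsilon$.

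Conversely, to get $j(u^{*},\tau_f^{*},p)\le\tilde j(\tilde u,\tilde\tau_f,p)+(K+1)\epsilon$, take the FOCP-optimal $\tilde u$, extend it by $0$ on $\Phi\setminus\Phi^{\epsilon}$, and run the full dynamics; the same tail estimates give that the full objective at this control exceeds $\tilde j(\tilde u,\tilde\tau_f,p)$ by at most $(K+1)\epsilon$, and $j(u^{*},\tau_f^{*},p)$ is no larger than the full objective at any admissible control. Combining the two inequalities gives $|\tilde j(\tilde u,\tilde\tau_f,p)-j(u^{*},\tau_f^{*},p)|<(K+1)\epsilon$. The main obstacle is not the tail estimates, which are routine given what precedes, but the bookkeeping around the two different terminal times $\tau_f^{*}$ and $\tilde\tau_f$ and the hypotheses needed for $d^-$ (resp.\ $\tilde d^-$) to actually hit zero with a transversal crossing; I would handle this by noting that the tail of $d^-$ is monotone-controlled (the restricted and unrestricted $d^-$ differ by a quantity in $[-\epsilon,\epsilon]$ plus the constant $0$ since both subtract $\tau$), so a crossing of one forces an $O(\epsilon)$-close crossing of the other, and shrinking $\epsilon$ is harmless since the final bound is linear in $\epsilon$.
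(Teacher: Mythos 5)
Your first direction is fine, and in fact can be made exact: because the ODEs in \defref{s} are separable across $(i,j)$ classes, feeding $u^{*}|_{\Phi^{\epsilon}}$ into FOCP leaves the low-degree components of $s_{\tau}$ unchanged, the restricted $\tilde{d}^{-}_{\tau}$ is dominated by $d^{-}_{\tau}$, so $\tilde{\tau}_{f}\leq\tau_{f}^{*}$ and monotonicity of the class totals gives $\tilde{j}(\tilde{u},\tilde{\tau}_{f},p)\leq j(u^{*},\tau_{f}^{*},p)$ with no $\epsilon$ at all; this is exactly the paper's ``lower bound'' step, which it argues by treating the nodes with $i\lor j\geq M^{\epsilon}$ as invulnerable.

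The genuine gap is in your converse direction. You extend $\tilde{u}$ by $0$ on $\Phi\setminus\Phi^{\epsilon}$ and claim ``the same tail estimates'' bound the full objective by $\tilde{j}(\tilde{u},\tilde{\tau}_{f},p)+(K+1)\epsilon$. That does not follow. Under the zero extension the high-degree nodes are vulnerable and unprotected; their defaults inject up to $\epsilon$ (scaled) additional out-links into the system, so $d_{\tau}^{-}=\tilde{d}_{\tau}^{-}+e_{\tau}$ with $0\leq e_{\tau}\leq\epsilon$ and the unrestricted cascade continues past $\tilde{\tau}_{f}$. The extra \emph{low-degree} defaults accumulated on $[\tilde{\tau}_{f},\tau_{f}]$ are of order $(\tau_{f}-\tilde{\tau}_{f})$ times the contagion rates, and $\tau_{f}-\tilde{\tau}_{f}$ is controlled only by $\epsilon$ divided by the slope of $\tilde{d}^{-}$ at its zero: your assertion that ``a crossing of one forces an $O(\epsilon)$-close crossing of the other'' is precisely a transversality statement ($\frac{d}{d\tau}\tilde{d}^{-}_{\tilde{\tau}_{f}}<0$, quantitatively) that is not a hypothesis of \lemref{TwoOCP}, and even granting it you get a constant depending on that slope, not the clean $(K+1)\epsilon$; with a tangential zero the extra defaults need not be small at all. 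The paper's proof is built specifically to avoid this amplification: it extends $\tilde{u}$ by $1$ on \emph{all} of $\Phi$ after $\tilde{\tau}_{f}$ (intervene on every selected distance-one node), realized at the process level by the two-phase revelation order allowed by \remref{ConfigModel} (reveal low-degree out-links first, then the deferred high-degree ones). With full intervention in the second phase there are \emph{no} additional defaults whatsoever, the extra defaults are only the high-degree nodes themselves (tail mass $O(\epsilon)$), and the extra interventions are bounded by the tail out-degree mass $\sum_{i\lor j\geq M^{\epsilon},c}jP_{n}(i,j,c)<\epsilon$, yielding $j(u,\tau_{f},p)\leq\tilde{j}(\tilde{u},\tilde{\tau}_{f},p)+(K+1)\epsilon$ directly and then the sandwich via optimality of $(u^{*},\tau_{f}^{*})$. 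Replacing your zero extension by this all-ones extension (and invoking \propref{Conv_obj} as the paper does to pass to the limits) repairs the argument; as written, the bound you need does not follow from the tail estimates alone.
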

\begin{proof}
First we give an algorithm to construct a policy $u$ based on $\tilde{u}$.
Let $(i,j,c,l)$ be the state of a node, so $i\lor j\geq M^{\epsilon}$
indicates that the node has either in or out degree greater than $M^{\epsilon}$
and $i\lor j<M^{\epsilon}$ indicates that the in and out degree is
bounded by $M^{\epsilon}$. 
\begin{lyxalgorithm}
\label{alg:TwoStep}A two step algorithm.
\end{lyxalgorithm}
\begin{enumerate}
\item Assume all nodes with $i\lor j\geq M^{\epsilon}$ are invulnerable
for now. This is because as stated in \remref{ConfigModel}, when
constructing the configuration model sequentially, we have the freedom
to reveal the out links only from the defaulted nodes with bounded
degrees $i\lor j<M^{\epsilon}$ until all the out links from defaulted
nodes with $i\lor j<M^{\epsilon}$ have been revealed. Let $\tilde{T}_{n}$
be the stopping time with $\tilde{D}_{\tilde{T}_{n}}$and $\tilde{IT}_{\tilde{T}_{n}}$
being the number of defaults and interventions by $\tilde{T}_{n}$,
respectively. Accordingly we can solve FOCP for the optimal $(\tilde{u},\tilde{\tau}_{f})$
with the corresponding $\tilde{d}_{\tilde{\tau}_{f}}^{-}$. As in
the proof of \propref{Conv_obj}, assume that $\tilde{\tau}_{f}=\lambda$,
or $\tilde{\tau}_{f}<\lambda$ and $\frac{d}{d\tau}\tilde{d}_{\tilde{\tau}_{f}}^{-}<0$,
then $\frac{\tilde{T}_{n}}{n}\overset{p}{\rightarrow}\tilde{\tau}_{f}$.
\item But the nodes with $i\lor j\geq M^{\epsilon}$ may have defaulted
by $\tilde{T}_{n}$. Next we look at these nodes, which consist of
initially defaulted nodes with $i\lor j\geq M^{\epsilon}$ and those
with $i\lor j\geq M^{\epsilon}$ that default because of the defaults
of nodes with $i\lor j<M^{\epsilon}$ in the first step. Now we reveal
their out links and we adopt a control policy to intervention at every
step until the end of the contagion process $T_{n}$. Under this policy,
there will be no additional defaults between $\tilde{T}_{n}$ and
$T_{n}$. Recall $D_{T_{n}}$ and $IT_{T_{n}}$ are the number of
defaults and interventions by $T_{n}$. 
\end{enumerate}
Through the above two steps, we construct a function $u$ as $u_{\tau}=\tilde{u}_{\tau}$,
$\tau\leq\tilde{\tau}_{f}$ and $u_{\tau}^{\beta}=1$, $\tilde{\tau}_{f}<\tau$,
$\beta\in\Phi$. Thus we have that
\begin{align*}
\frac{D_{T_{n}}}{n} & =\frac{\tilde{D}_{\tilde{T}_{n}}}{n}+\frac{D_{T_{n}}-\tilde{D}_{\tilde{T}_{n}}}{n}\\
 & \leq\frac{\tilde{D}_{\tilde{T}_{n}}}{n}+\sum_{i\lor j\geq M^{\epsilon},0\leq c\leq i}P_{n}(i,j,c),\\
\frac{IT_{T_{n}}}{n} & =\frac{\tilde{IT}{}_{\tilde{T}_{n}}}{n}+\frac{IT_{T_{n}}-\tilde{IT}{}_{\tilde{T}_{n}}}{n}\\
 & \leq\frac{\tilde{IT}{}_{\tilde{T}_{n}}}{n}+\sum_{i\lor j\geq M^{\epsilon},0\leq c\leq i}jP_{n}(i,j,c).
\end{align*}
Note that the first inequality holds because the number of defaulted
nodes with $i\lor j\geq M^{\epsilon}$ by $T_{n}$ is bounded above
by the number of initially defaulted and initially vulnerable nodes
with $i\lor j\geq M^{\epsilon}$. The second inequality holds because
the number of interventions after $\tilde{T}_{n}$ is bounded above
by the number of all out links from the nodes with $i\lor j\geq M^{\epsilon}$.
Taking expectation on both sides and letting $n\rightarrow\infty$,
from \propref{Conv_obj} and (\ref{eq:HighOrderTerms_Pn}) it follows
that
\begin{align*}
d_{\tau_{f}} & \leq\tilde{d}_{\tilde{\tau}_{f}}+\epsilon,\\
it_{\tau_{f}} & \leq\tilde{it}_{\tilde{\tau}_{f}}+\epsilon.
\end{align*}
By the definition of $j$ and $\tilde{j}$,
\[
j(u,\tau_{f},p)\leq\tilde{j}(\tilde{u},\tilde{\tau}_{f},p)+(K+1)\epsilon.
\]
Recall $(u^{*},\tau_{f}^{*})$ is the optimal solution for the unrestricted
(\ref{eq:NonAOCP}). Moreover, in the first step of \algref{TwoStep}
we treat all nodes with $i\lor j\geq M^{\epsilon}$ as invulnerable
and $(\tilde{u},\tilde{\tau}_{f})$ is the optimal solution, so it
provides the lower bound for the optimal objective function of the
unrestricted (\ref{eq:NonAOCP}), i.e. 
\[
\tilde{j}(\tilde{u},\tilde{\tau}_{f},p)\leq j(u^{*},\tau_{f}^{*},p).
\]

For the function $u$ we introduce through the two steps, we can calculate
$\tau_{f}$ and the objective function value $j(u,\tau_{f},p)$. Then
by the optimality of $(u^{*},\tau_{f}^{*})$, we have that
\[
j(u^{*},\tau_{f}^{*},p)\leq j(u,\tau_{f},p).
\]
In sum, we have that
\[
\tilde{j}(\tilde{u},\tilde{\tau}_{f},p)\leq j(u^{*},\tau_{f}^{*},p)\leq j(u,\tau_{f},p)\leq\tilde{j}(\tilde{u},\tilde{\tau}_{f},p)+(K+1)\epsilon,
\]
Thus the conclusion follows.
\end{proof}
By \lemref{TwoOCP} we only need to solve the finite dimensional optimal
control problem (FOCP) in \ref{def:FOCP}. Because $\epsilon$ can
be arbitrarily small, we can approximate the objective function of
the infinite dimensional problem to any precision. Given $p$ for
FOCP, the Pontryagin's maximum principle provides the necessary conditions
for the optimal control $\tilde{u}$ and end time $\tilde{\tau}_{f}$.
We can obtain the optimal asymptotic number of interventions $\tilde{it}{}_{\tilde{\tau}_{f}}$
and fraction of final defaults $\tilde{d}{}_{\tilde{\tau}_{f}}$,
which lead to the main results of our work. In the next section we
focus on solving FOCP and suppress the tilde sign for the variables
for notational convenience.

\section{\label{subsec:NecessaryCondition}Necessary conditions for the optimal
control problem}

In the following we solve the finite dimensional optimal control problem
(FOCP) in \defref{FOCP}. Throughout this section we understand that
the degrees are in the bounded range $i\lor j<M^{\epsilon}$ unless
specified otherwise. We also suppress the tilde sign for notational
convenience. Let $t=t(\tau)\coloneqq-\ln(\lambda-\tau)$, $t_{0}\coloneqq t(0)=-\ln\lambda$
and $t_{f}\coloneqq-\ln(\lambda-\tau_{f})$. Note $t(\tau)$ is a
strictly increasing function of $\tau$ and so is the inverse function
$\tau=\tau(t)$. We remark that we assume in the following that $\tau<\lambda$
which implies $t_{f}<\infty$, but later we can see that the solutions
of $s_{\tau}$, $u_{\tau}$ and $w_{\tau}$ do allow $\tau=\lambda$.
Then we can reformulate the optimal control problem (\ref{eq:NonAOCP})
into an autonomous one, i.e. the differential equations of the system
dynamics do not depend on time explicitly. Let $u_{t}=(u_{t}^{\beta})_{\beta\in\Phi^{\epsilon}}$
and without any confusion $u=(u_{t})_{t\geq t_{0}}$ (previously $u=(u_{\tau})_{\tau\in[0,\lambda]}$).
\begin{alignat}{1}
\tag{AOCP}\min_{u,t_{f}} & \;K\cdot it_{t_{f}}(u,p)+d_{t_{f}}(u,p)\label{eq:AOCP}\\
\label{eq:Obj_AOCP}\\
\text{st}\nonumber \\
\text{} & \frac{d}{dt}s_{t}=h(s_{t};u_{t})\nonumber \\
 & s_{t_{0}}^{i,j,c,l}=p(i,j,c)\mathbbm1_{(l=0)}\nonumber \\
 & d_{t_{f}}^{-}=0\nonumber \\
 & u_{t}^{\beta}\in\{0,1\}\;\forall\beta\in\Phi^{\epsilon}\nonumber \\
 & t_{f}\in[0,\lambda),\nonumber 
\end{alignat}
where $\frac{d}{dt}s_{t}=h(s_{t};u_{t})$ denotes the system of differential
equations
\begin{align}
\frac{ds_{t}^{i,j,c,0}}{dt} & =-is_{t}^{i,j,c,0}\text{ for }1\leq c\leq i,\nonumber \\
\frac{ds_{t}^{i,j,c,l}}{dt} & =(i-l+1)s_{t}^{i,j,c,l-1}-(i-l)s_{t}^{i,j,c,l}\nonumber \\
 & \text{ for }3\leq c\leq i,\;1\leq l\leq c-2,\nonumber \\
\frac{ds_{t}^{i,j,c,c-1}}{dt} & =(i-c+2)s_{t}^{i,j,c-1,c-2}u_{t}^{i,j,c-1,c-2}+(i-c+2)s_{t}^{i,j,c,c-2}\nonumber \\
 & -(i-c+1)s_{t}^{i,j,c,c-1}\nonumber \\
 & \text{ for }2\leq c\leq i,\nonumber \\
\frac{ds_{t}^{i,j,i+1,i}}{dt} & =s_{t}^{i,j,i,i-1}u_{t}^{i,j,i,i-1},
\end{align}
and 
\begin{align}
it_{t_{f}}(u,p) & =\int_{t_{0}}^{t_{f}}\sum_{i,j,1\leq c\leq i}(i-c+1)s_{t}^{i,j,c,c-1}u_{t}^{i,j,c,c-1}dt,\nonumber \\
d_{t_{f}}(u,p) & =\sum_{i,j,0\leq c\leq i}p(i,j,c)-\sum_{(i,j,c,l)\in\Gamma^{\epsilon}}s_{t_{f}}^{i,j,c,l},\nonumber \\
d_{t_{f}}^{-} & =\sum_{i,j,0\leq c\leq i}jp(i,j,c)-\sum_{(i,j,c,l)\in\Gamma^{\epsilon}}js_{t_{f}}^{i,j,c,l}-\tau_{f}\nonumber \\
 & =\sum_{i,j,0\leq c\leq i}jp(i,j,c)-\sum_{(i,j,c,l)\in\Gamma^{\epsilon}}js_{t_{f}}^{i,j,c,l}-\lambda(1-e^{t_{0}-t_{f}}).\label{eq:Outlinks_AOCP}
\end{align}

Note that (\ref{eq:Outlinks_AOCP}) follows from $\frac{1}{\lambda}=e^{t_{0}}$
and thus $\tau_{f}=\lambda-e^{-t_{f}}=\lambda(1-\frac{1}{\lambda}e^{-t_{f}})=\lambda(1-e^{t_{0}-t_{f}})$. 

To determine the necessary conditions for the optimal terminal time
$t_{f}^{*}$ and optimal control $u_{t}^{*}$ in (\ref{eq:AOCP}),
we need the extended Pontryagin maximum principle \lemref{PontryaginMPM}
presented in the appendix. Then we put together the objective function
(\ref{eq:Obj_AOCP}) and the necessary conditions to construct the
optimization problem (\ref{eq:OptProgram}) we will consider later.

Applying the extended Pontryagin maximum principle to the optimal
control problem (\ref{eq:AOCP}) yields the following necessary conditions
of optimality. Note in order to simplify notations, we suppress the
apostrophe $"*"$ in the following. In other words, we use $s_{t},u_{t},w_{t},v,t_{f}$
instead of $s_{t}^{*},u_{t}^{*},w_{t}^{*},v^{*},t_{f}^{*}$ to denote
the optimal values. First we present the correspondence of the notations
in \lemref{PontryaginMPM} and our application.

\begin{table}[H]
\centering{}%
\begin{tabular}{|c|c|}
\hline 
Notation in \lemref{PontryaginMPM} & Notation in our application\tabularnewline
\hline 
\hline 
$t$ & $t$\tabularnewline
\hline 
$t_{0}$ & $t_{0}$\tabularnewline
\hline 
$t_{f}$ & $t_{f}$\tabularnewline
\hline 
$(x_{t}^{i})_{i\in\{1,\ldots,n_{x}\}}$ & $(s_{t}^{\alpha})_{\alpha\in\Gamma^{\epsilon}}$\tabularnewline
\hline 
$(u_{t}^{i})_{i\in\{1,\ldots,n_{u}\}}$ & $(u_{t}^{\beta})_{\beta\in\Phi^{\epsilon}}$\tabularnewline
\hline 
$U$ & $\{0,1\}$\tabularnewline
\hline 
$\mathring{\lambda}$ & $\mathring{w}$\tabularnewline
\hline 
$(\lambda_{t}^{i})_{i\in\{1,\ldots,n_{x}\}}$ & $(w_{t}^{\alpha})_{\alpha\in\Gamma^{\epsilon}}$\tabularnewline
\hline 
$\ell(t,x_{t},u_{t})$ & $K\sum_{i,j,1\leq c\leq i}(i-c+1)s_{t}^{i,j,c,c-1}u_{t}^{i,j,c,c-1}$\tabularnewline
\hline 
$\phi(t_{f},x_{t_{f}})$ & $d_{t_{f}}$\tabularnewline
\hline 
$\psi_{k}(t_{f},x_{t_{f}})=0,\quad k=1,\ldots,n_{\psi}$ & $d_{t_{f}}^{-}=0$\tabularnewline
\hline 
\end{tabular}\caption{Correspondence of the notations in \lemref{PontryaginMPM} and \propref{NecessaryOptimalCond}.}
\end{table}

\begin{prop}
\label{prop:NecessaryOptimalCond}(Necessary conditions of optimality)
Let $(s_{t},u_{t})_{t\in[t_{0},t_{f}]}$ be the optimal state and
control trajectory of (\ref{eq:AOCP}) where $t_{f}$ is the optimal
terminal time. Define the Hamiltonian function 
\begin{eqnarray}
\mathcal{H}(s_{t},u_{t},w_{t}) & = & \sum_{i,j,1\leq c\leq i}w_{t}^{i,j,c,0}(-is_{t}^{i,j,c,0})\nonumber \\
 &  & +\sum_{i,j,2\leq c\leq i,1\leq l\leq c-1}w_{t}^{i,j,c,l}\left[(i-l+1)s_{t}^{i,j,c,l-1}-(i-l)s_{t}^{i,j,c,l}\right]\nonumber \\
 &  & +\sum_{i,j,2\leq c\leq i+1}(K+w_{t}^{i,j,c,c-1})(i-c+2)s_{t}^{i,j,c-1,c-2}u_{t}^{i,j,c-1,c-2},\nonumber \\
\label{eq:Ham_NC}
\end{eqnarray}
then there exist a piecewise continuously differentiable vector function
$w_{t}=(w_{t}^{\alpha})_{\alpha\in\Gamma^{\epsilon}}\in\hat{C}^{1}[t_{0},\infty)^{\left|\Gamma^{\epsilon}\right|}$
and a scalar $v\in\mathbb{R}$ such that the following conditions
hold: 
\begin{enumerate}
\item \label{enu:Cond1_NC}The optimal control $u_{t}$ satisfies that $\forall t\in[t_{0},t_{f}]$,
$1\leq c\leq i$, 
\begin{equation}
\label{eq:u_NC}
\end{equation}
if $s_{t}^{i,j,c,c-1}>0$,
\begin{alignat*}{1}
u_{t}^{i,j,c,c-1} & =\begin{cases}
0 & \text{ if }w_{t}^{i,j,c+1,c}>-K\\
1 & \text{ if }w_{t}^{i,j,c+1,c}<-K\\
0\text{ or }1 & \text{ if }w_{t}^{i,j,c+1,c}=-K,
\end{cases}
\end{alignat*}
if $s_{t}^{i,j,c,c-1}=0$, 
\begin{align*}
u_{t}^{i,j,c,c-1} & =0\text{ or }1.
\end{align*}
\item \label{enu:Cond2_NC}For $2\leq c\leq i$, $0\leq l\leq c-2$, 
\begin{equation}
\frac{d}{dt}w_{t}^{i,j,c,l}=(i-l)(w_{t}^{i,j,c,l}-w_{t}^{i,j,c,l+1}),
\end{equation}
and $\text{ for }$ $1\leq c\leq i$, 
\begin{equation}
\frac{d}{dt}w_{t}^{i,j,c,c-1}=(i-c+1)(w_{t}^{i,j,c,c-1}-(K+w_{t}^{i,j,c+1,c})u_{t}^{i,j,c,c-1}),
\end{equation}
and 
\begin{equation}
\frac{d}{dt}w_{t}^{i,j,i+1,i}=0.
\end{equation}

We denote the set of ordinary differential equations for $w_{t}$
as $\frac{d}{dt}w_{t}=h'(w_{t};u_{t})$.
\item \label{enu:Cond3_NC}$\mathcal{H}(s_{t},u_{t},w_{t})$ is a constant
for $t\in[t_{0},t_{f}]$. 
\item \label{enu:Cond4_NC}Transversal conditions 
\begin{alignat}{1}
w_{t_{f}}^{i,j,c,l} & =vj-1\qquad\forall(i,j,c,l)\in\Gamma^{\epsilon},\label{eq:w_TC}\\
\mathcal{H}(s_{t_{f}},u_{t_{f}},w_{t_{f}}) & =-ve^{-t_{f}},\label{eq:H_TC}\\
d_{t_{f}}^{-} & =\sum_{i,j,0\leq c\leq i}jp(i,j,c)-\sum_{(i,j,c,l)\in\Gamma^{\epsilon}}js_{t_{f}}^{i,j,c,l}-\lambda(1-e^{t_{0}-t_{f}})\nonumber \\
 & =0.\label{eq:S_TC}
\end{alignat}
\end{enumerate}
\end{prop}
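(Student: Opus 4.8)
The plan is to apply the extended Pontryagin maximum principle \lemref{PontryaginMPM} to the autonomous problem (\ref{eq:AOCP}) under the dictionary of the preceding table: the state is $(s_t^{\alpha})_{\alpha\in\Gamma^{\epsilon}}$, the control is $(u_t^{\beta})_{\beta\in\Phi^{\epsilon}}$ with control set $U=\{0,1\}$, the running cost is $\ell=K\sum_{i,j,1\le c\le i}(i-c+1)s_t^{i,j,c,c-1}u_t^{i,j,c,c-1}$, the terminal cost is $\phi=d_{t_f}$, and the lone scalar terminal constraint is $\psi=d_{t_f}^{-}=0$. First I would verify the hypotheses of \lemref{PontryaginMPM}: the right-hand side $h(s_t;u_t)$ of the state ODEs is polynomial in $s_t$ and affine in $u_t$, hence $C^1$ on the bounded invariant region $0\le s_t^{\alpha}$ with $\sum_{c,l}s_t^{i,j,c,l}\le\sum_{c}p(i,j,c)$ (nonnegativity and the bound following from \propref{ODE_Sol_stau}); $\phi$ and $\psi$ are smooth in $s_{t_f}$ and $t_f$; and by \lemref{TwoOCP} it genuinely suffices to treat the finite index sets $\Gamma^{\epsilon},\Phi^{\epsilon}$, so this is a finite-dimensional problem with admissible controls piecewise constant and valued in $\{0,1\}$. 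I would invoke the principle in normal form, i.e. with cost multiplier $\mathring w=1$; the abnormal alternative $\mathring w=0$ has to be ruled out by a short argument, noting that a single scalar terminal manifold $d_{t_f}^{-}=0$ cannot support a nontrivial adjoint that is simultaneously consistent with the adjoint dynamics when the running cost is dropped.

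Granting the principle, the Hamiltonian is $\mathcal{H}(s_t,u_t,w_t)=\ell+\sum_{\alpha\in\Gamma^{\epsilon}}w_t^{\alpha}h^{\alpha}(s_t;u_t)$. Substituting $h$ from \defref{s} written in the $t$-variable and grouping the $K$ in $\ell$ with the $w$-coefficient of the intervention term of $\dot s^{i,j,c,c-1}$, after the index shift $c\mapsto c+1$, produces exactly (\ref{eq:Ham_NC}). The costate equations are $\dot w_t^{\alpha}=-\partial\mathcal{H}/\partial s_t^{\alpha}$, and carrying out this differentiation is the one genuinely fiddly step and, I expect, the main place where sign or summation-range errors could creep in. For a fixed $(i,j)$ the variable $s_t^{i,j,c,l}$ enters $\mathcal{H}$ only through two adjacent entries of the ``cascade'' sum --- as itself, with coefficient $-(i-l)w_t^{i,j,c,l}$, and as the predecessor of $s^{i,j,c,l+1}$, with coefficient $+(i-l)w_t^{i,j,c,l+1}$ --- and, only when $l=c-1$, additionally through the intervention term indexed by $c'=c+1$, with coefficient $(K+w_t^{i,j,c+1,c})(i-c+1)u_t^{i,j,c,c-1}$; collecting these contributions yields the three families of costate ODEs of part \ref{enu:Cond2_NC}, with $\dot w_t^{i,j,i+1,i}=0$ because $s^{i,j,i+1,i}$ does not occur in $\mathcal{H}$ at all. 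I would cross-check all ranges and signs against the transition diagram \figref{Def_tau_hat}.

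For part \ref{enu:Cond1_NC} I would use that $\mathcal{H}$ is \emph{affine} in each scalar $u_t^{i,j,c,c-1}$ with coefficient $(K+w_t^{i,j,c+1,c})(i-c+1)s_t^{i,j,c,c-1}$; minimising over $u_t^{i,j,c,c-1}\in\{0,1\}$ and using $i-c+1\ge1$ gives the bang--bang rule, the singular case occurring when $w_t^{i,j,c+1,c}=-K$ and $u$ left undetermined when $s_t^{i,j,c,c-1}=0$. Part \ref{enu:Cond3_NC} follows from autonomy: $\mathcal{H}$ has no explicit $t$-dependence, so $\frac{d}{dt}\mathcal{H}=0$ along the coupled state/adjoint flow under the minimising control, hence $\mathcal{H}$ is constant on $[t_0,t_f]$. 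Finally, part \ref{enu:Cond4_NC} is the transversality package of \lemref{PontryaginMPM} for a free terminal time: with $\phi=d_{t_f}$ and the scalar constraint carrying multiplier $v$, one has $w_{t_f}^{i,j,c,l}=\partial_{s^{i,j,c,l}}d_{t_f}+v\,\partial_{s^{i,j,c,l}}d_{t_f}^{-}=-1-vj$, which is (\ref{eq:w_TC}) after the harmless replacement $v\mapsto-v$; the free-time relation then reduces, in our notation, to $\mathcal{H}(s_{t_f},u_{t_f},w_{t_f})=-v e^{-t_f}$, using that the explicit $t_f$-derivative of $d_{t_f}^{-}$ equals $-\lambda e^{t_0-t_f}=-e^{-t_f}$ since $e^{t_0}=1/\lambda$, giving (\ref{eq:H_TC}); and (\ref{eq:S_TC}) is just the constraint $d_{t_f}^{-}=0$ itself. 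Assembling the four parts gives the proposition.
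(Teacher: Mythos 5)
Your proposal is correct and follows essentially the same route as the paper: a direct application of the extended Pontryagin maximum principle under the same identifications, giving the Hamiltonian (\ref{eq:Ham_NC}), the bang--bang/singular rule from affinity in $u$, the costate ODEs by differentiating $\mathcal{H}$ in $s$, constancy of $\mathcal{H}$ by autonomy, and the transversality package, with your $v\mapsto -v$ renaming reconciling the signs exactly as the paper implicitly does. The only item the paper makes explicit that you gloss over is the constraint qualification for the scalar terminal constraint, namely that $\frac{\partial\psi}{\partial s}(t_{f},s_{t_{f}})=(j,\ldots,j)$ has rank $1$, which is a one-line check.
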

\begin{proof}
Let $(\mathring{w},w_{t})$ be the adjoint variables then $\mathring{w}=1$,
since otherwise the necessary conditions of optimality becomes independent
of the cost functional in (\ref{eq:AOCP}). The Hamiltonian function
(\ref{eq:Ham_NC}) is a direct result of (\ref{eq:Ham_MPM}). Note
that $n_{\psi}=1$ and 
\[
\psi(t,s)=\sum_{i,j,0\leq c\leq i}jp(i,j,c)-\sum_{(i,j,c,l)\in\Gamma^{\epsilon}}js^{i,j,c,l}-\lambda(1-e^{t_{0}-t}).
\]
Taking partial derivative yields $\frac{\partial}{\partial s}\psi(t_{f},s_{t_{f}})=(j,j,\ldots,j)$
which has rank $1$.

Since the Hamiltonian function is affine in the control variable $u_{t}$,
by condition (\ref{enu:Cond1_MPM}) of \lemref{PontryaginMPM}, we
attain that, for $1\leq c\leq i$,
\[
u_{t}^{i,j,c,c-1}=\begin{cases}
0 & \text{ if }(K+w_{t}^{i,j,c+1,c})s_{t}^{i,j,c,c-1}>0\\
1 & \text{ if }(K+w_{t}^{i,j,c+1,c})s_{t}^{i,j,c,c-1}<0\\
0\text{ or }1 & \text{if }(K+w_{t}^{i,j,c+1,c})s_{t}^{i,j,c,c-1}=0.
\end{cases}
\]
By distinguishing the two cases $s_{t}^{i,j,c,c-1}>0$ and $s_{t}^{i,j,c,c-1}=0$,
we have the equivalent form in (\ref{eq:u_NC}).

Taking partial derivative of $\mathcal{H}$ with regard to $s$ yields
the differential equations of $w_{t}$ in condition (\ref{enu:Cond2_NC}).
Note that $\mathcal{H}$ is autonomous, then according to condition
(\ref{enu:Cond3_MPM}) of \lemref{PontryaginMPM}, $\mathcal{H}(s_{t},u_{t},w_{t})$
is a constant for $t\in[t_{0},t_{f}]$, which is condition (\ref{enu:Cond3_NC}).

Then define 
\begin{eqnarray}
\mbox{\ensuremath{\Psi}}(t,s) & \coloneqq & \sum_{i,j,0\leq c\leq i}p(i,j,c)-\nonumber \\
 &  & \sum_{(i,j,c,l)\in\Gamma^{\epsilon}}s^{i,j,c,l}+v(\sum_{i,j,0\leq c\leq i}jp(i,j,c)-\sum_{(i,j,c,l)\in\Gamma^{\epsilon}}js_{t_{f}}^{i,j,c,l}-\lambda(1-e^{t_{0}-t_{f}}))\nonumber \\
\end{eqnarray}
 and taking partial derivatives with respect to $s$ and $t$ respectively
by condition (\ref{enu:Cond4_MPM}) of \lemref{PontryaginMPM} together
with the terminal condition (\ref{eq:Outlinks_AOCP}) leads to condition
(\ref{enu:Cond4_NC}).
\end{proof}
\begin{rem}
\label{rem:SingularControl}(Singular control policy) Observe that
if the coefficient of $u_{t}^{i,j,c,c-1}$ in the Hamiltonian function
$\mathcal{H}(s_{t},u_{t},w_{t})$ (\ref{eq:Ham_NC}), i.e. $(i-c+1)(K+w_{t}^{i,j,c+1,c})s_{t}^{i,j,c,c-1}$
vanishes, $u_{t}^{i,j,c,c-1}=0$ or $1$ both satisfy (\ref{enu:Cond1_MPM})
of \lemref{PontryaginMPM}, i.e. minimizing $\mathcal{H}(s_{t},u_{t},w_{t})$.
In other words, the Pontryagin maximum principle in \lemref{PontryaginMPM}
cannot determine the optimal control $u_{t}^{i,j,c,c-1}$ in this
case. Moreover, since $i-c+1>0$, so if $(K+w_{t}^{i,j,c+1,c})s_{t}^{i,j,c,c-1}=0$
can be sustained over some interval $(\theta_{1},\theta_{2})\subset[t_{0},t_{f}]$,
then $u_{t}^{i,j,c,c-1}$ can be $0$ or $1$ at any time on $(\theta_{1},\theta_{2})$
and switch arbitrarily often between $0$ and $1$. In the terminology
of optimal control theory, the control $u_{t}$ is ``singular''
on $(\theta_{1},\theta_{2})$ and the corresponding portion of the
state trajectory $s_{t}$ on $(\theta_{1},\theta_{2})$ is called
a singular arc. Further note that $(K+w_{t}^{i,j,c+1,c})s_{t}^{i,j,c,c-1}=0$,
$t\in(\theta_{1},\theta_{2})$ implies two cases: $s_{t}^{i,j,c,c-1}=0$
or $s_{t}^{i,j,c,c-1}>0$, $w_{t}^{i,j,c+1,c}=-K$, $t\in(\theta_{1},\theta_{2})$.
We can show that in the first case any feasible control policy $u_{t}^{i,j,c,c-1}$
will not affect other entries of $s_{t}$ and the second case only
occurs when $c=i$ and $(\theta_{1},\theta_{2})=(t_{0},t_{f})$.
\end{rem}
Now the differential equations for $s_{t}$ in (\ref{eq:AOCP}) and
$w_{t}$ in condition (\ref{enu:Cond2_NC}) of \propref{NecessaryOptimalCond}
constitute a two-point boundary value problem (TPBVP) because for
$s_{t}$ the boundary values are given at $t=t_{0}$ while for $w_{t}$
at $t=t_{f}$ as follows
\begin{align}
\tag{TPBVP}\frac{d}{dt}s_{t} & =h(s_{t};u_{t})\label{eq:TPBVP}\\
s_{t_{0}}^{i,j,c,l} & =p(i,j,c)\mathbbm1_{(l=0)}\qquad\forall(i,j,c,l)\in\Gamma^{\epsilon}\nonumber \\
\frac{d}{dt}w_{t} & =h'(w_{t};u_{t})\nonumber \\
w_{t_{f}}^{i,j,c,l} & =vj-1\qquad\forall(i,j,c,l)\in\Gamma^{\epsilon}\nonumber 
\end{align}
By solving the differential equations we are able to find the optimal
control policy $(u_{t})_{t\in[t_{0},t_{f}]}$ stated in \thmref{theorem3}
and the optimal state trajectory $(s_{t})_{t\in[t_{0},t_{f}]}$ which
leads to the conclusions of \thmref{theorem4} about the optimal asymptotic
fraction of final defaulted nodes.

\section{Solutions of the necessary conditions}

Throughout this section we understand that the degrees are in the
bounded range $i\lor j<M^{\epsilon}$ unless specified otherwise.
The main difficulty of solving the TPBVP arises from the fact that
$w_{t}$ and $s_{t}$ depend on $u_{t}$ which depends on $w_{t}$
and $s_{t}$ recursively according to (\ref{enu:Cond1_NC}) of \propref{NecessaryOptimalCond}.
To disentangle the recursive dependence, the idea is to analyze the
properties of $s_{t}$ based on which we are able to either derive
the properties of $w_{t}$ or provide explicit solutions of $w_{t}$
under different cases. By (\ref{enu:Cond1_NC}) of \propref{NecessaryOptimalCond}
we attain the optimal control policy $u_{t}$ which allows us to solve
for $s_{t}$. At this point $(s_{t},u_{t},w_{t})$ are all expressed
in terms of the variables $(v,t_{f},t_{s})$. Since the optimal $(s_{t},u_{t},w_{t})$
satisfies the two equations in the necessary conditions of \propref{NecessaryOptimalCond},
i.e. the Hamiltonian function (\ref{eq:H_TC}) at $t=t_{f}$ and the
terminal condition (\ref{eq:S_TC}), while minimizing the objective
function (\ref{eq:Obj_AOCP}), we have the following optimization
problem for $(v,t_{f},t_{s})$.
\begin{alignat}{1}
\min_{v,t_{f},t_{s}} & K\cdot it_{t_{f}}(u,p)+d_{t_{f}}(u,p)\label{eq:OP_beforeSubstitute}\\
\label{eq:Obj_forOP}\\
\text{st}\quad & \mathcal{H}(s_{t_{f}},u_{t_{f}},w_{t_{f}})=-ve^{-t_{f}}\label{eq:H_TC_forOP}\\
 & d_{t_{f}}^{-}=0\label{eq:S_TC_forOP}\\
 & t_{0}\leq t_{s}\leq t_{f}\nonumber \\
 & v\in\mathbb{R},\nonumber 
\end{alignat}
where $s_{t}$, $u_{t}$ and $w_{t}$ are functions of $(v,t_{f},t_{s})$
and
\begin{align}
it_{t_{f}}(u,p) & =\int_{t_{0}}^{t_{f}}\sum_{i,j,1\leq c\leq i}(i-c+1)s_{t}^{i,j,c,c-1}u_{t}^{i,j,c,c-1}dt,\nonumber \\
d_{t_{f}}(u,p) & =\sum_{i,j,0\leq c\leq i}p(i,j,c)-\sum_{(i,j,c,l)\in\Gamma^{\epsilon}}s_{t_{f}}^{i,j,c,l},\nonumber \\
d_{t_{f}}^{-} & =\sum_{i,j,0\leq c\leq i}jp(i,j,c)-\sum_{(i,j,c,l)\in\Gamma^{\epsilon}}js_{t_{f}}^{i,j,c,l}-\lambda(1-e^{t_{0}-t_{f}}).
\end{align}

After substituting $(s_{t},u_{t},w_{t})$ expressed in terms of $(v,t_{f},t_{s})$
into the optimization problem (\ref{eq:OP_beforeSubstitute}), we
are able to solve the optimal $(v,t_{f},t_{s})$ based on which we
can calculate the optimal $(s_{t},u_{t},w_{t})$. Further we change
the variables from $(v,t_{f},t_{s})$ to $(v,y,z)$ to further simplify,
thus we attain the optimization problem (\ref{eq:OptProgram}) later.

It turns out that we only need $w_{t_{f}}$ as well as $u_{t}$ and
$s_{t}$ in (\ref{eq:OP_beforeSubstitute}). From (\ref{eq:TPBVP})
we know that $w_{t_{f}}^{\beta}=vj-1$ for $\beta\in\Gamma^{\epsilon}$.
For $u_{t}$ and $s_{t}$, we give out their solutions in the following
directly due to the limited space of the paper.
\begin{lem}
\label{lem:OC}The optimal control policy $u_{t}$ in terms of the
variables $(v,t_{f},t_{s})$ is given as below.

For $1\leq c\leq i$ except $c=i$ when $vj-1=-K$, $\forall t\in[t_{0},t_{f}]$,
\[
u_{t}^{i,j,c,c-1}=\mathbbm1_{(t\geq t^{i,j,c})},
\]
where
\[
t^{i,j,c}=\begin{cases}
t_{f} & \text{ if }vj-1\geq-K\\
t_{f}+\ln\left(1+\frac{K+vj-1}{(i-c)K}\right) & \text{ if }vj-1<-K\text{ and }1\leq c<i+\frac{K+vj-1}{K(1-e^{t_{0}-t_{f}})}\\
t_{0} & \text{otherwise},
\end{cases}
\]
If $vj-1=-K$, $\forall t\in[t_{0},t_{f}]$, 
\[
u_{t}^{i,j,i,i-1}=\mathbbm1_{(t\geq t_{s})}\text{ for some }t_{s}\in[t_{0},t_{f}].
\]

\end{lem}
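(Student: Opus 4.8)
The plan is to solve the adjoint equations of \propref{NecessaryOptimalCond} restricted to the ``distance‑to‑default one'' coordinates and then read $u_t$ off from condition (\ref{enu:Cond1_NC}). The structural point is that, for fixed $(i,j)$, the family $\{w_t^{i,j,c,c-1}:1\le c\le i\}$ together with the constant coordinate $w_t^{i,j,i+1,i}$ is a closed subsystem: by condition (\ref{enu:Cond2_NC}) the equation for $w_t^{i,j,c,c-1}$ involves only $w_t^{i,j,c,c-1}$, $w_t^{i,j,c+1,c}$ and $u_t^{i,j,c,c-1}$, and the latter is governed by (\ref{enu:Cond1_NC}) through $w_t^{i,j,c+1,c}$ alone, while $\frac{d}{dt}w_t^{i,j,i+1,i}=0$ with transversal value $vj-1$ gives $w_t^{i,j,i+1,i}\equiv vj-1$. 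So I would run a backward induction on $c$ from $c=i$ down to $c=1$: knowing $w_t^{i,j,c+1,c}$ explicitly, determine the switching behaviour of $u_t^{i,j,c,c-1}$ from (\ref{enu:Cond1_NC}), integrate the resulting piecewise‑linear scalar ODE for $w_t^{i,j,c,c-1}$ backward from $t=t_f$ with terminal value $vj-1$, and record where $w_t^{i,j,c,c-1}$ meets the level $-K$ (this is the switch time of the \emph{next} control $u^{i,j,c-1,c-2}$). Wherever $s_t^{i,j,c,c-1}=0$ the control is free; I would note that the bang‑bang selection $\mathbbm{1}_{(t\ge t^{i,j,c})}$ is then admissible and, by \remref{SingularControl}, affects no other coordinate, so it is a valid optimal choice.

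The induction splits by the sign of $vj-1+K$. If $vj-1>-K$, one shows by induction that $w_t^{i,j,c,c-1}=(vj-1)e^{(i-c+1)(t-t_f)}$ on $[t_0,t_f]$, which stays $\ge\min\{0,vj-1\}>-K$; hence every $u_t^{i,j,c,c-1}$ is $0$ on $[t_0,t_f)$, i.e.\ $t^{i,j,c}=t_f$. If $vj-1=-K$ then $w_t^{i,j,i+1,i}\equiv-K$, so the coefficient of $u_t^{i,j,i,i-1}$ in $\mathcal{H}$ vanishes; by \remref{SingularControl}, which identifies this as the only possible sustained singular arc and shows it occurs exactly for $c=i$, the control $u_t^{i,j,i,i-1}$ is undetermined on $[t_0,t_f]$ and may be taken as $\mathbbm{1}_{(t\ge t_s)}$ for an arbitrary $t_s\in[t_0,t_f]$. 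Since the $u$‑term then drops out of the equation for $w_t^{i,j,i,i-1}$ regardless of $t_s$, we get $w_t^{i,j,i,i-1}=-Ke^{t-t_f}>-K$ on $[t_0,t_f)$, and the induction continues as in the first regime for $c\le i-1$, yielding $t^{i,j,c}=t_f$ there. These two regimes produce the first branch ``$vj-1\ge-K$'' of the formula and the stated singular case.

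The substantial case is $vj-1<-K$. Then $w_t^{i,j,i+1,i}\equiv vj-1<-K$ forces $u_t^{i,j,i,i-1}\equiv1$ ($t^{i,j,i}=t_0$), and integrating gives $w_t^{i,j,i,i-1}=K+(vj-1)-Ke^{t-t_f}$. Writing $n:=i-c+1$, the inductive claim is that $w_t^{i,j,c,c-1}$ is continuous and strictly decreasing on $[t_0,t_f]$ and equals $nK+(vj-1)-nKe^{t-t_f}$ on $[t^{i,j,c},t_f]$, and equals $w_{t^{i,j,c}}^{i,j,c,c-1}e^{n(t-t^{i,j,c})}$ on $[t_0,t^{i,j,c}]$ when $t^{i,j,c}>t_0$, with $w_{t^{i,j,c}}^{i,j,c,c-1}\in(-K,0)$. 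The induction step plugs the known $w_t^{i,j,c+1,c}$ into the scalar ODE for $w_t^{i,j,c,c-1}$: the constant and $e^{t-t_f}$ forcing terms give a particular solution of the stated form and the $e^{nt}$ homogeneous mode is killed by the terminal condition $w_{t_f}^{i,j,c,c-1}=vj-1$. Since $w_t^{i,j,c,c-1}$ is monotone, decreasing from $>-K$ (or $\le-K$) at $t_0$ to $vj-1<-K$ at $t_f$, it meets $-K$ at most once; solving $nK+(vj-1)-nKe^{t-t_f}=-K$ gives the crossing time $t_f+\ln\!\bigl(1+\tfrac{K+vj-1}{nK}\bigr)$, which at index $c-1$ (where $n=i-(c-1)$) is exactly the middle branch of the formula, while the alternative $t^{i,j,c-1}=t_0$ arises precisely when that crossing time would fall at or below $t_0$, equivalently when $w_{t_0}^{i,j,c,c-1}\le-K$, which a short computation turns into $c-1\ge i+\tfrac{K+vj-1}{K(1-e^{t_0-t_f})}$.

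The delicate part, and the main obstacle, is justifying the bang‑bang structure in this last regime: one needs each $w_t^{i,j,c+1,c}$ to be genuinely monotone and to meet the level $-K$ transversally, never resting on it over an interval. \remref{SingularControl} already shows a sustained singular arc can occur only in the degenerate $vj-1=-K$, $c=i$ case, and monotonicity of $w_t^{i,j,c,c-1}$ falls out of the explicit solution (derivative $-nKe^{t-t_f}<0$ on $[t^{i,j,c},t_f]$, and a negative multiple of $e^{nt}$ on $[t_0,t^{i,j,c}]$). The companion bookkeeping point is that ``otherwise'' and ``genuine switch'' propagate consistently down the chain: if some $c$ lies in the genuine‑switch regime then so does every smaller $c$, and the break between regimes sits exactly at $c=i+\tfrac{K+vj-1}{K(1-e^{t_0-t_f})}$, which reproduces the case split in the statement. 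The remaining linear‑ODE integrations and the identity $w_{t^{i,j,c}}^{i,j,c,c-1}=-\tfrac{nK+(vj-1)}{n-1}\in(-K,0)$ (valid because $vj-1<-K$), needed for the continuous matching of the two pieces, are routine.
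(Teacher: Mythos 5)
Your derivation is correct, and it is genuinely more than what the paper supplies: the paper states the solutions of $u_t$ and $s_t$ ``directly'' and only suggests verifying them by substitution into (\ref{eq:TPBVP}), whereas you reconstruct the switching structure from the necessary conditions themselves. Your key observations all check out: the chain $\{w_t^{i,j,c,c-1}\}_{1\le c\le i}$ together with $w_t^{i,j,i+1,i}\equiv vj-1$ is closed under condition (\ref{enu:Cond2_NC}), the control $u^{i,j,c,c-1}$ is slaved to $w^{i,j,c+1,c}$ via (\ref{enu:Cond1_NC}), and the backward induction with $n=i-c+1$ gives exactly $w_t^{i,j,c,c-1}=nK+(vj-1)-nKe^{t-t_f}$ on the active piece (I verified this satisfies the adjoint ODE and the terminal value $vj-1$), whose crossing of $-K$ at $t_f+\ln\bigl(1+\frac{K+vj-1}{nK}\bigr)$ is precisely $t^{i,j,c-1}$, with the saturation condition $c\ge i+\frac{K+vj-1}{K(1-e^{t_0-t_f})}$ recovering the ``otherwise'' branch, and the junction value $-\frac{nK+(vj-1)}{n-1}\in(-K,0)$ confirming strict monotonicity and a single transversal crossing; the ordering $t^{i,j,c+1}<t^{i,j,c}$ guarantees that each crossing falls inside the interval where the active formula for the level above is valid, so the induction is self-consistent. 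What your write-up buys over the paper's ``verify by substitution'' is an explanation of why the bang-bang times have this form and why no singular arc can occur except in the degenerate case. The one place you should tighten is that degenerate case $vj-1=-K$: \remref{SingularControl} only tells you the singular control is undetermined by the maximum principle, so saying it ``may be taken as'' $\mathbbm{1}_{(t\ge t_s)}$ establishes admissibility but not that this form is without loss of optimality. The missing (short) observation is that $u^{i,j,i,i-1}$ enters the state dynamics, the objective and the terminal constraint only through $s_{t_f}^{i,j,i+1,i}=\int_{t_0}^{t_f}s_t^{i,j,i,i-1}u_t^{i,j,i,i-1}\,dt$ and the identical integral inside $it_{t_f}$, so any value achievable by a measurable $\{0,1\}$-valued singular control is achieved by a threshold control by the intermediate value theorem; with that sentence added, your argument fully proves the lemma.
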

The following is the solution for $s_{t}$.
\begin{lem}
\label{lem:s_t_Sol}Letting $p(i,j,i+1)=0$, under the optimal control
policy in \lemref{OC}, we have the following solutions of $s_{t}$
for the two-point boundary value problem (TPBVP).
\end{lem}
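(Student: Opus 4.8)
The plan is to solve the (TPBVP) dynamics $\frac{d}{dt}s_t=h(s_t;u_t)$ from \eqref{eq:AOCP} by exploiting two structural facts. For each fixed degree pair $(i,j)$ the system is \emph{linear} in the block $(s_t^{i,j,c,l})$, with coefficients that enter only through the controls $u_t^{i,j,c,c-1}$; and by \lemref{OC} each such control is a step function $\mathbbm1_{(t\ge t^{i,j,c})}$ (apart from the single singular exception $vj-1=-K$, $c=i$). Consequently the finitely many switching times $\{t^{i,j,c}\}$ partition $[t_0,t_f]$ into subintervals on each of which the control vector of the $(i,j)$-block is a constant $b=(b^{i,j,c,c-1})$, and on such an interval \propref{ODE_Sol_stau} already gives the solution once we change variable to $t=-\ln(\lambda-\tau)$, for which $\frac{\lambda-\tau}{\lambda-\tau_1}=e^{-(t-t_1)}$; thus on $[t_1,t_2)$ with $s_{t_1}=s_1$ one has
\[
s_t^{i,j,c,l}=e^{-(i-l)(t-t_1)}\sum_{r=0}^{l}s_1^{i,j,c,r}\binom{i-r}{l-r}\bigl(1-e^{-(t-t_1)}\bigr)^{l-r},\qquad 0\le l\le c-2,
\]
together with the companion identities \eqref{eq:Sol_stau_cc-1}--\eqref{eq:Sol_stau_iplus1i} for $s_t^{i,j,c,c-1}$ and $s_t^{i,j,i+1,i}$, in which the product $\prod_{k=q}^{c-1}b^{i,j,k,k-1}$ records which interventions are active on the interval.

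Carrying this out, I would first record the ordering of the switching times supplied by \lemref{OC}. If $vj-1\ge -K$ then $t^{i,j,c}=t_f$ for all $c$, so the $(i,j)$-block never intervenes, and a single application of the $b=0$ formula with the initial data $s_{t_0}^{i,j,c,l}=p(i,j,c)\mathbbm1_{(l=0)}$ yields the solution directly (recovering \eqref{eq:ODE_stau_cl_from0} in the $t$-coordinate). If $vj-1<-K$, then because $K+vj-1<0$ the interior thresholds $t^{i,j,c}=t_f+\ln\!\bigl(1+\frac{K+vj-1}{(i-c)K}\bigr)$ are strictly decreasing in $c$ while the large $c$ have $t^{i,j,c}=t_0$; hence the block enters $t_0$ with a fixed set of active controls and picks up one more each time $t$ crosses a threshold. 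I would then propagate the solution interval by interval, feeding the value $s_{t_2^-}$ of one interval as the input $s_1$ of the next. Because on any interval only the entries already populated can be nonzero, the iterated composition of the displayed formulas telescopes; collapsing the resulting multiple sum over the intermediate switching instants into the compact closed form asserted in the lemma is the main bookkeeping step.

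The remaining ingredients are routine. The convention $p(i,j,i+1)=0$ makes the state $(i,j,i+1,i)$ start empty, so its value is purely the accumulation of interventions through $\frac{d}{dt}s_t^{i,j,i+1,i}=s_t^{i,j,i,i-1}u_t^{i,j,i,i-1}$. The singular case $vj-1=-K$, $c=i$ is disposed of via \remref{SingularControl}: there the free control $u_t^{i,j,i,i-1}=\mathbbm1_{(t\ge t_s)}$ enters only the equation for $s_t^{i,j,i+1,i}$, an invulnerable state that feeds back into no other component, so every entry except $s_t^{i,j,i+1,i}$ is computed exactly as in the non-singular case and only $s_t^{i,j,i+1,i}$ carries the dependence on $t_s$. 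Continuity of the assembled $s_t$ at each switching time is automatic since each piece equals $s_1$ at its left endpoint, and since the constructed $s_t$ solves the linear ODE on every open subinterval and matches the initial condition at $t_0$, uniqueness of solutions to linear ODEs identifies it as the solution of (TPBVP) under the optimal control of \lemref{OC}.

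I expect the principal obstacle to be precisely the combinatorial collapse: for each target entry $s_t^{i,j,c,l}$ one must identify all source terms $s_{t_0}^{i,j,q,0}=p(i,j,q)$ reaching it and weight them by the product of the intervention indicators along every admissible path $q\to q+1\to\cdots$, then show that the nested sums over intermediate switching times and path lengths reduce to the single stated expression. Alongside this, the edge cases $c=1$, small $i$, and the boundary between the ``interior $t^{i,j,c}$'' and ``$t^{i,j,c}=t_0$'' regimes must be checked for consistency with the claimed formula.
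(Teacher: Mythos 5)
Your strategy is sound and is essentially the derivation that the paper's own machinery supports but never writes out: the paper states the formulas of \lemref{s_t_Sol} ``directly due to the limited space'' and remarks only that they ``can be verified by substituting into (\ref{eq:TPBVP})'', so there is no detailed proof to match against. Your route --- use \lemref{OC} to order the switching times $t^{i,j,c}$ (decreasing in $c$, equal to $t_{0}$ for large $c$, equal to $t_{f}$ when $vj-1\geq-K$ so that the block is effectively uncontrolled), observe that for fixed $(i,j)$ the system is linear with the control entering only the $l=c-1$ equations and the $(i+1,i)$ equation, apply \propref{ODE_Sol_stau} (equivalently \lemref{PartODE_Sol_AnyInitial_t} in the $t$-variable, via $\frac{\lambda-\tau}{\lambda-\tau_{1}}=e^{-(t-t_{1})}$) on each interval of constancy of the control, and propagate the terminal data of one interval as the initial data of the next --- is correct, and your structural observations check out: the components with $l\leq c-2$ never see the control and stay binomial, the states with $c<h$ on $[t^{i,j,h},t^{i,j,h-1})$ receive no intervention inflow, and in the singular case $vj-1=-K$ the free control $u_{t}^{i,j,i,i-1}=\mathbbm1_{(t\geq t_{s})}$ feeds only the absorbing state $s_{t}^{i,j,i+1,i}$, so only (\ref{eq:s_ip1i_t_Sol_AjE-K}) carries the $t_{s}$-dependence.

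The one substantive piece you do not carry out is the piece you yourself flag: collapsing the interval-by-interval composition into the closed form (\ref{eq:s_cc-1_t_Sol_cGh_AjL-K}) for $h\leq c\leq i+1$. As written, your proposal is a plan whose decisive computation (an induction on $c$, or on the number of thresholds crossed) is deferred, so it is not yet a complete proof. Note that this bookkeeping can be bypassed exactly in the way the paper indicates: take the displayed formulas as an ansatz, check by differentiation that on each open interval $(t^{i,j,h},t^{i,j,h-1})$ they satisfy the block ODEs with the active controls $u_{t}^{i,j,m,m-1}=\mathbbm1_{(m\geq h)}$, check continuity at each switching time $t^{i,j,m}$ and the initial condition $s_{t_{0}}^{i,j,c,l}=p(i,j,c)\mathbbm1_{(l=0)}$, and invoke uniqueness for linear ODEs. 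Either completing your telescoping induction or carrying out this verification would close the argument; without one of them the lemma is asserted rather than proved.
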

\begin{enumerate}
\item For $2\leq c\leq i$, $0\leq l\leq c-2$ and $c=1$, $l=0$, 
\begin{equation}
s_{t}^{i,j,c,l}=p(i,j,c)\binom{i}{l}\left(e^{t_{0}-t}\right)^{i-l}(1-e^{t_{0}-t})^{l}.\label{eq:s_ijcl_Sol_General_1}
\end{equation}
\item \label{enu:s_t_AjL-K}If $vj-1<-K$, consider $t\in[t^{i,j,h},t^{i,j,h-1})$,
for some $1\leq h\leq i$ where 
\[
t^{i,j,h}=\begin{cases}
t_{f} & \text{if }h=0\\
t_{f}+\ln(1+\frac{K+vj-1}{(i-h)K}) & \text{if }1\leq h<i+\frac{K+vj-1}{K(1-e^{t_{0}-t_{f}})}\\
t_{0} & \text{otherwise}.
\end{cases}
\]
If $1\leq c<h$, 
\begin{align}
s_{t}^{i,j,c,c-1} & =p(i,j,c)\binom{i}{c-1}\left(e^{t_{0}-t}\right)^{i-c+1}(1-e^{t_{0}-t})^{c-1}.\label{eq:s_cl_t_Sol_cLh_AjL-K}
\end{align}
If $h\leq c\leq i+1$, 
\begin{align}
s_{t}^{i,j,c,c-1} & =\binom{i}{c-1}\left(e^{t_{0}-t}\right)^{i-c+1}\sum_{m=h}^{c}p(i,j,m)\nonumber \\
 & \sum_{n=0}^{m-1}\binom{c-1}{n}\left(1-e^{t_{0}-t^{i,j,m}}\right)^{n}(e^{t_{0}-t^{i,j,m}}-e^{t_{0}-t})^{c-1-n}.\nonumber \\
\label{eq:s_cc-1_t_Sol_cGh_AjL-K}
\end{align}
\item \label{enu:s_t_AjG-K}If $vj-1>-K$, for $1\leq c\leq i+1,$ $t\in[t_{0},t_{f}]$,
\begin{equation}
s_{t}^{i,j,c,c-1}=p(i,j,c)\binom{i}{c-1}\left(e^{t_{0}-t}\right)^{i-c+1}(1-e^{t_{0}-t})^{c-1}.\label{eq:s_cl_t_Sol_AjG-K}
\end{equation}
\item If $vj-1=-K$, for $1\leq c\leq i$, $t\in[t_{0},t_{f}]$, 
\[
s_{t}^{i,j,c,c-1}=p(i,j,c)\binom{i}{c-1}\left(e^{t_{0}-t}\right)^{i-c+1}(1-e^{t_{0}-t})^{c-1},
\]
 and 
\begin{equation}
s_{t}^{i,j,i+1,i}=p(i,j,i)\mathbbm1_{\{t_{s}\leq t)}[(1-e^{t_{0}-t})^{i}-(1-e^{t_{0}-t_{s}})^{i}],\label{eq:s_ip1i_t_Sol_AjE-K}
\end{equation}
where $t_{s}\in[t_{0},t_{f}]$. 
\end{enumerate}
The solutions of $w_{t}$ and $s_{t}$ can be verified by substituting
into (\ref{eq:TPBVP}). Since (\ref{eq:S_TC}) and (\ref{eq:H_TC})
require the state variable value particularly at $t=t_{f}$, we can
apply \lemref{s_t_Sol} at $t=t_{f}$ to attain $s_{t_{f}}$. Next
we substitute $s_{t}$, $u_{t}$ and $w_{t_{f}}$ into the optimization
program (\ref{eq:OP_beforeSubstitute}) leading to the following results.
\begin{prop}
\label{prop:OP_in_exponent}Based on the solutions of optimal $s_{t}$
in \lemref{s_t_Sol} (particularly $s_{t_{f}}$), $u_{t}$ in \lemref{OC}
and $w_{t_{f}}^{\beta}=vj-1$, $\forall\beta\in\Gamma^{\epsilon}$,
letting 
\[
t^{i,j,c}=\begin{cases}
t_{f} & \text{if }K+vj-1\geq0\text{ or }c=0\\
t_{f}+\ln(1+\frac{K+vj-1}{(i-c)K}) & \text{if }K+vj-1<0\text{ and }1\leq c<i+\frac{K+vj-1}{K(1-e^{t_{0}-t_{f}})},\\
t_{0} & \text{otherwise},
\end{cases}
\]
 the Hamiltonian equation (\ref{eq:H_TC_forOP}) at $t=t_{f}$ is
equivalent to 
\begin{eqnarray}
 &  & \sum_{j}\max\{-K,vj-1\}\sum_{i}i\sum_{c=1}^{i}p(i,j,c)\sum_{m=c}^{i}\binom{i-1}{m-1}\left(e^{t_{0}-t_{f}}\right)^{i-m+1}\nonumber \\
 &  & \sum_{n=0}^{c-1}\binom{m-1}{n}\left(1-e^{t_{0}-t^{i,j,c}}\right)^{n}(e^{t_{0}-t^{i,j,c}}-e^{t_{0}-t_{f}})^{m-1-n}\nonumber \\
 & = & v\lambda e^{t_{0}-t_{f}}.\nonumber \\
\label{eq:H_NCSol}
\end{eqnarray}
The terminal condition (\ref{eq:S_TC_forOP}) is equivalent to
\begin{eqnarray}
 &  & \sum_{i}\sum_{j}j\{\sum_{c=0}^{i}p(i,j,c)\sum_{n=c}^{i}\binom{i}{n}(1-e^{t_{0}-t^{i,j,c}})^{n}(e^{t_{0}-t^{i,j,c}})^{i-n}\nonumber \\
 &  & -\mathbbm1_{(vj-1=-K)}p(i,j,i)[(1-e^{t_{0}-t_{f}})^{i}-(1-e^{t_{0}-t_{s}})^{i}]\}\nonumber \\
 & = & \lambda(1-e^{t_{0}-t_{f}}).\nonumber \\
\label{eq:Outlink_NCSol}
\end{eqnarray}
And the objective function (\ref{eq:Obj_forOP}) 
\begin{eqnarray}
 &  & K\cdot it_{t_{f}}(u,p)+d_{t_{f}}(u,p)\nonumber \\
 & = & K\sum_{i}\sum_{j}\{\sum_{c=1}^{i}p(i,j,c)\nonumber \\
 &  & \sum_{m=c}^{i}\sum_{n=0}^{c-1}(m-c+1)\binom{i}{m}\binom{m}{n}(e^{t_{0}-t_{f}})^{i-m}\left(1-e^{t_{0}-t^{i,j,c}}\right)^{n}(e^{t_{0}-t^{i,j,c}}-e^{t_{0}-t_{f}})^{m-n}\nonumber \\
 &  & +\mathbbm1_{(vj-1=-K)}p(i,j,i)[(1-e^{t_{0}-t_{f}})^{i}-(1-e^{t_{0}-t_{s}})^{i}]\}\nonumber \\
 &  & +\sum_{j}\sum_{i}\{\sum_{c=0}^{i}p(i,j,c)\sum_{n=c}^{i}\binom{i}{n}(1-e^{t_{0}-t^{i,j,c}})^{n}(e^{t_{0}-t^{i,j,c}})^{i-n}\nonumber \\
 &  & -\mathbbm1_{(vj-1=-K)}p(i,j,i)[(1-e^{t_{0}-t_{f}})^{i}-(1-e^{t_{0}-t_{s}})^{i}]\}.\nonumber \\
\label{eq:Obj_NCSol}
\end{eqnarray}
\end{prop}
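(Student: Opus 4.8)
The proof is a direct substitution. Into the Hamiltonian equation (\ref{eq:H_TC_forOP}), the terminal condition (\ref{eq:S_TC_forOP}) and the objective (\ref{eq:Obj_forOP}) we plug the closed forms of the optimal control $u_t$ from \lemref{OC}, of the state trajectory $s_t$ from \lemref{s_t_Sol}, and the transversal values $w_{t_f}^{i,j,c,l}=vj-1$, and simplify. I would treat the three in that order, because each reuses the same two reductions: a telescoping in the revealed-link index $l$ that collapses the $s_t^{i,j,c,l}$ into the distance-to-default-one states $s_t^{i,j,c,c-1}$, and a change of summation order that exchanges the initial-equity index with the current-state index, together with the elementary identity $(i-m+1)\binom{i}{m-1}=i\binom{i-1}{m-1}$ and Vandermonde-type binomial convolutions.

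For the Hamiltonian, put $t=t_f$ in (\ref{eq:Ham_NC}) and substitute $w_{t_f}^{i,j,c,l}=vj-1$. Telescoping the first two sums over $l$, the states $(i,j,c,l)$ with $l=0,\dots,c-1$ for a fixed $(i,j,c)$ together contribute $-(vj-1)(i-c+1)s_{t_f}^{i,j,c,c-1}$, so that $\mathcal H(s_{t_f},u_{t_f},w_{t_f})=\sum_{i,j,1\le c\le i}(i-c+1)s_{t_f}^{i,j,c,c-1}\bigl(-(vj-1)+(K+vj-1)u_{t_f}^{i,j,c,c-1}\bigr)$. Using the left-hand value $u_{t_f}=u_{t_f^-}$ from \lemref{OC} --- equivalently, that the control-affine Hamiltonian is minimised in $u$ along the optimal arc --- the bracket equals $-\max\{vj-1,-K\}$, which is the source of the $\max\{-K,vj-1\}$ in (\ref{eq:H_NCSol}). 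Then one substitutes the explicit $s_{t_f}^{i,j,c,c-1}$ from \lemref{s_t_Sol}, exchanges the sum over the final state with the sum over the initial equity inside $s_{t_f}^{i,j,c,c-1}$, applies $(i-m+1)\binom{i}{m-1}=i\binom{i-1}{m-1}$, and uses $e^{t_0}=1/\lambda$; then $\mathcal H(s_{t_f},\cdot,\cdot)=-ve^{-t_f}$ is exactly (\ref{eq:H_NCSol}).

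For the terminal condition, $d_{t_f}^-=0$ reads $\lambda(1-e^{t_0-t_f})=\sum_{i,j,0\le c\le i}jp(i,j,c)-\sum_{(i,j,c,l)\in\Gamma^\epsilon}js_{t_f}^{i,j,c,l}$, so it suffices to evaluate, for each fixed $(i,j)$, the surviving mass $\sum_{(c,l)}s_{t_f}^{i,j,c,l}$. Splitting into the states with $l\le c-2$ (which by (\ref{eq:s_ijcl_Sol_General_1}) carry only nodes of initial equity $c$), the states $(c,c-1)$ (which by (\ref{eq:s_cc-1_t_Sol_cGh_AjL-K}) and (\ref{eq:s_cl_t_Sol_AjG-K}) mix several initial equities), and the invulnerable state $(i+1,i)$, then reorganising by initial equity, the binomial coefficients recombine --- via $\sum_n\binom{i}{n}x^n(1-x)^{i-n}=1$ split at $n=c$ --- into $\sum_c p(i,j,c)\sum_{n=c}^i\binom{i}{n}(1-e^{t_0-t^{i,j,c}})^n(e^{t_0-t^{i,j,c}})^{i-n}$; the only mass that escapes this pattern is the extra $s_{t_f}^{i,j,i+1,i}$ created by the singular control when $vj-1=-K$, which by (\ref{eq:s_ip1i_t_Sol_AjE-K}) is precisely the $\mathbbm1_{(vj-1=-K)}$ correction in (\ref{eq:Outlink_NCSol}). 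In the objective $K\cdot it_{t_f}+d_{t_f}$, the $d_{t_f}$ piece is the same computation with the factor $j$ deleted, giving the last block of (\ref{eq:Obj_NCSol}), while $it_{t_f}=\int_{t_0}^{t_f}\sum_{i,j,1\le c\le i}(i-c+1)s_t^{i,j,c,c-1}u_t^{i,j,c,c-1}\,dt$ is obtained either by integrating the piecewise-explicit $s_t^{i,j,c,c-1}$ over the intervals delimited by the switching times $t^{i,j,c}$ (each piece an elementary, incomplete-Beta-type integral evaluating to a binomial sum) or by reading each integrand as the intervention inflow into the state $(i,j,c+1,c)$ of the ODE system and using the conservation relations; reorganising by initial equity as above yields the $K$-block of (\ref{eq:Obj_NCSol}).

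I expect the main obstacle to be the $it_{t_f}$ computation and, throughout, the case bookkeeping: one must track how $t^{i,j,c}$ varies with $c$ (monotone, and pinned to $t_0$ or $t_f$ in the two extreme regimes of \lemref{OC}) and keep the singular regime $vj-1=-K$ --- where $s_{t_f}^{i,j,i+1,i}$ is genuinely nonzero --- separate, so that the indicator corrections in (\ref{eq:Outlink_NCSol}) and (\ref{eq:Obj_NCSol}) emerge with the right signs. The telescoping and the identity $(i-m+1)\binom{i}{m-1}=i\binom{i-1}{m-1}$ are routine; it is the consistent handling of summation ranges across the cases that needs care.
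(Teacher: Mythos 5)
Your proposal is correct and follows essentially the same route as the paper, which presents \propref{OP_in_exponent} as the outcome of directly substituting $s_{t_f}$ from \lemref{s_t_Sol}, $u_t$ from \lemref{OC} and $w_{t_f}^{\beta}=vj-1$ into (\ref{eq:H_TC_forOP}), (\ref{eq:S_TC_forOP}) and (\ref{eq:Obj_forOP}) without spelling out the algebra. Your intermediate reductions check out: the telescoping in $l$ collapses the Hamiltonian to $\sum_{i,j,1\le c\le i}(i-c+1)s_{t_f}^{i,j,c,c-1}\bigl(-(vj-1)+(K+vj-1)u_{t_f}^{i,j,c,c-1}\bigr)$, minimization in $u$ yields the $\max\{-K,vj-1\}$ factor, and the exchange of the initial-equity and current-state indices together with $(i-c+1)\binom{i}{c-1}=i\binom{i-1}{c-1}$ and $e^{-t_f}=\lambda e^{t_0-t_f}$ reproduces (\ref{eq:H_NCSol}), with the singular-arc term $s_{t_f}^{i,j,i+1,i}$ giving the indicator corrections exactly as you describe.
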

We further simplify the expressions in \propref{OP_in_exponent}.
Define 
\begin{align}
y & =1-e^{t_{0}-t_{f}},\nonumber \\
z & =1-e^{t_{0}-t_{s}},\nonumber \\
x^{i,j,c,c-1} & =1-e^{t_{0}-t^{i,j,c}}\nonumber \\
 & =\begin{cases}
y & \text{if }K+vj-1\geq0\text{ or }c=0\\
1-(1-y)\frac{(i-c)K}{(i-c+1)K+vj-1} & \text{if }K+vj-1<0\text{ and }1\leq c<i+\frac{K+vj-1}{Ky}\\
0 & \text{otherwise},
\end{cases}\nonumber \\
\label{eq:x_and_xijc}
\end{align}
then because $t_{0}\leq t_{s}\leq t_{f}$ and the function $1-e^{t_{0}-t}$
is increasing in $t$, $0\leq z\leq y\leq1$. As a result, we can
substitute the new variables $(y,v,z)$ into the objective function
(\ref{eq:Obj_forOP}), the Hamiltonian equation (\ref{eq:H_TC_forOP})
and the terminal condition (\ref{eq:S_TC_forOP}). Moreover, we add
the definition of $x^{i,j,c,c-1}$ and $0\leq z\leq y\leq1$ and as
a result, we obtain a new optimization problem defined as (\ref{eq:OptProgram})
in \secref{MainResults}. After solving (\ref{eq:OptProgram}) for
$(y^{*},v^{*},z^{*})$, we are able to calculate $u_{t}^{*}$ and
$s_{t}^{*}$ (or $u_{\tau}^{*}$ and $s_{\tau}^{*}$ after changing
the time index) in order to present \thmref{theorem3} and \thmref{theorem4}.

\section{\label{sec:MainResults}Main results}

\subsection{Contagion process with no interventions}

We first present the theorem when no interventions are provided in
the contagion process. For $\epsilon>0$, recall $M^{\epsilon}$ is
defined as in \defref{Mepsilon} and note that all indexes $(i,j)$
are in the range $i\lor j<M^{\epsilon}$ in what follows. 
\begin{defn}
\label{def:DefOfI(x)}($I$ function) Define a function $I:\ [0,1]\rightarrow[0,1]$
as 
\begin{equation}
I(y)\coloneqq\frac{1}{\lambda}\sum_{i\lor j<M^{\epsilon}}j\sum_{c=0}^{i}p(i,j,c)\mathbb{P}(\text{Bin}(i,y)\geq c)\label{eq:I}
\end{equation}
where $\text{Bin}(i,y)$ denotes a binomial random variable with $i$
trials and the probability of occurrence $y$, so $\mathbb{P}(\text{Bin}(i,y)\geq c)=\sum_{m=c}^{i}\binom{i}{m}y^{m}(1-y)^{i-m}$.
$I(y)$ is constructed to represent the asymptotic scaled total out
degree from the default set at scaled time $y$ and attains its form
(\ref{eq:I}) from the solution of a set of differential equations.
It can be interpreted as the expected number of defaults if an end
node of a randomly selected directed link defaults with probability
$y$. 

Since $I(0)=\frac{1}{\lambda}\sum_{i\lor j<M^{\epsilon}}jp(i,j,0)\geq0$,
and from the definition of $\lambda$, 
\[
I(1)=\frac{1}{\lambda}\sum_{i\lor j<M^{\epsilon}}j\sum_{c=0}^{i}p(i,j,c)\leq1,
\]
and $I(y)$ is continuous and increasing, it has at least one fixed
point in $[0,1]$. Further define 
\[
J(y)\coloneqq\sum_{i\lor j<M^{\epsilon}}\sum_{c=0}^{i}p(i,j,c)\mathbb{P}(\text{Bin}(i,y)\geq c).
\]
\end{defn}
\begin{thm}
\label{thm:theorem1} (Extends from theorem 3.8 of \citet{Amini2013})
Consider a sequence of networks with initial conditions $(P_{n})_{n\geq1}$
satisfying \assuref{Regularity} where $p=(p(i,j,c))_{i,j,0\leq c\leq i}$
such that $p(i,j,c)=0$ for $i\lor j\geq M^{\epsilon},0\leq c\leq i$
and no interventions are implemented, let $y^{*}\in[0,1]$ be the
smallest fixed point of $I$.
\end{thm}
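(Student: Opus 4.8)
The plan is to run the no-intervention ($u\equiv0$) specialization of the convergence results already established and then reduce the resulting deterministic limits to binomial tail sums; I expect the conclusion of \thmref{theorem1} to be that $\tfrac{T_n}{n}\rightarrow\lambda y^{*}$ and $\tfrac{D_n}{n}\rightarrow J(y^{*})$ in probability and in expectation, under the proviso that $y^{*}=1$, or $y^{*}<1$ with $I'(y^{*})<1$, mirroring Theorem~3.8 of \citet{Amini2013}.

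First I would take $u\equiv0$, so that the control policy $G_n\equiv0$ trivially satisfies \assuref{G_n}. Then the $s_\tau^{i,j,i+1,i}$ components stay at $0$, and by \propref{ODE_Sol_stau} (with $\tau_1=0$ and $s_0^{i,j,c,l}=p(i,j,c)\mathbbm1_{(l=0)}$) the solution of the ODEs of \defref{s} is $s_\tau^{i,j,c,l}=p(i,j,c)\binom{i}{l}(1-\tau/\lambda)^{i-l}(\tau/\lambda)^{l}$ for all $(i,j,c,l)\in\Gamma$ with $c\le i$. Writing $y=\tau/\lambda\in[0,1]$ and noting that $p$ is supported on $i\lor j<M^{\epsilon}$ (so that $\Gamma=\Gamma^{\epsilon}$ for the $p$-components and the error terms of \propref{DD_conv} reduce to $o_p(1)$), I would substitute this into $d_\tau^{-}$ and $d_\tau$ of \defref{Def_d}. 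For each fixed $(i,j)$ one has $\sum_{c=1}^{i}\sum_{l=0}^{c-1}\binom{i}{l}(1-y)^{i-l}y^{l}=\sum_{c=1}^{i}\mathbb{P}(\text{Bin}(i,y)<c)$, which together with $\mathbb{P}(\text{Bin}(i,y)\ge0)=1$ collapses the sums to $d_\tau^{-}=\lambda\bigl(I(y)-y\bigr)$ and $d_\tau=J(y)$, with $I,J$ as in \defref{DefOfI(x)}.

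Next I would identify the end time. Since $\tau_f=\inf\{\tau\in[0,\lambda]:d_\tau^{-}=0\}$ and $d_\tau^{-}=\lambda(I(y)-y)$, it is $\lambda$ times the smallest zero of $I(y)-y$ on $[0,1]$; because $I(0)=\tfrac1\lambda\sum_{i,j}jp(i,j,0)\ge0$, $I$ is a polynomial, and $y^{*}$ is by definition the smallest fixed point, $I(y)>y$ on $[0,y^{*})$, whence $\tau_f=\lambda y^{*}$ (the degenerate subcase $I(0)=0$ gives $y^{*}=0$ and an empty cascade). Now \propref{Conv_obj} applies: its hypothesis holds because either $y^{*}=1$, i.e.\ $\tau_f=\lambda$, or $y^{*}<1$ and $\tfrac{d}{d\tau}d_{\tau_f}^{-}=I'(y^{*})-1<0$, which is precisely the stable-crossing condition. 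It delivers $\tfrac{D_n}{n}\overset{p}{\rightarrow}d_{\tau_f}=J(y^{*})$ and, by boundedness and uniform integrability, $\mathbb{E}\tfrac{D_n}{n}\rightarrow J(y^{*})$; the convergence $\tfrac{T_n}{n}\overset{p}{\rightarrow}\lambda y^{*}$ is the intermediate step inside the proof of \propref{Conv_obj}, and $IT_n=0$ throughout.

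The main obstacle is the fixed-point case analysis rather than any estimate: one has to argue $I(y)>y$ strictly below $y^{*}$, isolate the degenerate case $y^{*}=0$, and exclude --- by hypothesis, as in \citet{Amini2013} --- a tangential crossing $I'(y^{*})=1$ that lies outside the scope of \propref{Conv_obj}. Everything else is bookkeeping with binomial tail probabilities, since the Wormald-type concentration is already supplied by \propref{Sconv} and \propref{Conv_obj}.
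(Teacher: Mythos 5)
Your proof is essentially correct, but it follows a genuinely different route from the paper's. The paper proves \thmref{theorem1} by a reduction: it assigns to each node the permutations of its in-stubs, observes that the resulting threshold function is constant ($\theta(v,\tau)=e_{0}^{v}$ for every permutation $\tau$), checks that \assuref{Regularity} implies Assumptions 4.1--4.2 of \citet{Amini2013} and that the no-intervention dynamics of \subsecref{Dynamics} coincide in law with the auxiliary model of their Definition 5.4, and then simply quotes their Theorem 3.8. You instead stay inside the paper's own machinery: you set $u\equiv0$ (which indeed satisfies \assuref{G_n}), read off the explicit binomial solution from \propref{ODE_Sol_stau}, collapse $d_{\tau}^{-}$ and $d_{\tau}$ of \defref{Def_d} to $\lambda\left(I(y)-y\right)$ and $J(y)$ with $y=\tau/\lambda$, identify $\tau_{f}=\lambda y^{*}$ via the smallest-fixed-point argument, and invoke \propref{Conv_obj}, whose hypotheses ($\tau_{f}=\lambda$, or $\tau_{f}<\lambda$ with $\frac{d}{d\tau}d_{\tau_{f}}^{-}=I'(y^{*})-1<0$) are exactly the theorem's stability conditions. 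This buys a self-contained proof that also delivers the end-time limit $T_{n}/n\to\lambda y^{*}$ and convergence in expectation, at the cost of being longer than the paper's two-line citation. Two small remarks: your claim that the $2\epsilon$ error in \propref{DD_conv} disappears because $p$ is supported on $i\lor j<M^{\epsilon}$ is not quite right (the tails of $P_{n}$ need not vanish, only be $<\epsilon$ for large $n$), but this is harmless since \propref{Conv_obj}, which you invoke, already performs the $\epsilon\to0$ step; and in the case $y^{*}=1$ your argument literally yields $D_{n}/n\overset{p}{\to}J(1)$, so identifying this with $1$ requires the additional observation that $I(1)=1$ forces the invulnerable mass to be asymptotically negligible --- the same identification the paper itself makes (without further comment) in the proof of \thmref{theorem4}, so it is a shared imprecision of the statement rather than a defect of your argument.
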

\begin{enumerate}
\item If $y^{*}=1$, then asymptotically almost all nodes default during
the contagion process, i.e. 
\[
\frac{D_{n}}{n}\overset{p}{\to}1.
\]
\item If $y^{*}<1$ and it is a stable fixed point, i.e. $I'(y^{*})<1$,
then asymptotically the fraction of final defaulted nodes 
\[
\frac{D_{n}}{n}\overset{p}{\to}J(y^{*}).
\]
Particularly, if $I(0)=0$ and $I'(0)<1$, then 
\[
\frac{D_{n}}{n}\overset{p}{\to}\sum_{i\lor j<M^{\epsilon}}p(i,j,0).
\]
\end{enumerate}
\begin{rem}
Theorem \ref{thm:theorem1} states that the stopping time of the default
contagion process is fully governed by the smallest fixed point of
the function $I(y)$ and the asymptotic fraction of defaulted nodes
at the end of the process can be 1, 0 or a fractional number, representing
almost all nodes default, almost no nodes default or a partial number
of nodes default, respectively. 
\end{rem}

\subsection{Contagion process with interventions}

We present the theorems for the contagion process with interventions
as the result of solving the finite dimensional optimal control problem
(FOCP) in \defref{FOCP}. For $\epsilon>0$, recall $M^{\epsilon}$
is defined as in \defref{Mepsilon}. By \lemref{TwoOCP}, the optimal
objective function value of FOCP is within $(K+1)\epsilon$ distance
to the optimal objective function value of the unrestricted (\ref{eq:NonAOCP}).
Note that all indexes $(i,j)$ are in the range $i\lor j<M^{\epsilon}$
in what follows. From \remref{BoundedDegree}, for a given vector
$p=(p(i,j,c))_{0\leq i,0\leq j,0\leq c\leq i}$, the restriction of
$(i,j)$ to $i\lor j<M^{\epsilon}$ indicates that we use only $p(i,j,c)$,
$i\lor j<M^{\epsilon},$ $0\leq c\leq i$ in the calculation. It is
equivalent to setting $p(i,j,c)=0$, $i\lor j\geq M^{\epsilon}$,
$0\leq c\leq i$ while keeping $p(i,j,c)$, $i\lor j<M^{\epsilon}$,
$0\leq c\leq i$ unchanged, which implies asymptotically nodes with
$i\lor j\geq M^{\epsilon}$ are all invulnerable. 

First we define the optimization problem (\ref{eq:OptProgram}) based
on which we present \thmref{theorem3} and \thmref{theorem4}.
\begin{defn}
\label{def:IJtilde}($\tilde{I}$ and $\tilde{J}$ function) Let $x=(x^{\beta})_{\beta\in\Phi^{\epsilon}}$
where $x^{\beta}=x^{\beta}(y,v)$ and $p=(p(i,j,c))_{i,j,0\leq c\leq i}$.
We define the functions $\tilde{I}(y,v,z)$ and $\tilde{J}(y,v,z)$
as 
\begin{alignat}{1}
\tilde{I}(y,v,z) & =\frac{1}{\lambda}\sum_{i\lor j<M^{\epsilon}}j\nonumber \\
 & \left[\sum_{c=0}^{i}p(i,j,c)\mathbb{P}(\text{Bin}(i,x^{i,j,c,c-1})\geq c)-\mathbbm1_{(vj-1=-K)}p(i,j,i)(y^{i}-z^{i})\right],\label{eq:Itilde}\\
\tilde{J}(y,v,z) & =\sum_{i\lor j<M^{\epsilon}}\nonumber \\
 & \left[\sum_{c=0}^{i}p(i,j,c)\mathbb{P}(\text{Bin}(i,x^{i,j,c,c-1})\geq c)-\mathbbm1_{(vj-1=-K)}p(i,j,i)(y^{i}-z^{i})\right].\label{eq:Jtilde}
\end{alignat}
\end{defn}
Note we may write $\tilde{I}(y;v,z)$ and $\tilde{J}(y;v,z)$ to indicate
that we treat $y$ as the variable and $v,z$ as the fixed parameters.
To present the main results, we define the following optimization
problem.
\begin{defn}
Define the following optimization problem.
\begin{flalign}
\tag{OP}\min_{y,v,z} & \quad K\cdot it(y,v,z)+\tilde{J}(y,v,z)\label{eq:OptProgram}\\
\label{eq:Obj_OptProgram}\\
\text{st}\qquad & (1-y)\tilde{H}(y,v)=\lambda v(1-y)\nonumber \\
\label{eq:Hf in y}\\
 & \tilde{I}(y,v,z)=y\nonumber \\
\label{eq:Outlinks_f in y}\\
 & x^{i,j,c,c-1}=\begin{cases}
y & \text{if }K+vj-1\geq0\text{ or }c=0\\
1-(1-y)\frac{(i-c)K}{(i-c+1)K+vj-1} & \text{if }K+vj-1<0\text{ and }1\leq c<i+\frac{K+vj-1}{Ky}\\
0 & \text{otherwise}
\end{cases}\nonumber \\
 & \forall(i,j,c,c-1)\in\Phi^{\epsilon}\nonumber \\
 & 0\leq z\leq y\leq1\nonumber \\
 & y,v,z\in\mathbb{R},\nonumber 
\end{flalign}
where
\begin{align}
it(y,v,z) & =\sum_{i\lor j<M^{\epsilon}}\{\sum_{c=1}^{i}p(i,j,c)\nonumber \\
 & \sum_{m=c}^{i}\sum_{n=0}^{c-1}(m-c+1)\mathbb{P}(\text{Multin}(i,x^{i,j,c,c-1},y-x^{i,j,c,c-1},1-y)=(n,m-n,i-m))\nonumber \\
 & -\mathbbm1_{(vj-1=-K)}p(i,j,i)(y^{i}-z^{i})\},\\
\tilde{H}(y,v) & =\sum_{i\lor j<M^{\epsilon}}\max\{-K,vj-1\}i\sum_{c=1}^{i}p(i,j,c)\nonumber \\
 & \left[\mathbb{P}(\text{Bin}(i-1,y)\geq c-1)-\mathbb{P}(\text{Bin}(i-1,x^{i,j,c,c-1})\geq c)\right],\nonumber \\
\end{align}
where $\text{Bin}(i,y)$ denotes a binomial random variable in $i$
trials with the probability of occurrence $y$, so $\mathbb{P}(\text{Bin}(i,y)\geq c)=\sum_{m=c}^{i}\binom{i}{m}y^{m}(1-y)^{i-m}$
and $\text{Multin}(i,x,y,1-x-y)=(a,b,i-a-b)$ denotes a multinomial
distribution in $i$ trials with the probabilities of occurrence of
each of three types being $x$,$y$ and $1-x-y$, and turns out to
have $a$, $b$ and $i-a-b$ occurrences of each type, respectively,
so $\mathbb{P}(\text{Multin}(i,x,y,1-x-y)=(a,b,i-a-b))=\binom{i}{a,b,i-a-b}x^{a}y^{b}(1-x-y)^{i-a-b}$. 
\end{defn}
Note that $x^{*}=x(y^{*},v^{*})$, the function at the optimal $(y^{*},v^{*})$.
Then we are ready to present the next main theorem about the optimal
control policy. 
\begin{thm}
\label{thm:theorem3} For the asymptotic control problem (\ref{eq:ACP}),
consider a sequence of networks with initial conditions $(P_{n})_{n\geq1}$
satisfying \assuref{Regularity} where $p=(p(i,j,c))_{i,j,0\leq c\leq i}$
such that $p(i,j,c)=0$ for $i\lor j\geq M^{\epsilon},0\leq c\leq i$
and let $(G_{n})_{n\geq1}$ be the sequence of control policies for
the contagion process on the sequence of networks and $(G_{n})_{n\geq1}$
satisfy \assuref{G_n}. Moreover, let $(y^{*},v^{*},z^{*})$ be the
optimal solution for the optimization problem (\ref{eq:OptProgram}).
If $y^{*}=1$, or $y^{*}\in[0,1)$ and it is a stable fixed point
of $\tilde{I}(y;v^{*},z^{*})$, i.e. $\tilde{I}'(y^{*};v^{*},z^{*})<1$,
the optimal control policy $G_{n}^{*}=(g_{1}^{(n)*},\ldots,g_{m}^{(n)*})$
satisfies that for $0\leq k\leq m-1$,
\[
g_{k+1}^{(n)*}(s,w)=\begin{cases}
\mathbbm1_{(k\geq n\lambda(x^{*})^{i,j,c,c-1})} & \text{if }w=(i,j,c,c-1)\in\Phi^{\epsilon}\text{ except }c=i\text{ and }v^{*}j-1=-K\\
\mathbbm1_{(k\geq n\lambda z^{*})} & \text{if }w=(i,j,i,i-1)\text{ and }v^{*}j-1=-K\\
0 & \text{otherwise},
\end{cases}
\]

where $x^{*}=x(y^{*},v^{*})$ and 
\begin{align}
 & x^{i,j,c,c-1}=\begin{cases}
y & \text{if }K+vj-1\geq0\text{ or }c=0\\
1-(1-y)\frac{(i-c)K}{(i-c+1)K+vj-1} & \text{if }K+vj-1<0\text{ and }1\leq c<i+\frac{K+vj-1}{Ky}\\
0 & \text{otherwise},
\end{cases}
\end{align}
$(i,j,c,c-1)\in\Phi^{\epsilon}$.
\end{thm}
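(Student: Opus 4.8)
The plan is to assemble the chain of reductions already in place and then translate the optimal control of the autonomous problem back into a finite-$n$ policy. First I would note that, by \lemref{TwoOCP}, under \assuref{Regularity} with $p$ supported on $\{i\lor j<M^{\epsilon}\}$ the asymptotic control problem (\ref{eq:ACP}) has the same optimal value as the finite dimensional problem (FOCP) of \defref{FOCP}, and that under the time change $t=-\ln(\lambda-\tau)$ — so that $\tau=\lambda(1-e^{t_{0}-t})$ and $e^{t_{0}}=1/\lambda$ — (FOCP) is equivalent to the autonomous problem (\ref{eq:AOCP}). Then I would invoke \propref{NecessaryOptimalCond}: an optimal pair $(s_{t},u_{t})$, together with an adjoint $w_{t}$ and a multiplier $v$, must satisfy the bang-bang rule in (\ref{enu:Cond1_NC}), the adjoint ODEs in (\ref{enu:Cond2_NC}), the constancy of the Hamiltonian, and the transversality conditions (\ref{eq:w_TC})--(\ref{eq:S_TC}).

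The heart of the argument is to solve this two-point boundary value problem, which is precisely what \lemref{OC} and \lemref{s_t_Sol} accomplish: disentangling the recursion $u_{t}\leftrightarrow w_{t}\leftrightarrow s_{t}$, one shows that $w_{t}^{i,j,c+1,c}$ crosses the level $-K$ at most once, so that $u_{t}^{i,j,c,c-1}=\mathbbm1_{(t\ge t^{i,j,c})}$ with the explicit threshold $t^{i,j,c}$ depending only on $(v,t_{f})$ (or on the extra free switching time $t_{s}$ on the singular branch $vj-1=-K$, $c=i$ of \remref{SingularControl}). Substituting these together with the explicit $s_{t}$ into the two remaining scalar relations (\ref{eq:H_TC}), (\ref{eq:S_TC}) and into the objective (\ref{eq:Obj_AOCP}), and passing to the variables $y=1-e^{t_{0}-t_{f}}$, $z=1-e^{t_{0}-t_{s}}$ and $x^{i,j,c,c-1}=1-e^{t_{0}-t^{i,j,c}}$ via \propref{OP_in_exponent} and (\ref{eq:x_and_xijc}), I would show that minimizing the objective over all parameter triples that satisfy the necessary conditions is exactly the optimization problem (\ref{eq:OptProgram}); hence its minimizer $(y^{*},v^{*},z^{*})$ corresponds to the optimal $(t_{f}^{*},v^{*},t_{s}^{*})$.

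It then remains to unwind the substitutions. From $x^{i,j,c,c-1}=1-e^{t_{0}-t^{i,j,c}}$ and $\tau=\lambda(1-e^{t_{0}-t})$ one gets the threshold in $\tau$-time, $\tau^{i,j,c}=\lambda\,(x^{*})^{i,j,c,c-1}$, and likewise $\tau_{s}=\lambda z^{*}$, so the optimal asymptotic control is $u^{i,j,c,c-1}(\tau)=\mathbbm1_{(\tau\ge\lambda (x^{*})^{i,j,c,c-1})}$ (and $\mathbbm1_{(\tau\ge\lambda z^{*})}$ on the singular branch). By \assuref{G_n} the corresponding control policy at finite $n$ is $g_{k+1}^{(n)*}(s,w)=u^{i,j,c,c-1}(k/n)$ when $w=(i,j,c,c-1)\in\Phi^{\epsilon}$ and $0$ otherwise, the latter by \propref{OptimalControlCandidates}; rewriting $k/n\ge\lambda (x^{*})^{i,j,c,c-1}$ as $k\ge n\lambda (x^{*})^{i,j,c,c-1}$ yields the asserted form. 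Finally, the hypothesis that $y^{*}=1$, or that $y^{*}\in[0,1)$ is a stable fixed point of $\tilde{I}(\cdot\,;v^{*},z^{*})$, is exactly what forces $\tfrac{d}{d\tau}d_{\tau_{f}}^{-}<0$ at $\tau_{f}=\lambda y^{*}$, so that \propref{Conv_obj} applies and the stochastic objective of the constructed policy converges to the value of (\ref{eq:OptProgram}); combined with \lemref{TwoOCP} this certifies that $G_{n}^{*}$ is asymptotically optimal.

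The main obstacle lies in the second paragraph: showing that the Pontryagin conditions, though only necessary in general, pin the candidate set down to the one-threshold-per-state family of \lemref{OC}, and that optimizing over that family is genuinely the same problem as (\ref{eq:OptProgram}). This needs the delicate monotonicity and sign analysis of the adjoint trajectory $w_{t}$ (to obtain the single crossing of $-K$), a careful treatment of the singular arc of \remref{SingularControl}, and a selection argument to discard spurious solutions of the necessary conditions. A lesser technical point is the existence of an optimizer within the piecewise-constant class $\Pi$ so that Pontryagin is applicable at all, which one can obtain from the fact that the relevant candidates are parametrized by $(y,v,z)$ over a compact set on which the objective is continuous.
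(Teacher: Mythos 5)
Your proposal is correct and follows essentially the same route as the paper: take the threshold form of the optimal control from \lemref{OC}, change variables via the map $1-e^{t_{0}-t}$ (so $t^{i,j,c}\mapsto x^{i,j,c,c-1}$, $t_{s}\mapsto z$, $t_{f}\mapsto y$, $t\mapsto\tau/\lambda$), and invoke \assuref{G_n} to read off $g_{k+1}^{(n)*}$ with the threshold $k\geq n\lambda(x^{*})^{i,j,c,c-1}$ (resp.\ $n\lambda z^{*}$ on the singular branch). The extra verification you flag as the ``main obstacle'' (single crossing of the adjoint, the singular arc, equivalence with (\ref{eq:OptProgram})) is exactly what the paper delegates to \lemref{OC}, \lemref{s_t_Sol} and \propref{OP_in_exponent}, which its own proof likewise cites without reproving.
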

The next theorem states conclusions for the asymptotic fraction of
final defaulted nodes under the optimal policy satisfying \thmref{theorem3}.
\begin{thm}
\label{thm:theorem4}For the asymptotic control problem (\ref{eq:ACP}),
consider a sequence of networks with initial conditions $(P_{n})_{n\geq1}$
satisfying \assuref{Regularity} where $p=(p(i,j,c))_{i,j,0\leq c\leq i}$
such that $p(i,j,c)=0$ for $i\lor j\geq M^{\epsilon},0\leq c\leq i$
and let $(G_{n})_{n\geq1}$ be the sequence of control policies for
the contagion process on the sequence of networks and $(G_{n})_{n\geq1}$
satisfy \assuref{G_n}. Moreover, let $(y^{*},v^{*},z^{*})$ be the
optimal solution for the optimization problem (\ref{eq:OptProgram}).
Then under the optimal control policy $G_{n}^{*}$ as in \thmref{theorem3},
we have the following conclusions for the asymptotic fraction of final
defaulted nodes. 
\begin{enumerate}
\item \label{enu:1OfThm4}If $y^{*}=1$, then asymptotically almost all
nodes default during the default contagion process, i.e. 
\[
\frac{D_{n}}{n}\overset{p}{\to}1.
\]
\item \label{enu:2OfThm4}If $y^{*}\in[0,1)$ and it is a stable fixed point
of $\tilde{I}(y;v^{*},z^{*})$, i.e. $\tilde{I}'(y^{*};v^{*},z^{*})<1$,
then asymptotically the fraction of final defaulted nodes 
\begin{eqnarray}
\frac{D_{n}}{n} & \overset{p}{\to} & \tilde{J}(y^{*},v^{*},z^{*}),
\end{eqnarray}
where $x^{i,j,c,c-1}$ in $\tilde{J}$ is defined as
\begin{align}
 & x^{i,j,c,c-1}=\begin{cases}
y & \text{if }K+vj-1\geq0\text{ or }c=0\\
1-(1-y)\frac{(i-c)K}{(i-c+1)K+vj-1} & \text{if }K+vj-1<0\text{ and }1\leq c<i+\frac{K+vj-1}{Ky}\\
0 & \text{otherwise}
\end{cases}\nonumber \\
 & \forall(i,j,c,c-1)\in\Phi^{\epsilon}.
\end{align}

Particularly, if $y^{*}=0$ and $\tilde{I}'(0;v^{*},z^{*})<1$, then
\[
\frac{D_{n}}{n}\overset{p}{\to}\sum_{i\lor j<M^{\epsilon}}p(i,j,0),
\]
 i.e. the final defaulted nodes only consist of those having defaulted
initially. 
\end{enumerate}
\end{thm}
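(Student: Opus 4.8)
The plan is to derive \thmref{theorem4} from the convergence statement \propref{Conv_obj} applied to the explicit optimal policy $G_n^*$ constructed in \thmref{theorem3}, after rewriting the deterministic limits $d_{\tau_f}$ and $d_{\tau_f}^{-}$ of \defref{Def_d} in terms of the functions $\tilde I$ and $\tilde J$ of \defref{IJtilde}. First I would check that $G_n^*$ is admissible in the sense of \assuref{G_n}: on $\Phi^\epsilon$ it equals $u^{i,j,c,c-1}(k/n)$ with $u_\tau^{i,j,c,c-1}=\mathbbm 1_{(\tau\ge\lambda(x^*)^{i,j,c,c-1})}$, or $\mathbbm 1_{(\tau\ge\lambda z^*)}$ on the singular states $\{v^*j-1=-K,\ c=i\}$, which is a piecewise constant $\{0,1\}$-valued function of the scaled time, and it is $0$ off $\Phi^\epsilon$. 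Hence \propref{Sconv} and \propref{DD_conv} apply with this control $u^*$, giving $S_k/n\to s_{k/n}$ uniformly on $[0,n\hat\lambda]$ and $D_k/n$, $D_k^{-}/n$ within $2\epsilon+o_p(1)$ of $d_{k/n}$, $d_{k/n}^{-}$, where $s_\tau$ solves the ODEs of \defref{s} driven by $u^*$.

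Next I would use that, by \propref{OP_in_exponent}, this $u^*$ is precisely the control attached to an optimal triple $(y^*,v^*,z^*)$ of the optimization problem (\ref{eq:OptProgram}) via \lemref{OC}, so that $s_{t_f}$ is the solution of \lemref{s_t_Sol} evaluated with $e^{t_0-t_f}=1-y^*$ and $e^{t_0-t_s}=1-z^*$. Substituting these expressions into \defref{Def_d} and collecting the binomial and multinomial sums --- the computation already carried out in \propref{OP_in_exponent} --- yields the identities $d_{\tau_f}=\tilde J(y^*,v^*,z^*)$ and $d_{\tau_f}^{-}=\lambda(\tilde I(y^*,v^*,z^*)-y^*)$ with $x^{i,j,c,c-1}$ as in (\ref{eq:x_and_xijc}); in particular the terminal constraint $d_{\tau_f}^{-}=0$ of (\ref{eq:NonAOCP}) becomes the fixed-point equation $\tilde I(y^*;v^*,z^*)=y^*$. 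I would also record, exactly as for the function $I$ in \defref{DefOfI(x)}, that $\tilde I(\cdot;v^*,z^*)$ is continuous and nondecreasing on $[0,1]$ (immediate from (\ref{eq:Itilde}) and the monotonicity of $x^{i,j,c,c-1}$ in $y$), so that ``smallest fixed point'' and ``stable fixed point'' are meaningful.

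With this dictionary in hand the two regimes match the two alternatives of \propref{Conv_obj}. If $y^*\in[0,1)$ and $\tilde I'(y^*;v^*,z^*)<1$, then after the change of variables this stability inequality is the same as $\frac{d}{d\tau}d_{\tau_f}^{-}<0$, so \propref{Conv_obj} gives $D_n/n\overset{p}{\to}d_{\tau_f}=\tilde J(y^*,v^*,z^*)$, which is part (\ref{enu:2OfThm4}); the sub-case $y^*=0$ follows because then $x^{i,j,c,c-1}=0$ for every relevant state and the singular term $p(i,j,i)(y^i-z^i)$ vanishes since $0\le z^*\le y^*=0$, whence $\mathbb P(\text{Bin}(i,0)\ge c)=\mathbbm 1_{(c=0)}$ and $\tilde J(0,v^*,z^*)=\sum_{i\lor j<M^\epsilon}p(i,j,0)$. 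If $y^*=1$, then $\tau_f=\lambda$, which is the first alternative of \propref{Conv_obj}, so $D_n/n\overset{p}{\to}d_\lambda$; a direct evaluation from \lemref{s_t_Sol} at $e^{t_0-t_f}=0$ shows that every survivor count $s^{i,j,c,l}_{t_f}$ with $l\le c-1$ carries the vanishing factor $(e^{t_0-t_f})^{i-l}$ and that the invulnerable-via-intervention count $s^{i,j,i+1,i}_{t_f}$ is forced to vanish as well (at optimality no intervention that produces survivors is performed when the cascade is total), so $d_\lambda=\sum_{i\lor j<M^\epsilon,\ 0\le c\le i}p(i,j,c)=1$, which is part (\ref{enu:1OfThm4}). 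Finally, since everything is stated for the finite-dimensional problem (degrees below $M^\epsilon$), \lemref{TwoOCP} together with $\epsilon\to0$ recovers the conclusion for (\ref{eq:ACP}).

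The step I expect to be hardest is the translation in the second paragraph: matching the threshold structure of $u^*$ (the cutoffs $t^{i,j,c}$, equivalently $x^{i,j,c,c-1}$) against the nested sums in \lemref{s_t_Sol}, and especially the singular-control regime $v^*j-1=-K$, where the extra degree of freedom $t_s$ (i.e. $z^*$) contributes the $p(i,j,i)(y^i-z^i)$ terms in both $\tilde I$ and $\tilde J$; one must track the signs so that $d_{\tau_f}$ and $d_{\tau_f}^{-}$ reduce exactly to (\ref{eq:Jtilde}) and $\lambda(\tilde I-y)$. A secondary difficulty is checking that the hypotheses of \propref{Conv_obj} genuinely follow from the hypotheses on $y^*$: because \propref{DD_conv} controls $D_k^{-}/n$ only up to an additive $2\epsilon$, one argues as in the proof of \propref{Conv_obj} that a strict sign of $\frac{d}{d\tau}d_{\tau_f}^{-}$ (or the equality $\tau_f=\lambda$) survives letting $\epsilon\to0$ after the limit in $n$, and similarly that the ``no effective intervention'' claim in the $y^*=1$ case is consistent with $(y^*,v^*,z^*)$ being optimal for (\ref{eq:OptProgram}).
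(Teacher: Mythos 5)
Your proposal is correct and follows essentially the same route as the paper: rewrite the terminal quantities as $d_{\tau_f}=\tilde J(y^*,v^*,z^*)$ and $d_{\tau_f}^{-}=\lambda\bigl(\tilde I(y^*;v^*,z^*)-y^*\bigr)$, identify the stability condition $\tilde I'(y^*;v^*,z^*)<1$ with $\frac{d}{d\tau}d_{\tau_f}^{-}<0$, and invoke \propref{Conv_obj} in the two regimes $\tau_f^*=\lambda$ and $\tau_f^*<\lambda$. The only cosmetic difference is in the case $y^*=1$: the paper obtains $d_\lambda=1$ from feasibility of the fixed point at $1$ (the constraint $\tilde I(1;v^*,z^*)=1$ forces $z^*=1$ and rules out effective interventions, hence $s_{t_f}^{i,j,i+1,i}=0$), which is the precise version of the ``no intervention that produces survivors'' claim you attribute to optimality and correctly flag as needing verification.
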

In \thmref{theorem4} the first case indicates that the network is
highly vulnerable and interventions are costly, then the regulator
rather lets the whole network default without implementing any interventions,
while in the second case interventions are less expensive or the contagion
effect is not as high, it is best for the regulator to implement interventions
to counteract the contagion process.

\subsection{Discussion and summary}

We observe that (\ref{eq:OptProgram}) is a nonlinear programming
problem and the key to solve it depends on solving the two equations
(\ref{eq:Hf in y}) and (\ref{eq:Outlinks_f in y}). Here we give
an algorithm to solve (\ref{eq:OptProgram}) numerically. 
\begin{lyxalgorithm}
The algorithm for solving (\ref{eq:OptProgram}) numerically.
\begin{enumerate}
\item Let $y=z$ and solve (\ref{eq:Hf in y}) and (\ref{eq:Outlinks_f in y})
for $y$ and $v$, by for example Newton's method, such that $0\leq y\leq1$.
\item Let $v=\frac{1-K}{j}$ for $j>0$ and solve (\ref{eq:Hf in y}) and
(\ref{eq:Outlinks_f in y}) for $y$ and $z$ such that $0\leq z\leq y\leq1$
for each $j\in\{1,\ldots,M^{\epsilon}\}$.
\item Choose among all the solutions above the one that minimize the objective
function (\ref{eq:Obj_OptProgram}).
\end{enumerate}
\end{lyxalgorithm}
\begin{figure}[H]
\centering{}\includegraphics[scale=0.5]{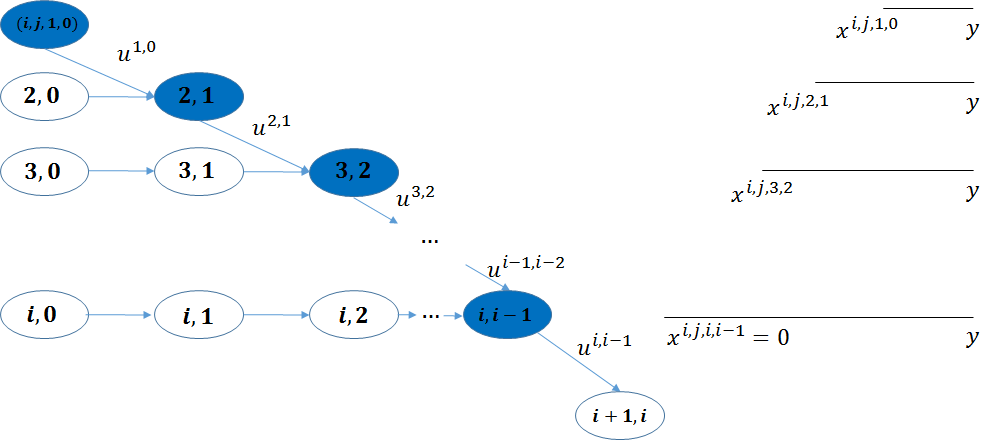}\caption{\label{fig:OptPolicy_Summary}Optimal intervention policy summary}
\end{figure}

Recall that a node is in state $(i,j,c,l)$ if it has in and out degree
pair $(i,j)$, sum of initial equity and accumulative interventions
$c$ (called total buffer) and number of revealed in links $l$. Similar
to \citet{Amini2013}, we call the in links to a node that has ``distance
to default'' of one as ``contagious'' links. So a node in state
$(i,j,1,0)$ has $i$ contagious links and a node in state $(i,j,2,1)$
has $i-1$ contagious links and so on, as shown in \figref{OptPolicy_Summary}
the states in blue color. The insights from the optimal interventions
policy are summarized as follows.
\begin{enumerate}
\item It is never optimal to intervene on a node if it is not selected or
has at least two units of remaining equity when selected. Thus the
optimal control policy described in \thmref{theorem3} only specifies
that whether we should intervene on a node that, when selected, has
``distance to default'' of one, i.e. $l=c-1$. In other words, the
use of interventions is to counteract the effects of contagious links.
\item The optimal control policy depends strongly on $K$, the relative
cost of interventions. At a higher $K$ value, interventions are costly
and the regulator rather lets the contagion to evolve without any
interventions, while at a lower $K$ value the regulator implements
more and more interventions, even a ``complete'' intervention strategy,
that is, intervening on every selected node with ``distance to default''
of one since the beginning of the process. 
\item The optimal control policy is ``monotonic'' with respect to the
number of out degree of a node. Take $v<0$ for example. There exists
a cutoff value of the out degree (specified by $\frac{1-K}{v}$) and
it is only optimal to intervene on a node with out degree larger than
this cutoff value and not otherwise, regardless of its in degree,
total buffer and revealed in links. For nodes with out degree equal
to the cutoff value, we have the singular control case that only those
with state $(i,j,i,i-1)$\textemdash total buffer equal to the in
degree and one unit larger than the number of revealed in links\textemdash needs
interventions and the starting time of interventions is specified
by the variable $z$ from the optimization problem (\ref{eq:OptProgram}). 
\item For nodes that are candidates to receive interventions, the starting
time of interventions (specified by the definition of $x^{i,j,c,c-1}$)
is ``monotonic'' in terms of the total buffer. The higher the current
equity is, the earlier we should begin to intervene as illustrated
in \figref{OptPolicy_Summary} that $x^{i,j,c,c-1}$ is decreasing
in $c$. Moreover, the starting time is also ``monotonic'' in terms
of the in degree and the out degree. Again take $v<0$ for example.
The smaller the in degree is or larger the out degree is, the earlier
the intervention should begin. So we focus on systematically important
nodes and nodes that are close to invulnerability. In other words,
we bailout the system by protecting the nodes that would incur large
impact on the network after defaulting and by bringing nodes into
invulnerable states. Note it is counterintuitive that if there is
a positive fraction of the network strongly interlinked by contagious
links, we do not prioritize interventions on them unless they have
large out degrees. 
\item Once we have begun intervening on a node we should keep implementing
interventions on it every time it is selected. In other words, we
do not allow nodes that have received interventions to default.
\end{enumerate}
Following the optimal policy we are able to achieve earlier termination
time of the contagion process and smaller fraction of final defaulted
nodes indeed. We can quantify the improvement by comparing $\tilde{I}$
and $\tilde{J}$ in \thmref{theorem4} with $I$ and $J$ defined
in \thmref{theorem1}, respectively. Note in the following we suppress
the apostrophe $"*"$ and the indexes $(i,j)$ are in the range $i\lor j<M^{\epsilon}$.
\begin{enumerate}
\item $\tilde{I}(y,v,z)$ plays the same role as $I(y)$ in \thmref{theorem1},
which represents the asymptotic scaled total out degree from the default
set at scaled time $y$. Since 
\begin{align}
 & I(y)-\tilde{I}(y,v,z)\nonumber \\
 & =\frac{1}{\lambda}\sum_{i\lor j<M^{\epsilon}}j\{\sum_{c=0}^{i}p(i,j,c)[\mathbb{P}(\text{Bin}(i,y)\geq c)-\mathbb{P}(\text{Bin}(i,x^{i,j,c,c-1})\geq c)]\nonumber \\
 & +\mathbbm1_{(vj-1=-K)}p(i,j,i)(y^{i}-z^{i})\},\nonumber \\
\end{align}
and note that 
\[
x^{i,j,c,c-1}\leq y\text{ for }(i,j,c,c-1)\in\Phi^{\epsilon},
\]
thus 
\[
\mathbb{P}(\text{Bin}(i,y)\geq c)-\mathbb{P}(\text{Bin}(i,x^{i,j,c,c-1})\geq c)\geq0.
\]
Then for the same initial conditions $p=\left(p(i,j,c)\right)_{i\lor j<M^{\epsilon},0\leq c\leq i}$,
the smallest fixed point of $\tilde{I}(y;v^{*},z^{*})$ is no greater
than that of $I(y),$ which implies that the default contagion process
under optimal interventions terminates no later than under no interventions. 
\item Similarly $\tilde{J}(y,v,z)$ plays the same role as $J(y)$ in \thmref{theorem1},
which represents the asymptotic fraction of final defaulted nodes
under the optimal control policy. The difference
\begin{eqnarray}
 &  & J(y)-\tilde{J}(y,v,z)\nonumber \\
 & = & \sum_{i\lor j<M^{\epsilon}}\{\sum_{c=0}^{i}p(i,j,c)[\mathbb{P}(\text{Bin}(i,y)\geq c)-\mathbb{P}(\text{Bin}(i,x^{i,j,c,c-1})\geq c)]\nonumber \\
 &  & +\mathbbm1_{(vj-1=-K)}p(i,j,i)(y^{i}-z^{i})\}\geq0\nonumber \\
\end{eqnarray}
quantifies the fraction of nodes that are protected from default because
of the optimal control policy.
\end{enumerate}

\section{\label{sec:Proofs}Proofs}

\subsection{Proof of \propref{ODE_Sol_stau}}
\begin{proof}
Assume $u(\tau)=b\coloneqq(b^{\beta})_{\beta\in\Phi}$ for $\tau\in[\tau_{1},\tau_{2})\subseteq[0,\lambda)$,
$0\leq\tau_{1}<\tau_{2}<\infty$ and $b^{\beta}\in\{0,1\}$. Note
that the ODEs are ``separable'' in that $s_{\tau}^{i,j,c,l}$ only
depends on the entries of $s_{\tau}$ with the same $(i,j)$, so we
can only focus on the system of ODEs with the same $(i,j)$. For the
same $(i,j)$, define $\Gamma_{i,j}\coloneqq\{(c,l):0\leq l<c\leq i\text{ or }c=i+1,l=i\}$.
For simplicity purpose after dropping $i,j$ in the superscripts,
we have the set of ODEs for $\tau\in[\tau_{1},\tau_{2})$, 
\begin{eqnarray}
\frac{ds_{\tau}^{c,0}}{d\tau} & = & -\frac{is_{\tau}^{c,0}}{\lambda-\tau}\qquad\text{for \ensuremath{1\leq c\leq i}},\nonumber \\
\frac{ds_{\tau}^{c,l}}{d\tau} & = & \frac{(i-l+1)s_{\tau}^{c,l-1}}{\lambda-\tau}-\frac{(i-l)s_{\tau}^{c,l}}{\lambda-\tau}\nonumber \\
 &  & \text{ for }3\leq c\leq i,1\leq l\leq c-2,\nonumber \\
\frac{ds_{\tau}^{c,c-1}}{d\tau} & = & \frac{(i-c+2)s_{\tau}^{c-1,c-2}}{\lambda-\tau}b^{c-1,c-2}+\frac{(i-c+2)s_{\tau}^{c,c-2}}{\lambda-\tau}-\frac{(i-c+1)s_{\tau}^{c,c-1}}{\lambda-\tau}\nonumber \\
 &  & \text{ for }2\leq c\leq i,\nonumber \\
\frac{ds_{\tau}^{i,j,i+1,i}}{d\tau} & = & \frac{s_{\tau}^{i,i-1}}{\lambda-\tau}b^{i,i-1},\nonumber \\
\end{eqnarray}
with the initial condition $s_{\tau_{1}}=s_{1}\coloneqq(s_{1}^{c,l})_{(c,l)\in\Gamma_{i,j}}$.
Letting $t=-\ln(\lambda-\tau)$, $t_{1}=-\ln(\lambda-\tau_{1})$ and
$t_{2}=-\ln(\lambda-\tau_{2})$, we have an autonomous system of ODEs
for $t\in[t_{1},t_{2})$ that
\begin{align}
\tag{AODE}\label{eq:AODEPart}\\
 & \frac{ds_{t}^{c,0}}{dt}=-is_{t}^{c,0}\text{ for }1\leq c\leq i,\label{eq:ODEPart_s_c0}\\
 & \frac{ds_{t}^{c,l}}{dt}=(i-l+1)s_{t}^{c,l-1}-(i-l)s_{t}^{c,l}\nonumber \\
 & \text{ for }3\leq c\leq i,\;1\leq l\leq c-2,\label{eq:ODEPart_s_cl}\\
 & \frac{ds_{t}^{c,c-1}}{dt}=(i-c+2)s_{t}^{c-1,c-2}b^{c-1,c-2}+(i-c+2)s_{t}^{c,c-2}-(i-c+1)s_{t}^{c,c-1}\nonumber \\
 & \text{ for }2\leq c\leq i,\label{eq:ODEPart_s_cc-1}\\
 & \frac{ds_{t}^{i+1,i}}{dt}=s_{t}^{i,i-1}b^{i,i-1},\label{eq:ODEPart_s_iplus1i}
\end{align}
with the initial condition $s_{t_{1}}=s_{1}$.
\begin{lem}
\label{lem:PartODE_Sol_AnyInitial_t}Let $s_{t}=(s_{t}^{c,l})_{(c,l)\in\Gamma_{i,j}}$
satisfy the system of ordinary differential equations in (\ref{eq:AODEPart})
with the initial conditions $s_{t_{1}}=s_{1}\coloneqq(s_{1}^{c,l})_{(c,l)\in\Gamma_{i,j}}$in
the interval $[t_{1},t_{2})$ and assume $u(t)$ is a constant vector
function $u(t)=b\coloneqq(b^{\beta})_{\beta\in\Phi}$ where $b^{\beta}\in\{0,1\}$
on $[t_{1},t_{2})$, then the solution $s_{t}$ on $[t_{1},t_{2})$
is
\begin{align}
s_{t}^{c,l} & =e^{(i-l)(t_{1}-t)}\sum_{r=0}^{l}s_{1}^{c,r}\binom{i-r}{l-r}(1-e^{t_{1}-t})^{l-r}\nonumber \\
 & \text{ for }2\leq c\leq i,\;0\leq l\leq c-2,\label{eq:Sol_s_cl}\\
s_{t}^{c,c-1} & =e^{(i-c+1)(t_{1}-t)}\sum_{r=0}^{c-1}\sum_{q=r+1}^{c}\prod_{k=q}^{c-1}b^{k,k-1}s_{1}^{q,r}\binom{i-r}{c-1-r}(1-e^{t_{1}-t})^{c-1-r}\nonumber \\
 & \text{ for }1\leq c\leq i,\label{eq:Sol_s_cc-1}\\
s_{t}^{i+1,i} & =s_{1}^{i+1,i}+\sum_{r=0}^{i-1}\sum_{q=r+1}^{i}\prod_{k=q}^{i}b^{k,k-1}s_{1}^{q,r}(1-e^{t_{1}-t})^{i-r}\label{eq:Sol_s_iplus1i}
\end{align}
where $\prod_{k=c}^{c-1}b^{k,k-1}\coloneqq1$ .
\end{lem}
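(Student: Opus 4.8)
The plan is to verify the three claimed expressions directly by substituting them into the system (\ref{eq:AODEPart}) and checking the value at $t=t_{1}$; since that system is linear with constant coefficients, a solution exists and is unique on all of $[t_{1},t_{2})$, so a successful verification identifies these formulas as that solution. What makes the verification tractable is the layered (triangular) structure of the system for fixed $(i,j)$: the equations (\ref{eq:ODEPart_s_c0})--(\ref{eq:ODEPart_s_cl}) for $(s_{t}^{c,l})_{0\le l\le c-2}$ form a closed pure-birth subsystem in which mass at level $l$ leaks to level $l+1$ at rate $i-l$; the equation (\ref{eq:ODEPart_s_cc-1}) for $s_{t}^{c,c-1}$ is driven only by $s_{t}^{c,c-2}$ (same $c$, inside the subsystem just solved) and by $s_{t}^{c-1,c-2}=s_{t}^{c-1,(c-1)-1}$ (the top state of the strictly lower layer $c-1$); and (\ref{eq:ODEPart_s_iplus1i}) for $s_{t}^{i+1,i}$ is driven only by $s_{t}^{i,i-1}$. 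So the argument is an induction on $c$, treating the subsystem $0\le l\le c-2$ first at each step. As a consistency check, (\ref{eq:Sol_s_cl}) is exactly what the probabilistic reading predicts: a unit of mass starting at level $r$ behaves like $i-r$ independent unit-rate clocks, so after elapsed ``time'' $t-t_{1}$ the mass at level $l$ is $\binom{i-r}{l-r}(1-e^{t_{1}-t})^{l-r}(e^{t_{1}-t})^{i-l}$, and summing against the weights $s_{1}^{c,r}$, $r\le l$, reproduces (\ref{eq:Sol_s_cl}).

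First I would dispose of the initial conditions: at $t=t_{1}$ one has $e^{t_{1}-t}=1$, so every summand of (\ref{eq:Sol_s_cl}), (\ref{eq:Sol_s_cc-1}) and (\ref{eq:Sol_s_iplus1i}) carrying a positive power of $1-e^{t_{1}-t}$ vanishes, leaving $s_{t_{1}}^{c,l}=s_{1}^{c,l}$, $s_{t_{1}}^{c,c-1}=s_{1}^{c,c-1}$ (the surviving term is $q=c$, where the empty product is $1$) and $s_{t_{1}}^{i+1,i}=s_{1}^{i+1,i}$. Next, for the subsystem I differentiate (\ref{eq:Sol_s_cl}) in $t$ using $\frac{d}{dt}e^{t_{1}-t}=-e^{t_{1}-t}$; the product rule yields $-(i-l)s_{t}^{c,l}$ plus the term $\sum_{r}s_{1}^{c,r}\binom{i-r}{l-r}(l-r)e^{(i-l+1)(t_{1}-t)}(1-e^{t_{1}-t})^{l-r-1}$, and the elementary identity $(l-r)\binom{i-r}{l-r}=(i-l+1)\binom{i-r}{l-1-r}$ (which also kills the spurious $r=l$ term) turns this into $(i-l+1)s_{t}^{c,l-1}$, establishing (\ref{eq:ODEPart_s_c0})--(\ref{eq:ODEPart_s_cl}).

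For (\ref{eq:ODEPart_s_cc-1}) I argue by induction on $c$ (the base $c=1$ is the one-line case $s_{t}^{1,0}=e^{i(t_{1}-t)}s_{1}^{1,0}$). Since (\ref{eq:Sol_s_cc-1}) has the same shape as (\ref{eq:Sol_s_cl}) with $l=c-1$, but with ``initial weight at level $r$'' equal to $\sum_{q=r+1}^{c}\bigl(\prod_{k=q}^{c-1}b^{k,k-1}\bigr)s_{1}^{q,r}$, the same differentiation gives $\frac{d}{dt}s_{t}^{c,c-1}=-(i-c+1)s_{t}^{c,c-1}+(i-c+2)R_{t}$, where $R_{t}=e^{(i-c+2)(t_{1}-t)}\sum_{r=0}^{c-2}\bigl(\sum_{q=r+1}^{c}\prod_{k=q}^{c-1}b^{k,k-1}s_{1}^{q,r}\bigr)\binom{i-r}{c-2-r}(1-e^{t_{1}-t})^{c-2-r}$ (using the same binomial identity in the form $(c-1-r)\binom{i-r}{c-1-r}=(i-c+2)\binom{i-r}{c-2-r}$). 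Splitting the inner sum at $q=c$ isolates the term $e^{(i-c+2)(t_{1}-t)}\sum_{r}s_{1}^{c,r}\binom{i-r}{c-2-r}(1-e^{t_{1}-t})^{c-2-r}$, which is precisely $s_{t}^{c,c-2}$ via (\ref{eq:Sol_s_cl}) with $l=c-2$; the remaining $q\le c-1$ part, after extracting $b^{c-1,c-2}$ from $\prod_{k=q}^{c-1}b^{k,k-1}=\bigl(\prod_{k=q}^{c-2}b^{k,k-1}\bigr)b^{c-1,c-2}$, is recognized as $b^{c-1,c-2}s_{t}^{c-1,c-2}$ via (\ref{eq:Sol_s_cc-1}) at index $c-1$ (note $i-(c-1)+1=i-c+2$). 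Hence $R_{t}=s_{t}^{c,c-2}+b^{c-1,c-2}s_{t}^{c-1,c-2}$ and (\ref{eq:ODEPart_s_cc-1}) follows. Finally (\ref{eq:ODEPart_s_iplus1i}) integrates: $s_{t}^{i+1,i}=s_{1}^{i+1,i}+b^{i,i-1}\int_{t_{1}}^{t}s_{u}^{i,i-1}\,du$, and plugging in (\ref{eq:Sol_s_cc-1}) at $c=i$ (so $\binom{i-r}{i-1-r}=i-r$) and substituting $x=1-e^{t_{1}-u}$, $dx=e^{t_{1}-u}\,du$, each summand becomes $(i-r)\int_{0}^{1-e^{t_{1}-t}}x^{i-1-r}\,dx=(1-e^{t_{1}-t})^{i-r}$, giving (\ref{eq:Sol_s_iplus1i}).

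The only genuine work is the bookkeeping of the double sum over $(r,q)$ and keeping the telescoping product of the $b^{k,k-1}$'s straight in the $(c,c-1)$ induction; there is no analytic subtlety, since all coefficients are constant and both occurrences of the binomial identity above are elementary. I would organize the write-up as: (i) note the layered structure and reduce to the three pieces; (ii) check the initial conditions; (iii) verify the subsystem (\ref{eq:Sol_s_cl}); (iv) induct on $c$ for (\ref{eq:Sol_s_cc-1}); (v) integrate once for (\ref{eq:Sol_s_iplus1i}); (vi) invoke uniqueness for linear ODEs to conclude.
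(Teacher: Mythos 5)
Your proposal is correct, but it runs in the opposite direction from the paper's argument. The paper \emph{constructs} the solution by induction (on $l$ for (\ref{eq:Sol_s_cl}), on $c$ for (\ref{eq:Sol_s_cc-1})), at each step writing the solution via the variation-of-constants representation $s_{t}=s_{1}e^{-(i-l)(t-t_{1})}+\int_{t_{1}}^{t}(\cdots)e^{(i-l)(y-t)}dy$, substituting the previously obtained formula into the integrand and evaluating the integral $\int_{t_{1}}^{t}e^{t_{1}-y}(1-e^{t_{1}-y})^{k}dy$, with the binomial identity $\binom{i-r}{l-1-r}\frac{i-l+1}{l-r}=\binom{i-r}{l-r}$ doing the bookkeeping; uniqueness never needs to be invoked because the formula is derived rather than guessed. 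You instead \emph{verify} the stated ansatz: check the initial data at $t=t_{1}$, differentiate (\ref{eq:Sol_s_cl}) and (\ref{eq:Sol_s_cc-1}) and use the same identity in the transposed form $(l-r)\binom{i-r}{l-r}=(i-l+1)\binom{i-r}{l-1-r}$, split the $(r,q)$ sum at $q=c$ to recognize $s_{t}^{c,c-2}+b^{c-1,c-2}s_{t}^{c-1,c-2}$, and then appeal to uniqueness for the linear constant-coefficient system; your treatment of (\ref{eq:Sol_s_iplus1i}) by a single integration with the substitution $x=1-e^{t_{1}-u}$ coincides with the paper's computation. The two routes involve essentially mirror-image calculations (differentiating the ansatz versus integrating the Duhamel formula); yours is slightly shorter on the first two families and needs only the trivial uniqueness appeal, while the paper's has the mild advantage of producing the formulas rather than presupposing them. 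Your split of the weight $\sum_{q=r+1}^{c}\prod_{k=q}^{c-1}b^{k,k-1}s_{1}^{q,r}$ into the $q=c$ term plus $b^{c-1,c-2}$ times the layer-$(c-1)$ weight is exactly the right bookkeeping, and the empty-product convention handles the boundary cases $c=1$ and $r=c-1$ correctly, so I see no gap.
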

\begin{proof}
We first prove (\ref{eq:Sol_s_cl}) by induction on $l$ for the same
$c$, $2\leq c\leq i$. As the base case $l=0$, (\ref{eq:ODEPart_s_c0})
admits the solution that 
\begin{equation}
s_{t}^{c,0}=s_{1}^{c,0}e^{i(t_{0}-t)},\label{eq:basecase}
\end{equation}
which is consistent with (\ref{eq:Sol_s_cl}). Assume (\ref{eq:Sol_s_cl})
with $l$ replaced by $l-1$, $1\leq l\leq c-2$ is the solution for
(\ref{eq:ODEPart_s_cl}), then it follows from (\ref{eq:ODEPart_s_cl})
that
\begin{eqnarray}
s_{t}^{c,l} & = & s_{1}^{c,l}e^{(i-l)(t_{1}-t)}+\int_{t_{1}}^{t}(i-l+1)s_{y}^{c,l-1}e^{(i-l)(y-t)}dy\nonumber \\
 & = & s_{1}^{c,l}e^{(i-l)(t_{1}-t)}+(i-l+1)\int_{t_{1}}^{t}e^{(i-l+1)(t_{1}-y)}e^{(i-l)(y-t)}\sum_{r=0}^{l-1}s_{1}^{c,r}\binom{i-r}{l-1-r}(1-e^{t_{1}-y})^{l-1-r}dy\nonumber \\
 & = & s_{1}^{c,l}e^{(i-l)(t_{1}-t)}+(i-l+1)e^{(i-l)(t_{1}-t)}\sum_{r=0}^{l-1}s_{1}^{c,r}\binom{i-r}{l-1-r}\int_{t_{1}}^{t}e^{t_{1}-y}(1-e^{t_{1}-y})^{l-1-r}dy\nonumber \\
 & = & s_{1}^{c,l}e^{(i-l)(t_{1}-t)}+e^{(i-l)(t_{1}-t)}\sum_{r=0}^{l-1}s_{1}^{c,r}\binom{i-r}{l-1-r}\frac{i-l+1}{l-r}(1-e^{t_{1}-t})^{l-r}\nonumber \\
 & = & s_{1}^{c,l}e^{(i-l)(t_{1}-t)}+e^{(i-l)(t_{1}-t)}\sum_{r=0}^{l-1}s_{1}^{c,r}\binom{i-r}{l-r}(1-e^{t_{1}-t})^{l-r},\nonumber \\
\end{eqnarray}
which is the same as (\ref{eq:Sol_s_cl}). 

For (\ref{eq:Sol_s_cc-1}) we prove by induction on $c$, $1\leq c\leq i$.
As the base case $c=1$, the solution of (\ref{eq:ODEPart_s_c0})
is that
\[
s_{t}^{1,0}=s_{1}^{1,0}e^{i(t_{1}-t)},
\]
which is consistent with (\ref{eq:Sol_s_cc-1}). Assume (\ref{eq:Sol_s_cc-1})
with $c$ replaced by $c-1$, $1\leq c\leq i$ is the solution for
(\ref{eq:ODEPart_s_cc-1}), then it follows from (\ref{eq:ODEPart_s_cc-1})
that
\begin{align}
s_{t}^{c,c-1} & =s_{1}^{c,c-1}e^{(i-c+1)(t_{1}-t)}+(i-c+2)\int_{t_{1}}^{t}(s_{y}^{c-1,c-2}b^{c-1,c-2}+s_{y}^{c,c-2})e^{(i-c+1)(y-t)}dy,\label{eq:Sol_scc-1_proof}
\end{align}
where by the induction assumption and (\ref{eq:Sol_s_cl}) that the
second term
\begin{align}
 & (i-c+2)\int_{t_{1}}^{t}(s_{y}^{c-1,c-2}b^{c-1,c-2}+s_{y}^{c,c-2})e^{(i-c+1)(y-t)}dy\nonumber \\
= & (i-c+2)\int_{t_{1}}^{t}e^{(i-c+1)(y-t)}e^{(i-c+2)(t_{1}-y)}[\sum_{r=0}^{c-2}\sum_{q=r+1}^{c-1}\prod_{k=q}^{c-1}b^{k,k-1}s_{1}^{q,r}\binom{i-r}{c-2-r}(1-e^{t_{1}-y})^{c-2-r}\nonumber \\
 & +\sum_{r=0}^{c-2}s_{1}^{c,r}\binom{i-r}{c-2-r}(1-e^{t_{1}-y})^{c-2-r}]dy\nonumber \\
= & (i-c+2)e^{(i-c+1)(t_{1}-t)}[\sum_{r=0}^{c-2}\sum_{q=r+1}^{c-1}\prod_{k=q}^{c-1}b^{k,k-1}s_{1}^{q,r}\binom{i-r}{c-2-r}\int_{t_{1}}^{t}e^{t_{1}-y}(1-e^{t_{1}-y})^{c-2-r}dy\nonumber \\
 & +\sum_{r=0}^{c-2}s_{1}^{c,r}\binom{i-r}{c-2-r}\int_{t_{1}}^{t}e^{t_{1}-y}(1-e^{t_{1}-y})^{c-2-r}dy]\nonumber \\
= & e^{(i-c+1)(t_{1}-t)}[\sum_{r=0}^{c-2}\sum_{q=r+1}^{c-1}\prod_{k=q}^{c-1}b^{k,k-1}s_{1}^{q,r}\binom{i-r}{c-2-r}\frac{i-c+2}{c-1-r}(1-e^{t_{1}-t})^{c-1-r}\nonumber \\
 & +\sum_{r=0}^{c-2}s_{1}^{c,r}\binom{i-r}{c-2-r}\frac{i-c+2}{c-1-r}(1-e^{t_{1}-t})^{c-1-r}]\nonumber \\
= & e^{(i-c+1)(t_{1}-t)}[\sum_{r=0}^{c-2}\sum_{q=r+1}^{c-1}\prod_{k=q}^{c-1}b^{k,k-1}s_{1}^{q,r}\binom{i-r}{c-1-r}(1-e^{t_{1}-t})^{c-1-r}\nonumber \\
 & +\sum_{r=0}^{c-2}\prod_{k=c}^{c-1}b^{k,k-1}s_{1}^{c,r}\binom{i-r}{c-1-r}(1-e^{t_{1}-t})^{c-1-r}]\nonumber \\
= & e^{(i-c+1)(t_{1}-t)}\sum_{r=0}^{c-2}\sum_{q=r+1}^{c}\prod_{k=q}^{c-1}b^{k,k-1}s_{1}^{q,r}\binom{i-r}{c-1-r}(1-e^{t_{1}-t})^{c-1-r}.\nonumber \\
\label{eq:ODESol_Interstep}
\end{align}
Substituting (\ref{eq:ODESol_Interstep}) back into (\ref{eq:Sol_scc-1_proof})
yields
\begin{align}
s_{t}^{c,c-1} & =s_{1}^{c,c-1}e^{(i-c+1)(t_{1}-t)}+e^{(i-c+1)(t_{1}-t)}\sum_{r=0}^{c-2}\sum_{q=r+1}^{c}\prod_{k=q}^{c-1}b^{k,k-1}s_{1}^{q,r}\binom{i-r}{c-1-r}(1-e^{t_{1}-t})^{c-1-r}\nonumber \\
 & =e^{(i-c+1)(t_{1}-t)}\sum_{r=0}^{c-1}\sum_{q=r+1}^{c}\prod_{k=q}^{c-1}b^{k,k-1}s_{1}^{q,r}\binom{i-r}{c-1-r}(1-e^{t_{1}-t})^{c-1-r},\nonumber \\
\end{align}
which proves (\ref{eq:Sol_s_cc-1}). Then by (\ref{eq:Sol_s_cc-1})
with $c=i$, we have the expression for $s_{t}^{i,i-1}$and (\ref{eq:ODEPart_s_iplus1i})
allows the solution that

\begin{align}
s_{t}^{i+1,i} & =s_{1}^{i+1,i}+\int_{t_{1}}^{t}s_{y}^{i,i-1}b^{i,i-1}dy\nonumber \\
 & =s_{1}^{i+1,i}+\int_{t_{1}}^{t}e^{t_{1}-y}\sum_{r=0}^{i-1}\sum_{q=r+1}^{i}\prod_{k=q}^{i}b^{k,k-1}s_{1}^{q,r}(i-r)(1-e^{t_{1}-y})^{i-1-r}dy\nonumber \\
 & =s_{1}^{i+1,i}+\sum_{r=0}^{i-1}\sum_{q=r+1}^{i}\prod_{k=q}^{i}b^{k,k-1}s_{1}^{q,r}\int_{t_{1}}^{t}(i-r)e^{t_{1}-y}(1-e^{t_{1}-y})^{i-1-r}dy\nonumber \\
 & =s_{1}^{i+1,i}+\sum_{r=0}^{i-1}\sum_{q=r+1}^{i}\prod_{k=q}^{i}b^{k,k-1}s_{1}^{q,r}(1-e^{t_{1}-t})^{i-r},
\end{align}
which proves (\ref{eq:Sol_s_iplus1i}).
\end{proof}
By changing the variable $t$ to $\tau$ by $t=-\ln(\lambda-\tau)$,
(\ref{eq:Sol_stau_cl}), (\ref{eq:Sol_stau_cc-1}) and (\ref{eq:Sol_stau_iplus1i})
follow from \lemref{PartODE_Sol_AnyInitial_t}. Let the initial condition
be $s_{1}^{i,j,c,l}=p(i,j,c)\mathbbm1_{(l=0)}$ for $(i,j,c,l)\in\Gamma$
at $\tau_{1}=0$, then (\ref{eq:ODE_stau_cl_from0}) follows.
\end{proof}

\subsection{Proof of \propref{Sconv}}
\begin{proof}
For the following proof we need to adapt the Wormald's theorem \lemref{(Wormald-1999)}
in the appendix. For notational convenience we suppress the tilde
sign for $\tilde{IT}$, $\tilde{it}$. Since $\frac{m}{n}\rightarrow\lambda$
as $n\rightarrow\infty$, for the given $\epsilon$ and $\hat{\lambda}=\lambda-\epsilon$,
we can find $n_{0}\in\mathbb{N}$, such that $0<\hat{\lambda}<\frac{m}{n}<\lambda+0.1$
for $n\geq n_{0}$. Let $z=(z^{\alpha})_{\alpha\in\Gamma^{\epsilon}}$
and 
\[
U=\left\{ (\tau,z,it)\in\mathbb{R}^{\left|\Gamma^{\epsilon}\right|+2}:\:-\epsilon<\tau<\hat{\lambda},\,-\epsilon<z^{\alpha}<1.1\,,-\epsilon<it<\lambda+0.1\right\} ,
\]
then $U$ contains the closure of 
\[
\left\{ (0,z,0):\,\mathbb{P}(S_{0}^{\alpha}=z^{\alpha}n,\,\forall\alpha\in\Gamma^{\epsilon},IT_{0}=0)\neq0\text{ for some }n\right\} .
\]

Define the stopping time $T_{U}=\min\{1\leq k\leq m:(\frac{k}{n},\frac{S_{k}}{n},\frac{IT_{k}}{n})\notin U\}$.

By \defref{S} and definition of $IT_{k}$, $0\leq S_{k}^{\alpha}\leq n$,
$\alpha\in\Gamma^{\epsilon}$ and $0\leq IT_{k}\leq(\lambda+0.1)n$
hold $\forall$$k\geq0$ and $n\geq n_{0}$. Recall that $S_{k}=(S_{k}^{\alpha})_{\alpha\in\Gamma^{\epsilon}}$
and $\frac{S_{k}}{n}=(\frac{S_{k}^{\alpha}}{n})_{\alpha\in\Gamma^{\epsilon}}$.
The following conditions hold:
\begin{enumerate}
\item \label{enu:Sconvs1}For $0\leq k<T_{U}$ and $\alpha\in\Gamma^{\epsilon}$,
\begin{align}
\left|S_{k+1}^{\alpha}-S_{k}^{\alpha}\right| & \leq1,\nonumber \\
|IT_{k+1}-IT_{k}| & \leq1,\label{eq:BoundedChange}
\end{align}
i.e. $\rho_{1}=1$. 
\item There exists $\rho_{2}=O(n^{-1})$ such that for $0\leq k<T_{U}$
and $\alpha\in\Gamma^{\epsilon}$, 
\begin{align}
\left|\mathbb{E}\left(S_{k+1}^{\alpha}-S_{k}^{\alpha}\mid\mathcal{F}_{k}\right)-h^{\alpha}\left(\frac{k}{n},\frac{S_{k}}{n}\right)\right| & \leq\rho_{2},\nonumber \\
\left|\mathbb{E}\left(IT_{k+1}-IT_{k}\mid\mathcal{F}_{k}\right)-h_{0}\left(\frac{k}{n},\frac{S_{k}}{n}\right)\right| & \leq\rho_{2},\label{eq:IT trend}
\end{align}
where $h=(h^{\alpha})_{\alpha\in\Gamma^{\epsilon}}$ and $h_{0}$
are
\begin{align}
h^{i,j,c,l}\left(t,z\right) & =\begin{cases}
-\frac{iz^{i,j,c,0}}{\lambda-t} & \text{if }1\leq c\leq i,l=0\\
\frac{(i-l+1)z^{i,j,c,l-1}}{\lambda-t}-\frac{(i-l)z^{i,j,c,l}}{\lambda-t} & \text{if }3\leq c\leq i,1\leq l\leq c-2\\
\frac{(i-c+2)z^{i,j,c-1,c-2}}{\lambda-t}u^{i,j,c-1,c-2}(t)\\
+\frac{(i-c+2)z^{i,j,c,c-2}}{\lambda-t}-\frac{(i-c+1)z^{i,j,c,c-1}}{\lambda-t} & \text{if }2\leq c\leq i\\
\frac{z^{i,j,i,i-1}}{\lambda-t}u^{i,j,i,i-1}(t) & \text{if }c=i,l=i-1
\end{cases}\nonumber \\
h_{0}(t,z) & =\sum_{(i,j,c,c-1)\in\Phi^{\epsilon}}\frac{(i-c+1)z^{i,j,c,c-1}}{\lambda-t}u^{i,j,c,c-1}(t).
\end{align}
 In particular, (\ref{eq:IT trend}) follows from
\begin{align}
 & \left|\sum_{(i,j,c,c-1)\in\Phi^{\epsilon}}\frac{(i-c+1)S_{k}^{i,j,c,c-1}}{m-k}u^{i,j,c,c-1}(\frac{k}{n})-\sum_{(i,j,c,c-1)\in\Phi^{\epsilon}}\frac{(i-c+1)\frac{S_{k}^{i,j,c,c-1}}{n}}{\lambda-\frac{k}{n}}u^{i,j,c,c-1}(\frac{k}{n})\right|\nonumber \\
\leq & \sum_{(i,j,c,c-1)\in\Phi^{\epsilon}}\left|\frac{(i-c+1)\frac{S_{k}^{i,j,c,c-1}}{n}}{\frac{m}{n}-\frac{k}{n}}u^{i,j,c,c-1}(\frac{k}{n})-\frac{(i-c+1)\frac{S_{k}^{i,j,c,c-1}}{n}}{\lambda-\frac{k}{n}}u^{i,j,c,c-1}(\frac{k}{n})\right|\nonumber \\
= & O(n^{-1}).\nonumber \\
\end{align}
\end{enumerate}
However, for $\beta\in\Phi^{\epsilon}$, $h^{\beta}$ and $h_{0}$
are not Lipschitz continuous because $u^{\beta}(\tau)$ can have step
changes on $[0,\lambda)$. So we need to adapt the proof. Note that
$u^{\beta}(\tau)$ is piecewise constant $\{0,1\}$ valued function
thus $h^{\beta}(\tau,s)$ and $h_{0}(\tau,s)$ are Lipschitz continuous
in each interval where $\tilde{u}=(u^{\beta})_{\beta\in\Phi^{\epsilon}}$
is a constant vector and then we can apply \lemref{(Wormald-1999)}.
In the following define $s_{\tau}(\tau',x)$ as the solution of the
ODEs, 
\begin{align}
\frac{d}{d\tau}s_{\tau} & =h(\tau,s_{\tau}),
\end{align}
 which are understood as for $\alpha\in\Gamma^{\epsilon}$, 
\begin{equation}
\frac{d}{d\tau}s_{\tau}^{\alpha}=h^{\alpha}(\tau,s_{\tau}),
\end{equation}
 with initial condition at $\tau'$, $s_{\tau'}=x\coloneqq(x^{\alpha})_{\alpha\in\Gamma^{\epsilon}}$. 

In what follows define the points where any component of $\tilde{u}(\tau)$
has a step change. $\tau_{l}\coloneqq\inf\{\tau>\tau_{l-1}:u^{\beta}(\tau)\text{ has a step change for some }\ensuremath{\beta\in\Phi^{\epsilon}}\}\land\hat{\lambda}$
for $l\geq1$ and $\tau_{0}\coloneqq0$. Also let $k_{l}=\left\lceil n\tau_{l}\right\rceil $,
where $\left\lceil \cdot\right\rceil $ is the ceiling function. As
a result, $k_{l}-1<n\tau_{l}\leq k_{l}$. Recall the initial condition
$s_{0}=(s_{0}^{\alpha})_{\alpha\in\Gamma^{\epsilon}}$ with $s_{0}^{i,j,c,l}=p(i,j,c)\mathbbm1_{(l=0)}$.
Because every $u^{\beta}$ for $\beta\in\Phi^{\epsilon}$ has only
a finite number of step changes on $[0,\lambda)$ and on the other
hand $\Phi^{\epsilon}$ is a finite set, there are in total a finite
number of step changes for all the functions in $\tilde{u}$ on $[0,\lambda)$. 

Then by \lemref{(Wormald-1999)}, let $\rho=n^{-\frac{1}{4}}$, it
follows that 
\begin{equation}
\sup_{0\leq k\leq k_{1}-1}\frac{S_{k}^{\alpha}}{n}-s_{\frac{k}{n}}^{\alpha}(0,s_{0})=O(n^{-\frac{1}{4}})
\end{equation}
with probability $1-O(n^{\frac{1}{4}}\exp(-n^{\frac{1}{4}}))$, $\forall\:\alpha\in\Gamma^{\epsilon}$.
Note that we will write ``with probability $1-O(n^{\frac{1}{4}}\exp(-n^{\frac{1}{4}}))$''
as whp hereinafter. 

In particular, we have that
\begin{equation}
\frac{S_{k_{1}-1}^{\alpha}}{n}-s_{\frac{k_{1}-1}{n}}^{\alpha}(0,s_{0})=O(n^{-\frac{1}{4}})\;\text{ whp.}\label{eq:Ssbeforek1}
\end{equation}

Additionally by \lemref{(Wormald-1999)} again we have that
\begin{equation}
\sup_{k_{1}\leq k\leq k_{2}-1}\frac{S_{k}^{\alpha}}{n}-s_{\frac{k}{n}}^{\alpha}(\frac{k_{1}}{n},\frac{S_{k_{1}}}{n})=O(n^{-\frac{1}{4}})\;\text{ whp.}\label{eq:Ss_afterk1}
\end{equation}

Note that
\begin{equation}
\left|\frac{S_{k_{1}}^{\alpha}}{n}-\frac{S_{k_{1}-1}^{\alpha}}{n}\right|\leq\frac{1}{n}\,\forall\alpha\in\Gamma^{\epsilon},\label{eq:S_bounded}
\end{equation}
and by the Lipschitz continuity of $s_{\tau}^{\alpha}(0,s_{0})$ on
$(0,\tau_{1}^{-})$, 
\begin{equation}
s_{\frac{k_{1}-1}{n}}^{\alpha}(0,s_{0})-s_{\tau_{1}}^{\alpha}(0,s_{0})=O(n^{-1}).\label{eq:s_Lip}
\end{equation}
So by (\ref{eq:Ssbeforek1}), (\ref{eq:S_bounded}) and (\ref{eq:s_Lip}),
we have
\begin{align}
\left|\frac{S_{k_{1}}^{\alpha}}{n}-s_{\tau_{1}}^{\alpha}(0,s_{0})\right| & \leq\left|\frac{S_{k_{1}}^{\alpha}}{n}-\frac{S_{k_{1}-1}^{\alpha}}{n}\right|+\left|\frac{S_{k_{1}-1}^{\alpha}}{n}-s_{\frac{k_{1}-1}{n}}^{\alpha}(0,s_{0})\right|\nonumber \\
 & +\left|s_{\frac{k_{1}-1}{n}}^{\alpha}(0,s_{0})-s_{\tau_{1}}^{\alpha}(0,s_{0})\right|\nonumber \\
 & =n^{-1}+O(n^{-\frac{1}{4}})+O(n^{-1})\;\text{ whp.}\label{eq:Ss_tau1_conv}
\end{align}
Thus we have that
\[
\left\Vert \frac{S_{k_{1}}}{n}-s_{\tau_{1}}(0,s_{0})\right\Vert =O(n^{-\frac{1}{4}})+O(n^{-1})\;\text{ whp.}
\]
where $\left\Vert \eta\right\Vert $ is the norm for $\eta\in\mathbb{R}^{|\Gamma^{\epsilon}|}$.

From \propref{ODE_Sol_stau} we see that the derivative of $s_{\tau}^{\alpha}$
with respect to the initial time and initial condition is bounded
for $\tau_{1}<\tau<\hat{\lambda}<\lambda$, i.e.
\begin{equation}
\left\Vert \frac{\partial s_{\tau}^{\alpha}(\tau',x)}{\partial(\tau',x)}\right\Vert \leq M_{1}<\infty\label{eq:s_ContDep}
\end{equation}
where $M_{1}$ is a constant.  Recall that $|\frac{k_{1}}{n}-\tau_{1}|<n^{-1}$,
so by (\ref{eq:Ss_tau1_conv}) and (\ref{eq:s_ContDep}), it follows
that 
\begin{align}
 & s_{\tau}^{\alpha}(\frac{k_{1}}{n},\frac{S_{k_{1}}}{n})-s_{\tau}^{\alpha}(\tau_{1},s_{\tau_{1}}(0,s_{0}))\nonumber \\
 & =s_{\tau}^{\alpha}(\frac{k_{1}}{n},\frac{S_{k_{1}}}{n})-s_{\tau}^{\alpha}(0,s_{0})\nonumber \\
 & =O(n^{-\frac{1}{4}})+O(n^{-1})\text{\; whp},\label{eq:s_closeness}
\end{align}
 for $\tau\in(\tau_{1},\tau_{2})$. So it follows that
\[
\sup_{k_{1}\leq k\leq k_{2}-1}\frac{S_{k}^{\alpha}}{n}-s_{\frac{k}{n}}^{\alpha}(0,s_{0})=O(n^{-\frac{1}{4}})\;\text{ whp.}
\]
Similarly for $IT_{k}$, define $it_{\tau}(\tau',x,y)$ as the solution
of
\[
\frac{d}{d\tau}it_{\tau}=h_{0}(\tau,s_{\tau}),
\]
 with the initial condition at $\tau'$, $(s_{\tau'},it_{\tau'})=(x,y)$.
Applying \lemref{(Wormald-1999)} for $IT_{k}$ and $it_{\tau}$ gives
that, 
\begin{align}
\sup_{0\leq k\leq k_{1}-1}\frac{IT_{k}}{n}-it_{\frac{k}{n}}(0,s_{0},0) & =O(n^{-\frac{1}{4}})\;\text{ whp},\nonumber \\
\sup_{k_{1}\leq k\leq k_{2}-1}\frac{IT_{k}}{n}-it_{\frac{k}{n}}(\frac{k_{1}}{n},\frac{S_{k_{1}}}{n},\frac{IT_{k_{1}}}{n}) & =O(n^{-\frac{1}{4}})\;\text{ whp}.
\end{align}
In particular, 
\begin{equation}
\frac{IT_{k_{1}-1}}{n}-it_{\frac{k_{1}-1}{n}}(0,s_{0},0)=O(n^{-\frac{1}{4}})\;\text{ whp},\label{eq:ITit}
\end{equation}
Further note that
\begin{equation}
\left|\frac{IT_{k_{1}}}{n}-\frac{IT_{k_{1}-1}}{n}\right|\leq\frac{1}{n}\,\forall\alpha\in\Gamma^{\epsilon},\label{eq:IT_bounded}
\end{equation}
 and by the Lipschitz continuity of $it_{\tau}(0,s_{0},0)$ on $(0,\tau_{1}^{-})$,
\begin{equation}
it_{\frac{k_{1}-1}{n}}(0,s_{0},0)-it_{\tau_{1}}(0,s_{0},0)=O(n^{-1}).\label{eq:it_Lip}
\end{equation}
So by (\ref{eq:ITit}), (\ref{eq:IT_bounded}) and (\ref{eq:it_Lip})
we have
\begin{align}
\left|\frac{IT_{k_{1}}}{n}-it_{\tau_{1}}(0,s_{0},0)\right| & \leq\left|\frac{IT_{k_{1}}}{n}-\frac{IT_{k_{1}-1}}{n}\right|+\left|\frac{IT_{k_{1}-1}}{n}-it_{\frac{k_{1}-1}{n}}(0,s_{0},0)\right|\nonumber \\
 & +\left|it_{\frac{k_{1}-1}{n}}(0,s_{0},0)-it_{\tau_{1}}(0,s_{0},0)\right|\nonumber \\
 & =n^{-1}+O(n^{-\frac{1}{4}})+O(n^{-1})\;\text{ whp.}\label{eq:ITit_tau1_conv1}
\end{align}
Recall we have proved in (\ref{eq:Ss_tau1_conv}) that 
\begin{equation}
\left|\frac{S_{k_{1}}^{\alpha}}{n}-s_{\tau_{1}}^{\alpha}(0,s_{0})\right|\leq O(n^{-\frac{1}{4}})+O(n^{-1})\text{\; whp}.\label{eq:Ss_tau1_conv_1}
\end{equation}
Here we apply the fact we shall prove later that the derivative of
$it_{\tau}$ with respect to the initial time and initial condition
is bounded for $\tau$ in an interval on which $\tilde{u}$ is a constant
vector function and $\tau<\hat{\lambda}$, i.e. 
\begin{equation}
\left\Vert \frac{\partial it_{\tau}(\tau',x,y)}{\partial(\tau',x,y)}\right\Vert \leq M_{2}<\infty\label{eq:it_ContDep}
\end{equation}
for some constant $M_{2}$. Recall that $|\frac{k_{1}}{n}-\tau_{1}|<n^{-1}$,
so by (\ref{eq:ITit_tau1_conv1}), (\ref{eq:Ss_tau1_conv_1}) and
(\ref{eq:it_ContDep}), we have that
\begin{align}
 & it_{\tau}(\frac{k_{1}}{n},\frac{S_{k_{1}}}{n},\frac{IT_{k_{1}}}{n})-it_{\tau}(\tau_{1},s_{\tau_{1}}(0,s_{0}),it_{\tau_{1}}(0,s_{0},0))\nonumber \\
 & =it_{\tau}(\frac{k_{1}}{n},\frac{S_{k_{1}}}{n},\frac{IT_{k_{1}}}{n})-it_{\tau}(0,s_{0},0)\nonumber \\
 & =O(n^{-\frac{1}{4}})+O(n^{-1})\text{\; whp},
\end{align}

for $\tau\in(\tau_{1},\tau_{2})$. So it follows that
\[
\sup_{k_{1}\leq k\leq k_{2}-1}\frac{IT_{k}}{n}-it_{\frac{k}{n}}(0,s_{0},0)=O(n^{-\frac{1}{4}})\;\text{ whp.}
\]
We can repeat the above procedure every time any $u^{\beta}(\tau)$
has a step change, $\beta\in\Phi^{\epsilon}$ and there are only a
finite number of step changes in $[0,\lambda)$. Because $s_{\tau}^{\alpha}\leq1$
and $it_{\tau}\leq\lambda$, $d_{\infty}\left((s_{\tau},it_{\tau}),\partial U\right)\geq0.1\geq Cn^{-\frac{1}{4}}$,
for a sufficiently large constant $C$. Thus the supremum of $\tau$
that $(s_{\tau},it_{\tau})$ can be extended to the boundary of $U$
is $\hat{\lambda}$, i.e. in (\ref{eq:suptime}) of \lemref{(Wormald-1999)},
\begin{align*}
\sigma & =\sup\left\{ \tau\geq0:\,d_{\infty}\left((s_{\tau},it_{\tau}),\partial U\right)\geq Cn^{-\frac{1}{4}}\right\} \\
 & =\hat{\lambda}.
\end{align*}

So it follows that
\begin{align}
\sup_{0\leq k\leq n\hat{\lambda}}\frac{S_{k}^{\alpha}}{n}-s_{\frac{k}{n}}^{\alpha}(0,s_{0}) & =O(n^{-\frac{1}{4}})\;\text{ whp},\nonumber \\
\sup_{0\leq k\leq n\hat{\lambda}}\frac{IT_{k}}{n}-it_{\frac{k}{n}}(0,s_{0},0) & =O(n^{-\frac{1}{4}})\;\text{ whp.}\label{eq:SswithinU}
\end{align}

At last we prove that the derivative of $it_{\tau}$ with respect
to the initial time and initial condition is bounded as in (\ref{eq:it_ContDep}).
Note first that $it_{\tau}$ with initial condition $\bar{s}=(s_{\tau'},it_{\tau'})$
at $\tau=\tau'$ in an interval where $\tilde{u}$ is a constant vector
function $b=(b^{\beta})_{\beta\in\Phi^{\epsilon}}$ satisfies that
\[
it_{\tau}=it_{\tau'}+\int_{\tau'}^{\tau}\sum_{(i,j,c,c-1)\in\Phi^{\epsilon}}\frac{(i-c+1)b^{i,j,c,c-1}}{\lambda-u}s_{u}^{i,j,c,c-1}(\tau',s_{\tau'})du.
\]
First we show that the derivative of $it_{\tau}$ with respect to
the initial condition $\bar{s}$ is bounded. 
\[
\frac{\partial it_{\tau}}{\partial\bar{s}}=e^{\text{last}}+\int_{\tau'}^{\tau}\sum_{(i,j,c,c-1)\in\Phi^{\epsilon}}\frac{(i-c+1)b^{i,j,c,c-1}}{\lambda-u}\frac{\partial s_{u}^{i,j,c,c-1}(\tau',s_{\tau'})}{\partial\bar{s}}du
\]
where $e^{\text{last}}$ is a vector of zeros except an entry of one
at the last. Thus
\[
\left\Vert \frac{\partial it_{\tau}}{\partial\bar{s}}\right\Vert \leq1+\int_{\tau'}^{\tau}\sum_{(i,j,c,c-1)\in\Phi^{\epsilon}}\frac{(i-c+1)b^{i,j,c,c-1}}{\lambda-u}\left\Vert \frac{\partial s_{u}^{i,j,c,c-1}(\tau',s_{\tau'})}{\partial\bar{s}}\right\Vert du.
\]
 By (\ref{eq:s_ContDep}), $\left\Vert \frac{\partial s_{u}^{i,j,c,c-1}}{\partial\bar{s}}\right\Vert <M_{1}$
and thus $\left\Vert \frac{\partial it_{\tau}}{\partial\bar{s}}\right\Vert $
is bounded for $\tau<\hat{\lambda}$. Next we show that the derivative
of $it_{\tau}$ with respect to the initial time $\tau'$ is bounded.
By the Leibniz integral rule, we have
\begin{align}
\frac{\partial it_{\tau}}{\partial\tau'} & =-\sum_{(i,j,c,c-1)\in\Phi^{\epsilon}}\frac{(i-c+1)b^{i,j,c,c-1}}{\lambda-\tau'}s_{\tau'}^{i,j,c,c-1}(\tau',s_{\tau'})\nonumber \\
 & +\int_{\tau'}^{\tau}\sum_{(i,j,c,c-1)\in\Phi^{\epsilon}}\frac{(i-c+1)b^{i,j,c,c-1}}{\lambda-u}\frac{\partial s_{u}^{i,j,c,c-1}(\tau',s_{\tau'})}{\partial\tau'}du,
\end{align}
where $s_{\tau'}^{i,j,c,c-1}(\tau',s_{\tau'})=s_{\tau'}^{i,j,c,c-1}$.
Since by (\ref{eq:s_ContDep}), $\left|\frac{\partial s_{u}^{i,j,c,c-1}(\tau',s_{\tau'})}{\partial\tau'}\right|$
is bounded thus $\left|\frac{\partial it_{\tau}}{\partial\tau'}\right|$
is bounded for $\tau<\hat{\lambda}$. Thus we proved (\ref{eq:it_ContDep}).
\end{proof}

\subsection{Proof of \thmref{theorem1}}
\begin{proof}
For the contagion process without intervention, we relate our model
to the auxiliary model used in the proof in \citet{Amini2013}. 

Recall that in \subsecref{Dynamics} we are given a set of nodes $[n]$
and the sequence of degrees $(d^{-}(v),d^{+}(v))_{v\in[n]}$ as well
as the initial equity levels $\left(e_{0}^{v}\right)_{v\in[n]}$.
For each node $v$ we assign each in stub a number ranging in $\{1,\ldots,d^{-}(v)\}$.
Let $\Sigma^{v}$ be the set of all permutations of the in stubs of
node $v\in[n]$, then a permutation $\tau\in\Sigma^{v}$ specifies
the order in which $v$ receives shocks through the in stubs. 

Because every in stub of $v$ represents one unit of loan, $v$ will
default after $e_{0}^{v}$ of its in stubs have been connected (or
$e_{0}^{v}$ of its in links have been revealed) for every permutation
$\tau\in\Sigma^{v}$. In other words, if we define $\theta(v,\tau)$
to be the number of shocks that $v$ can sustain if the order in which
the in stubs are connected is specified by $\tau$, then it follows
that $\theta(v,\tau)=e_{0},\forall\tau\in\Sigma^{v}$ and
\begin{alignat}{1}
 & \frac{\left|\{(v,\tau)\mid v\in[n],\tau\in\Sigma^{v},d^{-}(v)=i,d^{+}(v)=j,\theta(v,\tau)=c\}\right|}{n\mu(i,j)i!}\nonumber \\
= & \frac{\left|\{v\mid v\in[n],\tau\in\Sigma^{v},d^{-}(v)=i,d^{+}(v)=j,e_{0}^{v}=c\}\right|}{n\mu(i,j)}\nonumber \\
= & \sigma(i,j,c).
\end{alignat}

Then \assuref{Regularity} satisfies the assumption $4.1$ and $4.2$
in \citet{Amini2013}. Moreover, under no intervention, the random
graph generated in \subsecref{Dynamics} conforms to the model defined
in definition $5.4$ in \citet{Amini2013} with in and out degree
sequences $(d^{-}(v),d^{+}(v))_{v\in[n]}$ and default thresholds
$\left(e_{0}^{v}\right)_{v\in[n]}$. So by their theorem $3.8$ we
achieve the conclusions of \thmref{theorem1}. 
\end{proof}

\subsection{Proof of \thmref{theorem3}}
\begin{proof}
To simplify the notations we suppress the apostrophe $"*"$. In \lemref{OC}
we have presented the optimal control policy $(u_{t})_{t\in[t_{0},t_{f}]}$
in terms of $t$,$t_{0}$,$t_{f}$,$t_{s}$,$t^{i,j,c}$. Recall that
in (\ref{eq:x_and_xijc}) we have the following relations,
\begin{eqnarray}
t & = & -\ln(\lambda-\tau),\nonumber \\
t_{0} & = & -\ln\lambda,\nonumber \\
y & = & 1-e^{t_{0}-t_{f}},\nonumber \\
z & = & 1-e^{t_{0}-t_{s}},\nonumber \\
x^{i,j,c,c-1} & = & 1-e^{t_{0}-t^{i,j,c}}\nonumber \\
 & = & \begin{cases}
y & \text{if }K+vj-1\geq0\text{ or }c=0\\
1-(1-y)\frac{(i-c)K}{(i-c+1)K+vj-1} & \text{if }K+vj-1<0\text{ and }1\leq c<i+\frac{K+vj-1}{Ky}\\
0 & \text{otherwise},
\end{cases}\nonumber \\
\end{eqnarray}
so we can change the variable from $t$ to $\tau$. Particularly we
apply mapping $f(t)=1-e^{t_{0}-t}$ which is a strictly increasing
function in $t$, then we have the following correspondences:

\begin{table}[H]
\begin{centering}
\begin{tabular}{|c|c|}
\hline 
Variable & After mapping $1-e^{t_{0}-t}$\tabularnewline
\hline 
\hline 
$t$ & $\frac{\tau}{\lambda}$\tabularnewline
\hline 
$t_{0}$ & $0$\tabularnewline
\hline 
$t^{i,j,c}$ & $x^{i,j,c,c-1}$\tabularnewline
\hline 
$t_{s}$ & $z$\tabularnewline
\hline 
$t_{f}$ & $y$\tabularnewline
\hline 
\end{tabular}
\par\end{centering}
\caption{\label{tab:VarMapping}Correspondence of variables through mapping
$1-e^{t_{0}-t}$. }
\end{table}

We replace each variable $t$,$t_{0}$,$t_{f}$,$t_{s}$,$t^{i,j,c}$
in \lemref{OC} with its corresponding variable in \tabref{VarMapping}
resulting in the expressions for $u_{\tau}^{i,j,c,c-1}$. At last
by \assuref{G_n} on the relations between the control policy $G_{n}=(g_{1}^{(n)},\ldots,g_{m}^{(n)})$
and the function $u$, we have the conclusion in \thmref{theorem3}.
\end{proof}

\subsection{Proof of \thmref{theorem4}}
\begin{proof}
By the definition of $\tilde{I}(y;v,z)$ and $\tilde{J}(y;v,z)$ in
(\ref{eq:Itilde}), $d_{\tau_{f}}^{-}$ and $d_{\tau_{f}}$ with $i\lor j<M^{\epsilon}$
in (\ref{eq:EqOfdWithInt}) at $\tau=\tau_{f}$ becomes 
\begin{alignat}{1}
d_{\tau_{f}}^{-} & =\sum_{i\lor j<M^{\epsilon}}j\left[\sum_{c=0}^{i}p(i,j,c)\mathbb{P}(\text{Bin}(i,x^{i,j,c,c-1})\geq c)\right.\nonumber \\
 & \left.-\mathbbm1_{(vj-1=-K)}p(i,j,i)\left((\frac{\tau_{f}}{\lambda})^{i}-z^{i}\right)\right]-\tau_{f}\nonumber \\
 & =\lambda\left(\tilde{I}(\frac{\tau_{f}}{\lambda};v,z)-\frac{\tau_{f}}{\lambda}\right)\nonumber \\
d_{\tau_{f}} & =\sum_{i\lor j<M^{\epsilon}}\left[\sum_{c=0}^{i}p(i,j,c)\mathbb{P}(\text{Bin}(i,x^{i,j,c,c-1})\geq c)\right.\nonumber \\
 & \left.-\mathbbm1_{(vj-1=-K)}p(i,j,i)\left((\frac{\tau_{f}}{\lambda})^{i}-z^{i}\right)\right]\nonumber \\
 & =\tilde{J}(\frac{\tau_{f}}{\lambda};v,z).
\end{alignat}
Suppose $(y^{*},v^{*},z^{*})$ is an optimal solution for the optimization
problem (\ref{eq:OptProgram}) and note that $y^{*}$ is the fixed
point of $\tilde{I}(y;v^{*},z^{*})$ and $y^{*}=\frac{\tau_{f}^{*}}{\lambda}$.
\begin{enumerate}
\item If $y^{*}=1$, then $\tau_{f}^{*}=\lambda$. By the definition of
$d_{\tau_{f}}^{-}$, it can only occur when $\sum_{i\lor j<M^{\epsilon}}j\sum_{c=0}^{i}p(i,j,c)=\lambda$
and $z^{*}=\frac{\tau_{f}^{*}}{\lambda}=1$, thus we have $d_{\tau_{f}^{*}}=d_{\lambda}=1$,
then by \propref{Conv_obj}, 
\[
\frac{D_{n}}{n}\overset{p}{\rightarrow}1,
\]
which proves (\ref{enu:1OfThm4}) of \thmref{theorem4}.
\item If $y^{*}<1$ and $\tilde{I}'(y^{*};v^{*},z^{*})<1$, then $\tau_{f}^{*}<\lambda$
and $\frac{d}{d\tau}d_{\tau_{f}^{*}}^{-}=\tilde{I}'(\frac{\tau_{f}^{*}}{\lambda};v^{*},z^{*})-1<0$.
Again it follows from \propref{Conv_obj}, 
\[
\frac{D_{n}}{n}\overset{p}{\rightarrow}d_{\tau_{f}^{*}}=\tilde{J}(y^{*};v^{*},z^{*}).
\]
which proves (\ref{enu:2OfThm4}) of \thmref{theorem4}. This concludes
the proof of \thmref{theorem4}. 
\end{enumerate}
It is important to note that the two cases in \thmref{theorem4} corresponds
to $\tau_{f}^{*}=\lambda$, and $\tau_{f}^{*}<\lambda$, $\frac{d}{d\tau}d_{\tau_{f}^{*}}^{-}<0$,
respectively. By \propref{Conv_obj} they guarantees that the limits
of $\mathbb{E}\frac{IT_{n}(G_{n},P_{n})}{n}$ and $\mathbb{E}\frac{D_{n}(G_{n},P_{n})}{n}$
in (\ref{eq:ACP}) as $n\rightarrow\infty$ are well defined, which
are $it_{\tau_{f}}$ and $d_{\tau_{f}}$, respectively.
\end{proof}

\part{\label{part:NumericalEg}Numerical Experiments}

\section{Introduction}

Consider a sequence of networks with the number of nodes $n$ growing
to infinity, whose in and out degrees are between 1 and 10, and each
node's in degree equal to its own out degree, i.e. $d^{-}(v)=d^{+}(v)$,
$v\in[n]$, respectively, so we call either the in or out degree as
the degree of the node. This allows us to combine two indexes $i$
and $j$ into one index $i$, so the state of a node becomes $(i,c,l)$
and the empirical probability $P_{n}$ and the limiting probability
$p$ of the degree and initial equity become $P_{n}(i,c)$ and $p(i,c)$
for $(i,c)\in\Gamma^{'}\coloneqq\{(i,c)\in\mathbb{N}_{0}^{2}:1\leq i\leq10,0\leq c\leq10\}$. 

Next we decide on the limiting probability $p$. Note that $\Gamma^{'}$
contains three initial types of nodes: defaulted (with $c=0$), vulnerable
(with $c\leq i$) and invulnerable (with $c>i$). In this numerical
experiment, we set the total fraction of initial defaults as $\xi$
and assume the fraction of initial defaults is the same across all
degrees, i.e. $p(i,0)=\frac{\xi}{10}$ for $i\in[1,10]$. For the
initially liquid nodes, the joint probability of the degree and initial
equity conditional on being liquid is constructed through a binormal
copula with correlation $\rho$ and two marginal probabilities. The
marginal probabilities of the degree and initial equity are assumed
to follow the Zipf's law, i.e.
\begin{eqnarray}
\mathbb{P}(\text{deg}=i) & = & \frac{i^{-(1+a_{1})}}{\sum_{i=1}^{10}i^{-(1+a_{1})}}\nonumber \\
\mathbb{P}(\text{initial equity}=c) & = & \frac{c^{-(1+a_{2})}}{\sum_{i=1}^{10}c^{-(1+a_{2})}},\nonumber \\
\end{eqnarray}
where $a_{1},a_{2}>0$. The Zipf's law is a form of the power law
with Pareto tails, which is observed for the distribution of the degrees
and equity levels of the financial networks in many empirical studies,
see e.g. \citet{Boss2004a,Bech2010}. 

In a network of size $n$ with the joint probability $P_{n}(i,c)$
of the degree and initial equity, a contagion process under interventions
occurs as described in \subsecref{Dynamics}. Recall that we only
need to consider intervening on a node that, when selected, has only
one unit of equity left, i.e. a node with ``distance to default''
equal to one. Here we consider two types of intervention policies,
the optimal policy and the alternative policy: intervening on nodes
with degree between 8 and 10 and ``distance to default'' equal to
one from the beginning of the process. The alternative policy is usually
the one adopted by the central bank or government in a real financial
crisis setting. 

Our objective is to verify the convergence in probability of $\frac{IT_{n}}{n}$
and $\frac{D_{n}}{n}$ as well as the convergence of the scaled termination
time $\frac{T_{n}}{m}$ as stated in \propref{Conv_obj}. Moreover,
we shall study the convergence rate of the standard deviation and
IQR (interquartile range) to examine if the asymptotic variables provide
good approximations to realistic $n$ values. 

Under the optimal policy in the form given in \thmref{theorem3},
the limits for $\frac{IT_{n}}{n}$, $\frac{D_{n}}{n}$ and $\frac{T_{n}}{m}$
as $n\rightarrow\infty$ are $it(y^{*},v^{*},z^{*})$, $\tilde{J}(y^{*},v^{*},z^{*})$
and $y^{*}$, respectively in (\ref{eq:OptProgram}) where $(y^{*},v^{*},z^{*})$
is the optimal solution. On the other hand, the alternative policy
is that for $0\leq k\leq m-1$,
\[
g_{k+1}^{(n),\text{alt}}(s,w)=\begin{cases}
\mathbbm1_{(k\geq n\lambda x_{\text{alt}}^{i,c,c-1})} & \text{if }w=(i,c,c-1)\in\Phi'\\
0 & \text{otherwise},
\end{cases}
\]
where for $(i,c,c-1)\in\Phi'$,

\[
x_{\text{alt}}^{i,c,c-1}=\begin{cases}
0 & \text{ if }i\in\{8,9,10\}\\
y & \text{otherwise}.
\end{cases}
\]

Then the limits for $\frac{IT_{n}}{n}$, $\frac{D_{n}}{n}$ and $\frac{T_{n}}{m}$
as $n\rightarrow\infty$ can be calculated as:

\begin{align*}
 & \frac{IT_{n}}{n}\overset{p}{\to}\\
 & \sum_{i=0}^{10}\sum_{c=1}^{i}p(i,c)\sum_{m=c}^{i}\sum_{n=0}^{c-1}(m-c+1)\mathbb{P}(\text{Multin}(i,x_{\text{alt}}^{i,c,c-1},y-x_{\text{alt}}^{i,c,c-1},1-y)=(n,m-n,i-m)),\\
 & \frac{D_{n}}{n}\overset{p}{\to}\sum_{i=0}^{10}\sum_{c=0}^{i}p(i,c)\mathbb{P}(\text{Bin}(i,x_{\text{alt}}^{i,c,c-1})\geq c),\\
 & \frac{T_{n}}{m}\overset{p}{\to}y
\end{align*}
where $y$ is the solution of $\frac{1}{\lambda}\sum_{i=0}^{10}i\sum_{c=0}^{i}p(i,c)\mathbb{P}(\text{Bin}(i,x_{\text{alt}}^{i,c,c-1})\geq c)=y$,
and $\mathbb{P}(\text{Bin}(i,y)\geq c)=\sum_{m=c}^{i}\binom{i}{m}y^{m}(1-y)^{i-m}$
, $\mathbb{P}(\text{Multin}(i,x,y,1-x-y)=(a,b,i-a-b))=\binom{i}{a,b,i-a-b}x^{a}y^{b}(1-x-y)^{i-a-b}$.

\section{Simulation}

\subsection{The set up}

We have the following setup.
\begin{enumerate}
\item A sequence of six networks with increasing number of nodes $n\in\{5^{4},6^{4},\ldots,10^{4}\}$
and there are 100 runs for each network under each intervention policy. 
\item To determine the asymptotic fraction $p(\cdot,\cdot)$ of the degree
and initial equity pair $(i,c)$ where $(i,c)\in\Gamma^{'}$, we set
the following parameters. 
\begin{enumerate}
\item The fraction of initial defaults $\xi=0.5$, indicating half of the
nodes have defaulted. We assume in this numerical experiment that
the fraction of initial defaults is the same across all degrees, thus
$p(i,0)=\frac{\xi}{10}$ for $i\in[1,10]$. 
\item The conditional probability of the degree and initial equity for liquid
nodes $p(i,e)$, $i\in\{1,\ldots,10\}$, $e\in\{1,\ldots,10\}$ is
determined by a binormal coupula with the exponents of the marginal
probabilities of the degree and initial equity $(a_{1},a_{2})=(0.8,0.7)$
and the correlation coefficient $\rho=0.9$. Note that a smaller $a_{1}$
indicates larger fraction of nodes with higher degrees, thus higher
connectivity and a smaller $a_{2}$ indicates larger fraction of nodes
with higher initial equities, and $\rho$ implies how likely that
higher degree nodes have higher initial equities.
\end{enumerate}
\item After determining the asymptotic fraction $p(\cdot,\cdot)$, we construct
a sequence of empirical fractions $P_{n}(\cdot,\cdot)$ for each network
that converge to $p(\cdot,\cdot)$ by
\begin{equation}
P_{n}(i,c)=\frac{[np(i,c)]}{n}\qquad(i,c)\in\Gamma^{'},\label{eq:empirical fraction}
\end{equation}
where $[\cdot]$ is the round function. In other words, the number
of nodes with degree $i$ and initial equity $c$ are $[np(i,c)]$
for a network of $n$ nodes. 
\item We consider two intervention policies described as before. 
\item The relative cost for the interventions $K=0.5$.
\end{enumerate}

\subsection{Simulation results}

In the following we suppress the $n$ in the subscripts. We show the
plots for $\frac{IT}{n}$, $\frac{D}{n}$ and $\frac{T}{m}$ under
the optimal and alternative policies.
\begin{enumerate}
\item Under either policy and for each variable, there are four plots in
each figure. The first two plots are two boxplots. The above boxplot
visualizes five summary statistics (min, mean$-$standard deviation,
mean, mean$+$standard deviation, max) while the bottom boxplot uses
a another set of summary statistics (1st quartile$-$1.5IQR,1st quartile,
median, 3rd quartile, 3rd quartile$+$1.5IQR) and the data outside
the range are treated as outliers, where IQR stands for interquartile
range, i.e. the difference between the third and the first quartiles. 
\item The blue dashed horizontal line in each plot indicates the theoretical
values for the limits of $\frac{IT}{n}$, $\frac{D}{n}$ and $\frac{T}{m}$
with $p(\cdot,\cdot)$ and the red solid line in each box indicates
those values calculated with $P_{n}(\cdot,\cdot)$ for each $n$.
We calculate the theoretical values in both ways because for small
$n$, $P_{n}(\cdot,\cdot)$ determined by (\ref{eq:empirical fraction})
has a relatively large rounding error and thus deviates from $p(\cdot,\cdot)$.
Calculating using $P_{n}(\cdot,\cdot)$ instead of $p(\cdot,\cdot)$
can effectively remove the deviations in the inputs to the model.
Moreover, $P_{n}(\cdot,\cdot)$ is different for different $n$ values,
thus the theoretical values of a variable calculated with $P_{n}(\cdot,\cdot)$
are also different for different $n$'s.
\item The black dots in the boxplots indicates the results of 100 runs and
they are jittered by a random amount left and right to avoid overplotting.
From the black dots we can see the distributions of the results. Note
that the black dots in the above and bottom boxplots show the same
results for the same $n$. They look different because they are jittered
by a different random amount.
\item The last two plots in every figure shows the log-log plot of the standard
deviation and IQR of $\frac{IT}{n}$, $\frac{D}{n}$ and $\frac{T}{m}$
against $n$ and a fitted straight line with the slope.
\end{enumerate}
From the simulation results, we make the following conclusions.
\begin{enumerate}
\item From the boxplots of $\frac{IT}{n}$, $\frac{D}{n}$ and $\frac{T}{m}$
under both interventions policies, we observe that the mean or median
converge to the calculated theoretical value with shrinking standard
deviation or IQR. Because the theoretical value is a constant given
the joint probability of degree and initial equity $p(\cdot,\cdot)$,
convergence of mean to the theoretical value with variance converging
to zero is equivalent to convergence in probability, this observation
provides evidence for the convergences in probability of $\frac{IT}{n}$,
$\frac{D}{n}$ and $\frac{T}{m}$ to their theoretical values.
\item Be comparing the blue dashed line and the red solid line we see that
the mean or median is closer to the red solid line, i.e. the theoretical
value calculated with $P_{n}(\cdot,\cdot)$ instead of $p(\cdot,\cdot)$.
This reflects the rounding error caused by (\ref{eq:empirical fraction})
in the inputs into the calculation. By using the more accurate fraction
we observe that the closeness of the mean or median to the theoretical
value does not vary in $n$ although the results of different runs
are more and more concentrated around the mean or median as $n$ grows.
\item The log-log plots of the standard deviation and IQR of each variable
with the fitted straight lines further show that both of them decrease
with power law tails, i.e. in the form of $z=Cx^{-a}$ where $C$
is a constant and $a>0$ is the exponent. The absolute value of the
slope of the straight line serves as the exponent. It is interesting
to observe that the exponents for the standard deviation and IQR are
close to each other. Moreover, the exponents are close to each other
under both intervention policies and for different variables. This
implies that the dispersions of all variables converge to zero at
roughly the same rate under both policies.
\end{enumerate}

\subsection{Summary}

To summarize the simulation part, we can make the following conclusions.
\begin{enumerate}
\item The convergences of $\frac{IT}{n}$, $\frac{D}{n}$ and $\frac{T}{m}$
to their theoretical values are supported by the simulation results.
It is worth noting that the closeness of the mean or median to the
theoretical value does not vary for different $n$ after the rounding
error in the initial fractions are removed, but the dispersion of
the variable shrinks as $n$ grows.
\item The dispersion of each variable decreases following a power law. The
exponents are close to each other under both intervention policies
and for all variables, indicating a uniform convergence rate for the
dispersions of all the variables under both policies.
\end{enumerate}
\newpage{}

\begin{figure}[H]
\includegraphics[scale=0.75]{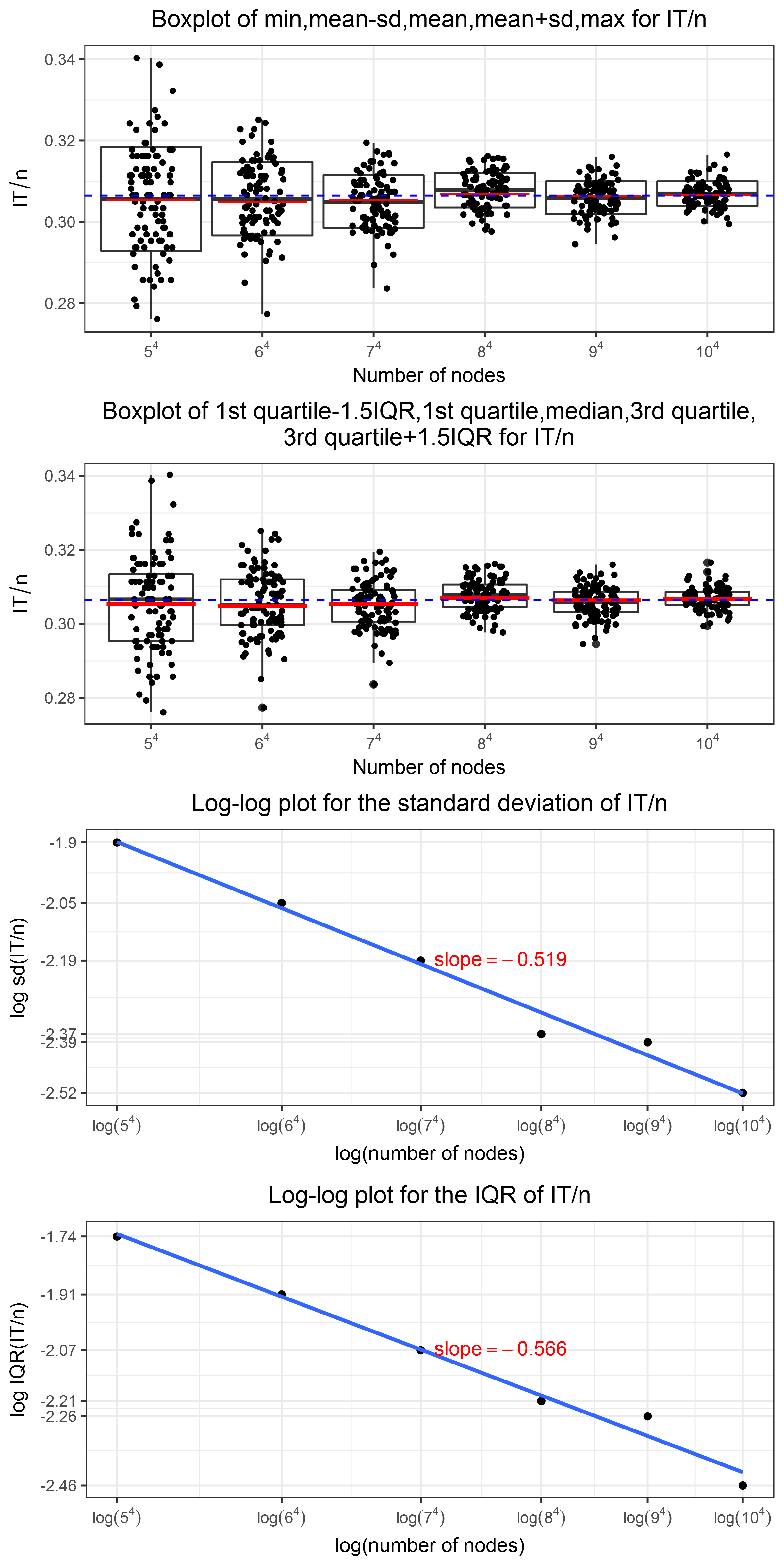}\caption{The boxplot and log-log plot of standard deviation and IQR for $IT/n$
under optimal policy}
\end{figure}

\begin{figure}[H]
\includegraphics[scale=0.75]{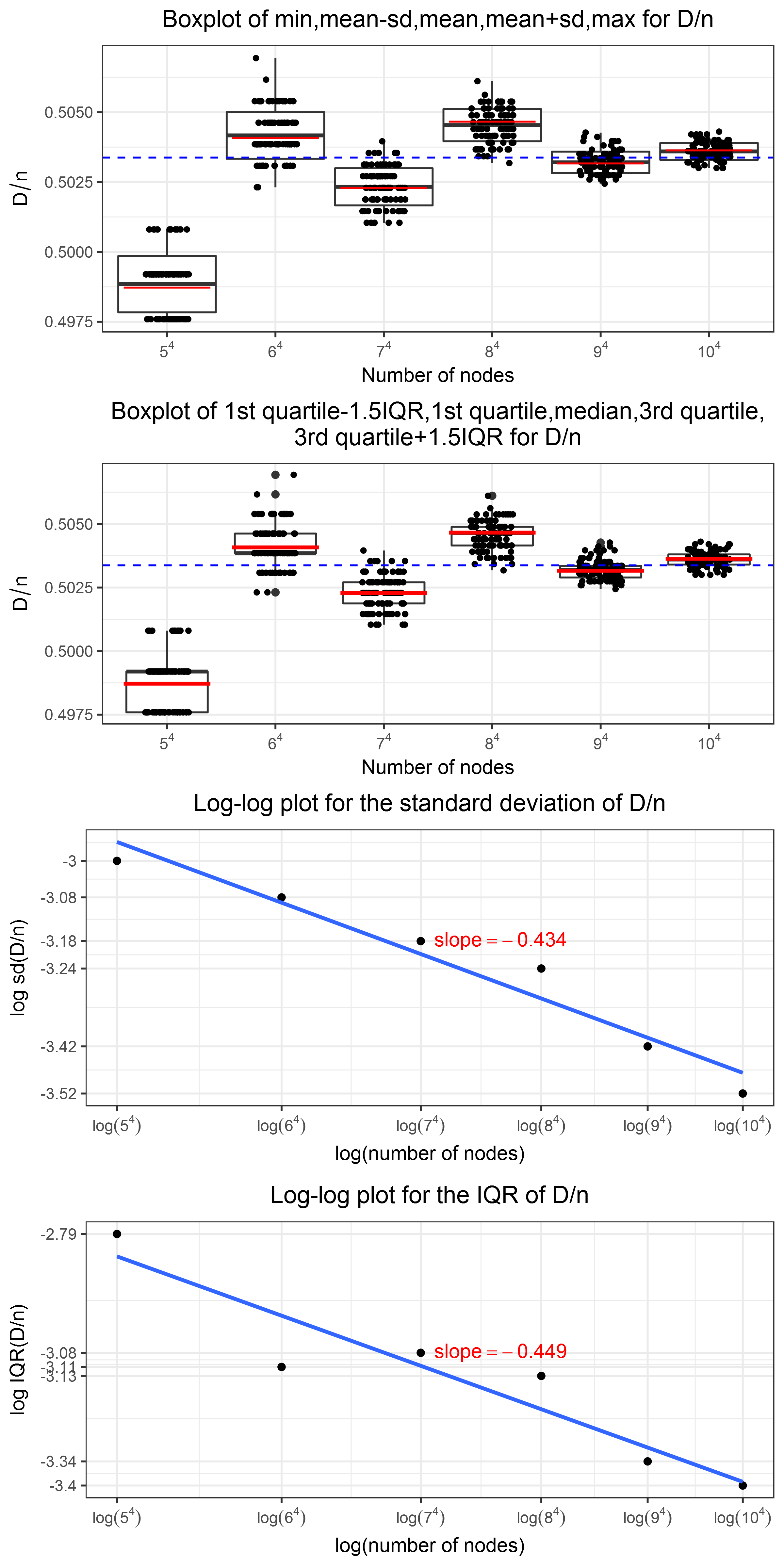}\caption{The boxplot and log-log plot of standard deviation and IQR for $D/n$
under optimal policy}
\end{figure}

\begin{figure}[H]
\includegraphics[scale=0.75]{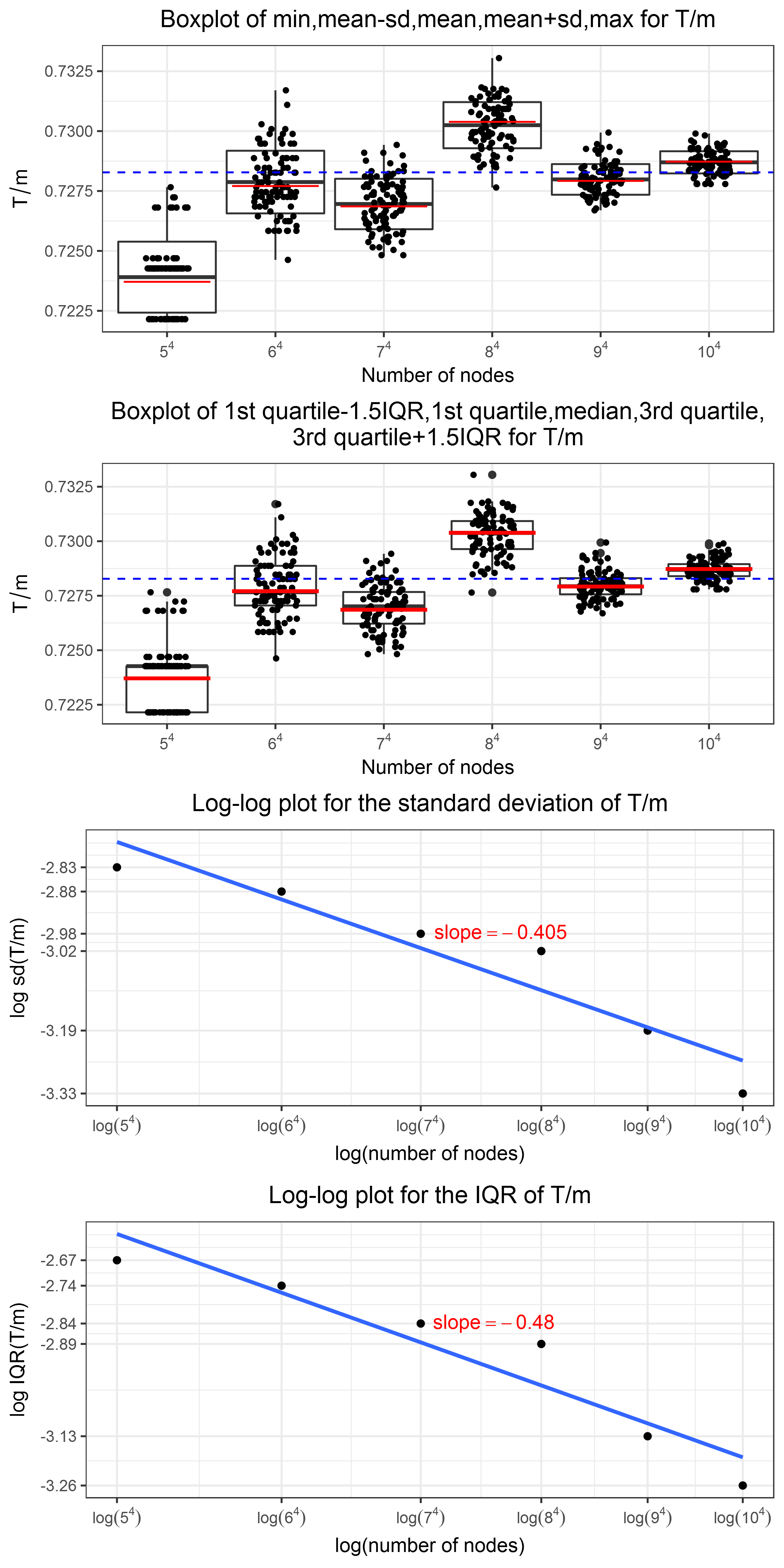}\caption{The boxplot and log-log plot of standard deviation and IQR for $T/m$
under optimal policy}
\end{figure}

\begin{figure}[H]
\includegraphics[scale=0.75]{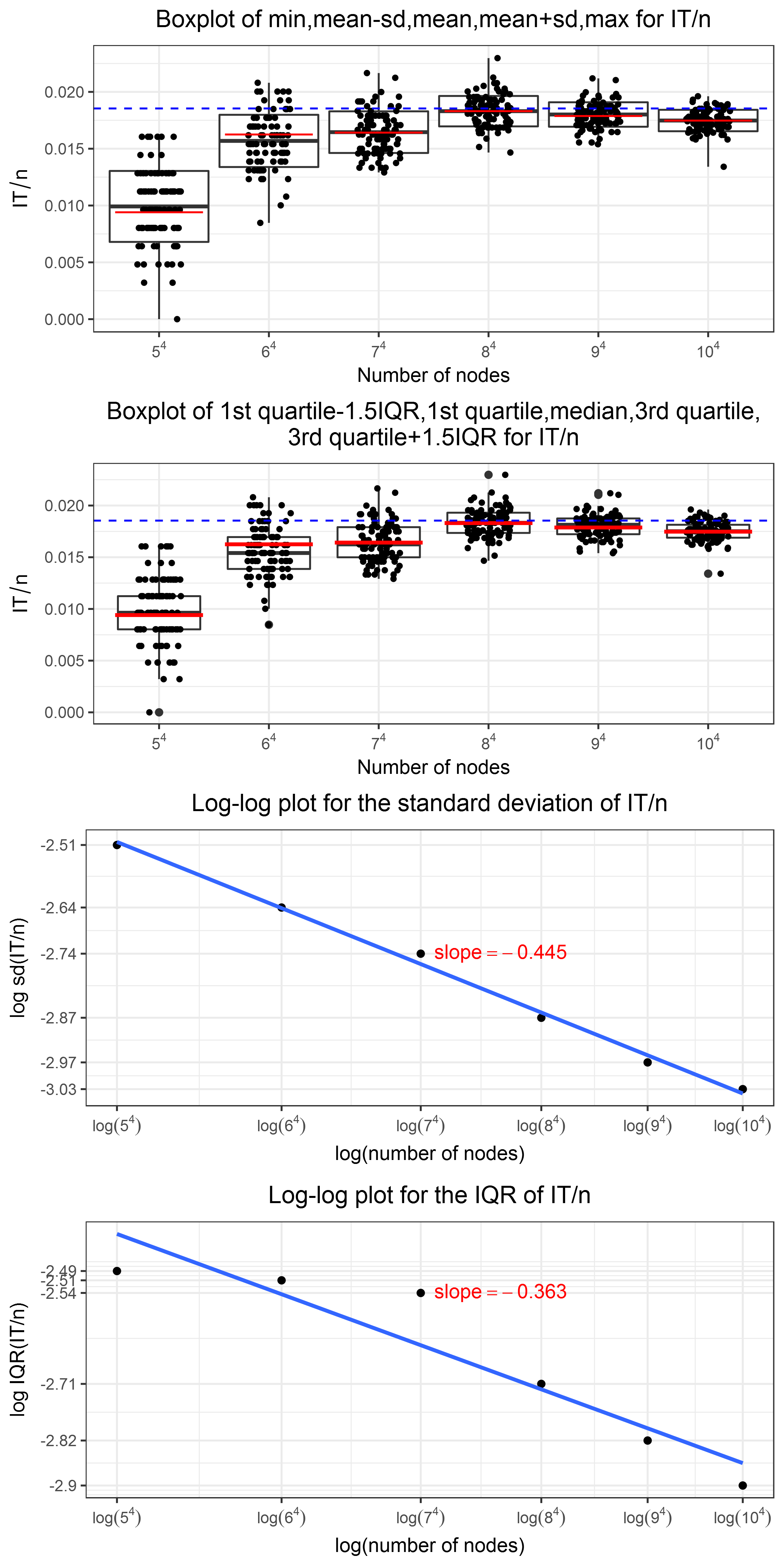}\caption{The boxplot and log-log plot of standard deviation and IQR for $IT/n$
under alternative policy}
\end{figure}

\begin{figure}[H]
\includegraphics[scale=0.75]{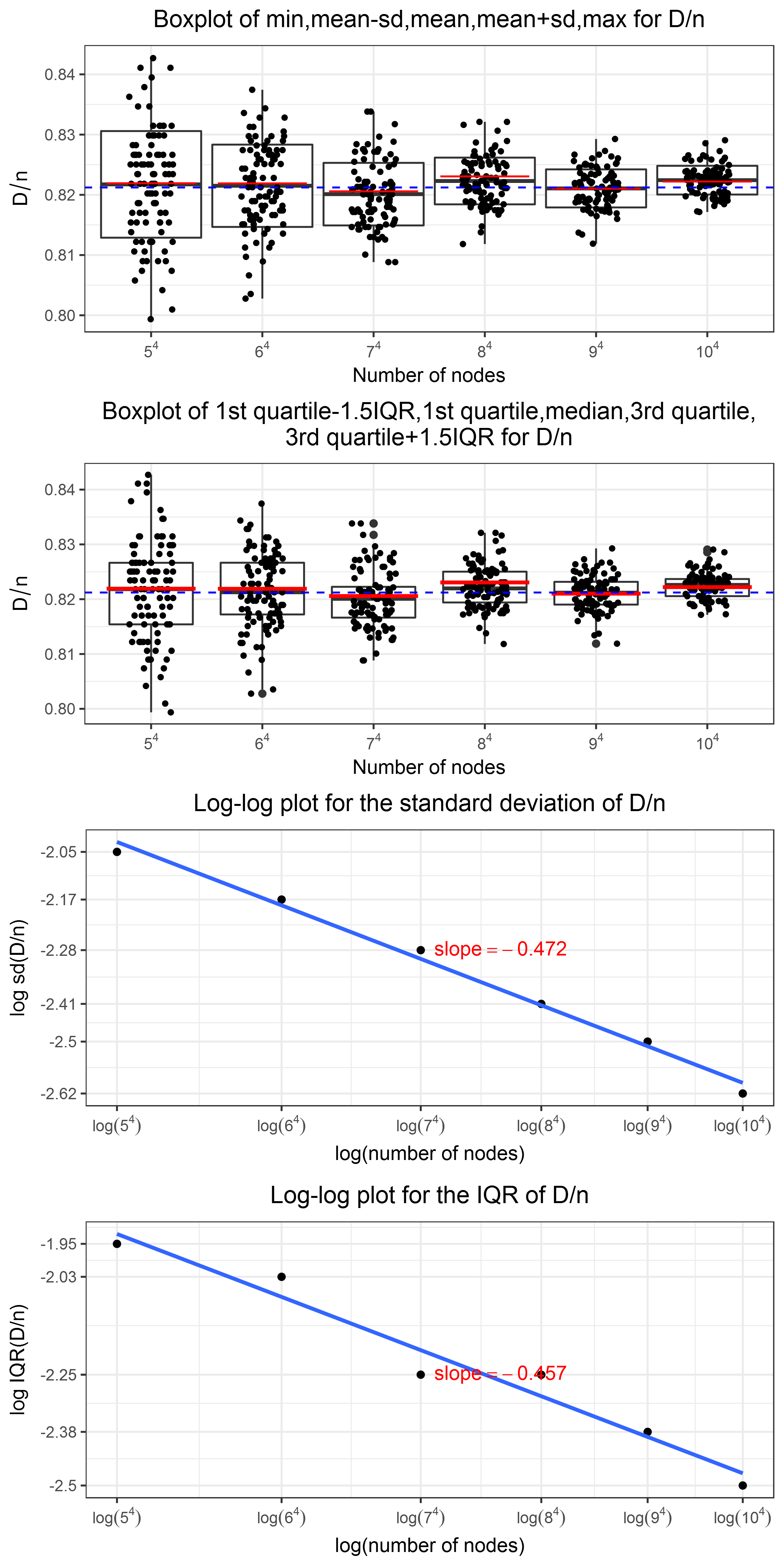}\caption{The boxplot and log-log plot of standard deviation and IQR for $D/n$
under alternative policy}
\end{figure}

\begin{figure}[H]
\includegraphics[scale=0.75]{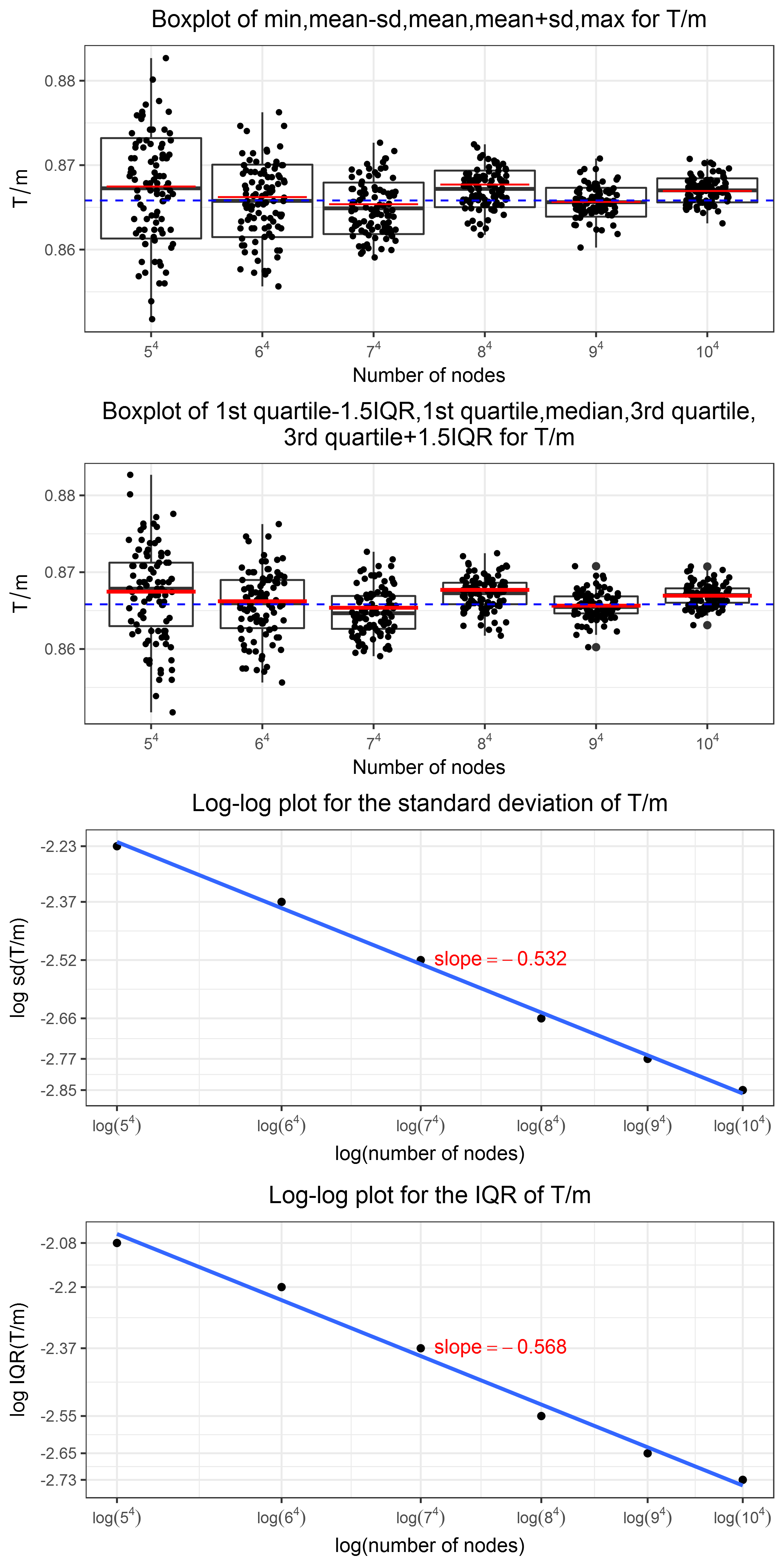}\caption{The boxplot and log-log plot of standard deviation and IQR for $T/m$
under alternative policy}
\end{figure}

\part{Summary}

We model the default contagion process in a large heterogeneous financial
network viewed by a regulator outside the network whose goal is to
minimize the number of final defaulted nodes with the minimum amount
of interventions. The regulator has only partial information in that
the connections of the nodes are unknown in the beginning but revealed
as the contagion process involves. The partial information setting
aligns with the reality better than most of the existing literature. 

Our work extend the previous literature, in particular \citet{Amini2013,Amini2015,Amini2017}
in that we provide analytical asymptotic results of the optimal intervention
policy for the regulator and the fraction of final defaulted nodes
for a heterogeneous network with a given arbitrary degree sequence
and arbitrary initial equity levels. Our results of the optimal intervention
policies generates insights in the perspectives of both random network
processes and regulations. The optimal intervention policy first depends
on the intervention cost: the lower the cost is, the more interventions
we implement. We only need to consider intervening on a bank if it
has distance to default of one when affected, i.e. the bank is very
close to default. Moreover, we observe that the optimal intervention
policy is monotonic with respect to the characteristics of the network:
the larger the out degree is, the smaller the in degree is, and the
higher the sum of initial equity levels and number of interventions
received, the earlier we should begin to intervene on the node. Moreover,
we should keep intervening on a node once we have intervened on it.
In other words, we do not allow a node that has received interventions
to default. We also quantify the improvements made by the optimal
intervention policy in terms of the features of the network. Our simulation
results show a good alignment with the theoretical calculations. The
numerical studies also provide evidence that although the optimal
intervention policy and the fraction of final default nodes are in
asymptotic sense, they provide good approximations for large but realistic
size of networks.

\appendix
\bibliographystyle{unsrtnat}
\bibliography{SysRisk}

\part{\label{part:Appendix}Appendix}

\section{\label{sec:Wormald's-theorem}Wormald's theorem}
\begin{lem}
\label{lem:(Wormald-1999)}(\citet{Wormald1999})Let $a\geq2$ be
a fixed integer and $\left(\left(Y_{t}^{l}\right)_{1\leq l\leq a}\right)_{t\geq0}$
denote a sequence of real valued random variables indexed by $n$
with its natural filtration $(\mathcal{F}_{t})_{t\geq0}$. Assume
that there is a constant $C_{0}>0$ such that $\left|Y_{t}^{l}\right|\leq C_{0}n$
for $\forall\,n,\,t\geq0$ and $1\leq l\leq a$. Let $f_{l}:\,\mathbb{R}^{a+1}\rightarrow\mathbb{R}$
be functions and $U\subseteq\mathbb{R}^{a+1}$ be some bounded connected
open set containing the closure of 
\[
\left\{ (0,z^{1},\ldots,z^{a}):\,\mathbb{P}(Y_{0}^{l}=z^{l}n,\,1\leq l\leq a)\neq0\text{ for some }n\right\} .
\]
Define the stopping time $T_{U}=\inf\left\{ t\geq1:\:(\frac{t}{n},\frac{Y_{t}^{1}}{n},\ldots,\frac{Y_{t}^{a}}{n})\notin U\right\} $.
Assume the following three conditions are satisfied: 
\begin{enumerate}
\item (Boundedness) For some function $\rho_{1}=\rho_{1}(n)\geq1$ and $\forall t<T_{U}$
and $1\leq l\leq a$, 
\[
|Y_{t+1}^{l}-Y_{t}^{l}|\leq\rho_{1}.
\]
\item (Trend) For some function $\rho_{2}=\rho_{2}(n)=o(1)$ and $\forall t<T_{U}$
and $1\leq l\leq a$, 
\[
\left|\mathbb{E}\left(Y_{t+1}^{l}-Y_{t}^{l}\mid\mathcal{F}_{t}\right)-f_{l}(\frac{t}{n},\frac{Y_{t}^{1}}{n},\ldots,\frac{Y_{t}^{a}}{n})\right|\leq\rho_{2}.
\]
\item (Lipschitz continuity) The functions $(f_{l})_{1\leq l\leq a}$ are
continuous and satisfies a Lipschitz condition on 
\[
U\cap\{(t,z^{1},\ldots,z^{a}):t\geq0\}
\]
with the same Lipschitz constant for each $l$. 
\end{enumerate}
Then the following holds: 
\begin{enumerate}
\item \label{enu:conclusion 1}For $(0,\hat{z}^{1},\ldots,\hat{z}^{a})\in U$
the system of differential equations 
\[
\frac{dz^{l}}{ds}=f_{l}(s,z^{1},\ldots,z^{a}),\qquad1\leq l\leq a
\]
has a unique solution in $U$ for $z^{l}:\mathbb{R}\rightarrow\mathbb{R}$
passing through 
\[
z_{0}^{l}=\hat{z}^{l},\qquad1\leq l\leq a
\]
and which extends to points arbitrarily close to the boundary of $U$. 
\item Let $\rho>\rho_{2}$ and $\rho=o(1)$. For a sufficiently large constant
C, with probability $1-O\left(\frac{\rho_{1}}{\rho}\exp\left(-\frac{n\rho^{3}}{\rho_{1}^{3}}\right)\right)$,
it holds that 
\[
\sup_{0\leq t\leq n\sigma}\left(\frac{Y_{t}^{l}}{n}-z_{\frac{t}{n}}^{l}\right)=O(\rho)
\]
where $z_{s}^{l}$ is the solution in (\ref{enu:conclusion 1}) with
\[
z_{0}^{l}=\frac{Y_{0}^{l}}{n}
\]
and 
\begin{equation}
\sigma=\sigma(n)=\sup\left\{ s\geq0:\,d_{\infty}\left(((z_{s}^{l})_{1\leq l\leq a}),\partial U\right)\geq C\rho\right\} ,\label{eq:suptime}
\end{equation}
 where $d_{\infty}(u,v)=\max_{1\leq i\leq j}|u_{i}-v_{i}|$ for $u=(u_{1},\ldots,u_{j})\in\mathbb{R}^{j}$
and $v=(v_{1},\ldots,v_{j})\in\mathbb{R}^{j}$.
\end{enumerate}
\end{lem}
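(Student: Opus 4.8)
The plan is to follow the classical argument behind Wormald's differential-equation method: since the statement is quoted verbatim from \citet{Wormald1999}, in the paper it is simply invoked, but a self-contained proof would combine a martingale concentration bound with a discretized Gronwall estimate. I would first dispose of conclusion (1): because $(f_l)_{1\le l\le a}$ satisfies a Lipschitz condition on the bounded open set $U$, Picard--Lindel\"of gives a unique local solution through any interior point, and the standard maximal-interval-of-existence argument shows it extends until it leaves every compact subset of $U$, i.e.\ to points arbitrarily close to $\partial U$. This part is routine.

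The substance is conclusion (2). Fix a block length $w=w(n)$ to be chosen later and partition $\{0,1,\dots,\lfloor n\sigma\rfloor\}$ into consecutive blocks of length $w$. Write $D_t^l := Y_t^l/n - z_{t/n}^l$ for the scaled deviation. On a block $[t_0,t_0+w]$ with $t_0<T_U$, decompose $Y_{t_0+w}^l-Y_{t_0}^l = M+A$, where $M$ is the sum of the martingale differences $Y_{t+1}^l-Y_t^l-\mathbb{E}[Y_{t+1}^l-Y_t^l\mid\mathcal{F}_t]$ and $A$ is the sum of the conditional expectations. The boundedness hypothesis gives $|M|/n = O(\rho_1\sqrt w/n)$ outside an event of probability $\exp(-\Omega(w/\rho_1^2))$ by Azuma's inequality; the trend hypothesis replaces each conditional expectation by $f_l(t/n,Y_t/n)/n$ up to error $\rho_2/n$; and comparing $\sum_t f_l(t/n,Y_t/n)$ with the increment $z_{(t_0+w)/n}^l-z_{t_0/n}^l$ of the ODE solution --- using the Lipschitz bound on $f_l$ together with $\|Y_t/n-Y_{t_0}/n\|=O(w\rho_1/n)$ and $\|z_{t/n}-z_{t_0/n}\|=O(w/n)$ --- yields $D_{t_0+w}^l$ in terms of $D_{t_0}^l$ plus a block error of size $O(\rho_1\sqrt w/n) + O(w^2/n^2) + O(w\rho_2/n)$ plus a Lipschitz term proportional to $\tfrac1n\sum_t\|D_t\|$.

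Iterating this recursion over the $\lfloor n\sigma/w\rfloor$ blocks is a discrete Gronwall inequality: it bounds $\sup_{0\le t\le n\sigma}\|D_t\|$ by $e^{L\sigma}$ times the accumulated block errors, namely $O(\sigma\rho_1/\sqrt w) + O(\sigma w/n) + O(\sigma\rho_2)$. Choosing $w\asymp \rho_1^2/\rho^2$ makes the first term $O(\rho)$; since $\rho>\rho_2$ and $\rho=o(1)$ with $n$ large, the remaining terms are $o(\rho)$, and the union bound over $O(n/w)$ blocks and $a$ coordinates contributes failure probability $O\big(\tfrac{\rho_1}{\rho}\exp(-n\rho^3/\rho_1^3)\big)$, matching the claim. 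Finally, the definition of $\sigma$ keeps the ODE solution at $\ell^\infty$-distance $\ge C\rho$ from $\partial U$ on $[0,\sigma]$, so once $\|D_t\|=O(\rho)$ has been established the random trajectory also stays in $U$, whence $T_U>n\sigma$ and every estimate above is in force up to time $n\sigma$.

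The main obstacle is the circular-looking stopping-time argument intertwined with the error bookkeeping: the Lipschitz and trend hypotheses are only available while $t<T_U$, yet the conclusion one needs in order to show non-exit ($T_U>n\sigma$) is itself the deviation bound. This is resolved by running the whole estimate up to $\min(T_U,n\sigma)$ and exploiting the $C\rho$ margin built into $\sigma$ to conclude that the bound cannot first be violated by leaving $U$. The other delicate point is the simultaneous balancing of the Azuma deviation, the Euler-type discretization error and the union-bound loss through the single free parameter $w$ --- but that trade-off is exactly what the stated probability $1-O(\tfrac{\rho_1}{\rho}\exp(-n\rho^3/\rho_1^3))$ records.
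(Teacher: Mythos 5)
The paper itself gives no proof of this lemma -- it is imported verbatim from \citet{Wormald1999} -- so you are reconstructing the original argument rather than paralleling anything in the text. Your architecture is the standard and correct one: Picard--Lindel\"of plus maximal extension for conclusion (1); for conclusion (2), a block decomposition into martingale and drift parts, Azuma--Hoeffding on each block, a Lipschitz/Euler comparison with the ODE, a discrete Gronwall iteration, and the $C\rho$ margin in the definition of $\sigma$ to break the circularity with the stopping time $T_U$. That is exactly how Wormald proves it.

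However, the quantitative bookkeeping in the one step you yourself flag as delicate is wrong as stated, and it would not deliver the claimed probability. First, ``$|M|/n=O(\rho_1\sqrt{w}/n)$ outside an event of probability $\exp(-\Omega(w/\rho_1^2))$'' is not what Azuma gives: with increments bounded by $\rho_1$ over $w$ steps, a threshold $s\,\rho_1\sqrt{w}$ fails with probability $\exp(-s^2/2)$, so the threshold $\rho_1\sqrt{w}$ only gives constant failure probability, while failure probability $\exp(-\Omega(w/\rho_1^2))$ forces a threshold of order $w$. Second, and consequently, the choice $w\asymp\rho_1^2/\rho^2$ cannot simultaneously give total deviation $O(\rho)$ and the stated bound: to keep the accumulated martingale error $O(\rho)$ you must allot each block an unscaled deviation of order $\rho w/\sigma$, and with $w\asymp\rho_1^2/\rho^2$ Azuma then yields per-block failure probability $\exp(-\Theta(1))$, i.e.\ no concentration at all; also $O(n/w)=O(n\rho^2/\rho_1^2)$ blocks is not the prefactor $\rho_1/\rho$ you claim. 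The correct balance (Wormald's) is $w\asymp n\rho/\rho_1$: the per-block allowance $\sim n\rho^2/\rho_1$ gives per-block failure $\exp(-\Omega(n\rho^3/\rho_1^3))$, the number of blocks $\sim\rho_1/\rho$ is exactly the prefactor in $1-O\bigl((\rho_1/\rho)\exp(-n\rho^3/\rho_1^3)\bigr)$, and the Euler and trend terms remain $O(\rho)$ (one also implicitly needs $n\rho^3/\rho_1^3\to\infty$ for the statement to have content). With that repair your sketch matches the original proof; as written, the central estimate fails.
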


\section{\label{sec:Extended-PMP}Extended Pontryagin maximum principle}
\begin{lem}
\label{lem:PontryaginMPM}(Extended Pontryagin maximum principle,
\citet{Chachuat2009}) Consider the optimal control problem to minimize
the cost functional including a terminal term
\[
\mathcal{J}(u,t_{f})\coloneqq\int_{t_{0}}^{t_{f}}\ell(t,x_{t},u_{t})dt+\phi(t_{f},x_{t_{f}}),
\]
with fixed initial time $t_{0}$ and free terminal time $t_{f}$,
subject to the dynamical system 
\[
\dot{x}_{t}=f(t,x_{t},u_{t});\quad x_{t_{0}}=x_{0},
\]
 where the vector function $x\in\hat{C^{1}}[t_{0},T]^{n_{x}}$ represents
the state variables characterizing the behavior of the system at any
time instant $t$, and some general terminal constraints
\begin{equation}
\psi_{k}(t_{f},x_{t_{f}})=0,\quad k=1,\ldots,n_{\psi}.\label{eq:Terminal}
\end{equation}
The admissible controls shall be taken in the class of piecewise continuous
functions
\[
u\in\mathcal{U}[t_{0},T]\coloneqq\{u\in\hat{C}[t_{0},T]^{n_{u}}:u_{t}\in U\text{ for }t_{0}\leq t\leq t_{f}\},
\]
 with $t_{f}\in[t_{0},T]$, where $T>t_{0}$ and the nonempty, possibly
closed and nonconvex set $U$ denotes the control region. 

Suppose $\ell$ and $f$ are continuous and have continuous first
partial derivatives with respect to $(t,x,u)$ on $[t_{0},T]\times\mathbb{R}^{n_{x}}\times\mathbb{R}^{n_{u}}$,
and also $\phi$ and $\psi\coloneqq(\psi_{k})_{k=1,\ldots,n_{\psi}}$
are continuous and have continuous first partial derivatives with
respect to $(t,x)$ on $[t_{0},T]\times\mathbb{R}^{n_{x}}$. Suppose
that the terminal constraints (\ref{eq:Terminal}) satisfy the constraint
qualification 
\[
\text{rank}\left(\frac{\partial\psi}{\partial x}(t_{f}^{*},x_{t_{f}^{*}}^{*})\right)=n_{\psi}
\]
where $\frac{\partial\psi}{\partial x}(t_{f}^{*},x_{t_{f}^{*}}^{*})$
denotes the Jacobian matrix of the partial derivatives of components
of $\psi$ with respect to $x$ evaluated at $(t_{f}^{*},x_{t_{f}^{*}}^{*})$.
Define the Hamiltonian function 
\begin{equation}
\mathcal{H}(t,x,u,\mathring{\lambda},\lambda)=\mathring{\lambda}\ell(t,x,u)+\lambda^{T}f(t,x,u).\label{eq:Ham_MPM}
\end{equation}

Let $(u^{*},t_{f}^{*})\in\hat{C}[t_{0},T]^{n_{u}}\times[t_{0},T)$
denote a minimizer for the problem, and $x^{*}\in\hat{C^{1}}[t_{0},T]$
the optimal state, then there exists a $n_{x}$ dimensional piecewise
continuously differentiable vector function $\lambda_{t}^{*}$ and
$\mathring{\lambda}^{*}\in\{0,1\}$ ($(\mathring{\lambda}^{*},\lambda_{t}^{*})$
are called adjoint variables) and a Lagrange multiplier vector $v^{*}\in\mathbb{R}^{n_{\psi}}$
such that $(\mathring{\lambda}^{*},\lambda_{t}^{*})\neq0$ for every
$t\in[t_{0},t_{f}^{*}]$ and the following conditions hold:
\begin{enumerate}
\item \label{enu:Cond1_MPM}The function $\mathcal{H}(t,x_{t}^{*},w,\mathring{\lambda}^{*},\lambda_{t}^{*})$
attains its minimum on $U$ at $w=u_{t}^{*}$ for every $t\in[t_{0},t_{f}^{*}]$,
i.e. 
\[
\mathcal{H}(t,x_{t}^{*},w,\mathring{\lambda}^{*},\lambda_{t}^{*})\geq\mathcal{H}(t,x_{t}^{*},u_{t}^{*},\mathring{\lambda}^{*},\lambda_{t}^{*}),\quad\forall w\in U.
\]
\item \label{enu:Cond2_MPM}$(x_{t}^{*},u_{t}^{*},\mathring{\lambda}^{*},\lambda_{t}^{*})$
verifies the equations
\begin{alignat}{1}
\frac{d}{dt}x_{t}^{*} & =f(t,x_{t}^{*},u_{t}^{*})\nonumber \\
\frac{d}{dt}\lambda_{t}^{*} & =-\frac{\partial}{\partial x}\mathcal{H}(t,x_{t}^{*},u_{t}^{*},\mathring{\lambda}^{*},\lambda_{t}^{*})\label{eq:Lambda_MPM}
\end{alignat}
at each instant $t$ of continuity of $u^{*}$ and $\mathring{\lambda}^{*}\in\{0,1\}$.
\item \label{enu:Cond3_MPM}$\mathcal{H}(t,x_{t}^{*},u_{t}^{*},\mathring{\lambda}^{*},\lambda_{t}^{*})=\mathcal{H}(t_{f}^{*},x_{t_{f}^{*}}^{*},u_{t_{f}^{*}}^{*},\mathring{\lambda}^{*},\lambda_{t_{f}^{*}}^{*})-\int_{t}^{t_{f}^{*}}\frac{\partial}{\partial t}\mathcal{H}(\tau,x_{\tau}^{*},u_{\tau}^{*},\mathring{\lambda},\lambda_{\tau}^{*})d\tau$.
Therefore, if $\frac{\partial}{\partial t}\mathcal{H}=0$, i.e. $\mathcal{H}$
is autonomous, then $\mathcal{H}$ is a constant over time.
\item \label{enu:Cond4_MPM}(Transversal condition) Define $\Psi(t,x)\coloneqq\mathring{\lambda}^{*}\phi(t,x)+v^{*T}\psi(t,x)$,
then
\begin{eqnarray}
\lambda_{t_{f}^{*}}^{*} & = & \frac{\partial}{\partial x}\Psi(t_{f}^{*},x_{t_{f}^{*}}^{*})\nonumber \\
\mathcal{H}(t_{f}^{*},x_{t_{f}^{*}}^{*},u_{t_{f}^{*}}^{*},\mathring{\lambda}_{t_{f}^{*}}^{*},\lambda_{t_{f}^{*}}^{*}) & = & -\frac{\partial}{\partial t}\Psi(t_{f}^{*},x_{t_{f}^{*}}^{*})
\end{eqnarray}
together with the terminal condition (\ref{eq:Terminal}) at $t=t_{f}^{*}$,
i.e. $\psi_{k}(t_{f}^{*},x_{t_{f}^{*}}^{*})=0$ for $k=1,\ldots,n_{\psi}$.
\item \label{enu:Cond5_MPM}The optimal control $u^{*}$ may or may not
be continuous; in the latter case we have a corner point. In particular,
the conditions that must hold at any corner point $\theta\in[t_{0},t_{f}^{*}]$
are
\begin{eqnarray}
x_{\theta^{-}}^{*} & = & x_{\theta^{+}}^{*},\nonumber \\
\lambda_{\theta^{-}}^{*} & = & \lambda_{\theta^{+}}^{*},\nonumber \\
\mathcal{H}(\theta^{-},x_{\theta}^{*},u_{\theta^{-}}^{*},\mathring{\lambda}^{*},\lambda_{\theta}^{*}) & = & \mathcal{H}(\theta^{+},x_{\theta}^{*},u_{\theta^{+}}^{*},\mathring{\lambda}^{*},\lambda_{\theta}^{*}).
\end{eqnarray}
\end{enumerate}
\end{lem}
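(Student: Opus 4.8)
The statement is the classical Pontryagin maximum principle in the Hestenes--Chachuat form --- free terminal time, terminal equality constraints $\psi=0$, and a cost multiplier $\mathring{\lambda}\in\{0,1\}$ accommodating abnormal extremals --- so the plan is to run the standard needle-variation proof and point out only the modifications caused by the free endpoint time and the terminal manifold. First I would put the problem in Mayer form: adjoin a coordinate $x^{0}$ with $\dot x^{0}_{t}=\ell(t,x_{t},u_{t})$, $x^{0}_{t_{0}}=0$, so that $\mathcal{J}(u,t_{f})=x^{0}_{t_{f}}+\phi(t_{f},x_{t_{f}})$, and adjoin a coordinate $x^{\sharp}_{t}=t$ with $\dot x^{\sharp}_{t}=1$ so the augmented dynamics are autonomous; this lets me treat the transversality in $t_{f}$ on the same footing as the transversality in the state, at the cost of two extra coordinates.

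Next I would construct the cone of first-order endpoint variations. At a Lebesgue point $\tau\in(t_{0},t_{f}^{*})$ of $u^{*}$ and any $v\in U$, replace $u^{*}$ by the constant $v$ on $[\tau-\varepsilon,\tau]$; standard estimates give, for the augmented state, $x_{t_{f}^{*}}=x^{*}_{t_{f}^{*}}+\varepsilon\,\Xi(t_{f}^{*},\tau)\bigl(\hat f(\tau,x^{*}_{\tau},v)-\hat f(\tau,x^{*}_{\tau},u^{*}_{\tau})\bigr)+o(\varepsilon)$, where $\hat f=(\ell,f,1)$ and $\Xi(\cdot,\cdot)$ is the fundamental matrix of the variational equation $\dot\xi_{t}=\partial_{x}\hat f(t,x^{*}_{t},u^{*}_{t})\xi_{t}$. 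Taking finitely many disjoint needles and convex combinations, and adding the one-parameter family from perturbing $t_{f}^{*}$ (direction $\hat f(t_{f}^{*},x^{*}_{t_{f}^{*}},u^{*}_{t_{f}^{*}})$), yields a convex cone $\mathcal{K}$ of attainable directions at the augmented endpoint.

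The separation step is, as usual, the technical core, and I expect it to be the main obstacle. Optimality of $(u^{*},t_{f}^{*})$ means $\mathcal{K}$ cannot contain a direction that both strictly decreases the $x^{0}$-component and keeps the $x$-components tangent to $\{\psi=0\}$; if it did, an implicit-function / fixed-point argument --- using the rank hypothesis on $\partial_{x}\psi$ to solve the manifold constraint, together with continuity of $\hat f$ --- would produce a genuinely feasible admissible control beating $u^{*}$. Hence a nonzero covector $(\mathring{\lambda},\lambda_{t_{f}^{*}},\text{time part})$ with $\mathring{\lambda}\in\{0,1\}$ separates $\mathcal{K}$ from the improvement half-space, and the rank condition is exactly what lets the manifold-tangency multiplier be represented by an honest vector $v^{*}\in\mathbb{R}^{n_{\psi}}$ and forces $(\mathring{\lambda},\lambda_{t}^{*})\neq0$ throughout. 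Defining $\lambda_{t}^{*}:=\Xi(t_{f}^{*},t)^{T}\lambda_{t_{f}^{*}}$ gives a piecewise-$C^{1}$ adjoint satisfying $\dot\lambda_{t}^{*}=-\partial_{x}\mathcal{H}$; testing the separating inequality against each needle direction gives the minimum condition $\mathcal{H}(t,x^{*}_{t},v,\mathring{\lambda},\lambda^{*}_{t})\ge\mathcal{H}(t,x^{*}_{t},u^{*}_{t},\mathring{\lambda},\lambda^{*}_{t})$ for all $v\in U$ at every Lebesgue point, hence a.e. and then, since $u^{*}$ is piecewise continuous, at all points of continuity.

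The remaining items are bookkeeping on top of this. Reading the separating covector in the two extra coordinates $x^{0}$ and $x^{\sharp}$ yields the transversality conditions $\lambda_{t_{f}^{*}}^{*}=\partial_{x}\Psi$ and $\mathcal{H}(t_{f}^{*},x^{*}_{t_{f}^{*}},u^{*}_{t_{f}^{*}},\mathring{\lambda},\lambda^{*}_{t_{f}^{*}})=-\partial_{t}\Psi$ with $\Psi=\mathring{\lambda}\phi+v^{*T}\psi$, together with $\psi(t_{f}^{*},x^{*}_{t_{f}^{*}})=0$. Differentiating $t\mapsto\mathcal{H}(t,x^{*}_{t},u^{*}_{t},\mathring{\lambda},\lambda^{*}_{t})$ wherever $u^{*}$ is continuous, the canonical equations cancel the $\dot x^{*}$ and $\dot\lambda^{*}$ terms and the minimum condition cancels the variation in $u$, leaving $\tfrac{d}{dt}\mathcal{H}=\partial_{t}\mathcal{H}$; hence if $\mathcal{H}$ is autonomous it is constant, and in general one recovers the integral form of the Hamiltonian-evolution condition. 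Finally, continuity of $x^{*}$ and of $\lambda^{*}$ (integrals of piecewise-continuous integrands) and of $\mathcal{H}$ (the minimum inequality holds from both sides) gives the Weierstrass--Erdmann corner conditions. The only genuinely delicate point beyond routine estimates is the separation / non-degeneracy argument under the bare rank constraint qualification; since this is a textbook result, within the paper it is enough to cite \citet{Chachuat2009}, and the foregoing is the argument being invoked.
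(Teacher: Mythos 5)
Your conclusion matches the paper exactly: the paper's entire proof of this lemma is the citation ``See theorem 3.33 and theorem 3.34 in \citet{Chachuat2009}'', and you likewise end by invoking that reference, with your needle-variation/separation sketch being a correct outline of the standard argument underlying it. So the proposal is correct and takes essentially the same approach as the paper.
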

\begin{proof}
See theorem 3.33 and theorem 3.34 in \citet{Chachuat2009}.
\end{proof}

\end{document}